\numberwithin{equation}{section}
\newtheorem{theorem}{Theorem}
\newtheorem{definition}[theorem]{Definition}
\newtheorem{lemma}[theorem]{Lemma}
\newtheorem{corollary}[theorem]{Corollary}
\newtheorem{proposition}[theorem]{Proposition}
\newtheorem{conjecture}[theorem]{Conjecture}
\newcommand{\argmin}{\mathop{\rm arg~min}\limits}
\title{Generative model for learning quantum ensemble \\ via optimal transport loss}
\author[,1,2,3]{Hiroyuki~Tezuka\footnote{These authors equally contributed to this work.}}
\newcommand\CoAuthorMark{\footnotemark[\arabic{footnote}]} %
\author[,2,4]{Shumpei~Uno\protect\CoAuthorMark}
\author[,2,5]{Naoki~Yamamoto\thanks{
		e-mail address: \texttt{yamamoto@appi.keio.ac.jp}
	}}
\affil[1]{Sony Group Corporation, 1-7-1 Konan, Minato-ku, Tokyo, 108-0075, Japan}
\affil[2]{Quantum Computing Center, Keio University, Hiyoshi 3-14-1, Kohoku-ku, Yokohama 223-8522, Japan}
\affil[3]{Graduate School of Science and Technology, Keio University, 3-14-1 Hiyoshi, Kohoku-ku, Yokohama, Kanagawa, 223- 8522, Japan}
\affil[4]{Mizuho Research \& Technologies, Ltd., 2-3 Kanda-Nishikicho, Chiyoda-ku, Tokyo, 101-8443, Japan}
\affil[5]{Department of Applied Physics and Physico-Informatics, Keio University, Hiyoshi 3-14-1, Kohoku-ku, Yokohama 223-8522, Japan}
\date{}
\begin{document}

\maketitle
\begin{abstract}
Generative modeling is an unsupervised machine learning framework, that exhibits 
strong performance in various machine learning tasks. 
Recently we find several quantum version of generative model, some of which 
are even proven to have quantum advantage. 
However, those methods are not directly applicable to construct a generative 
model for learning a set of quantum states, i.e., ensemble. 
In this paper, we propose a quantum generative model that can learn quantum 
ensemble, in an unsupervised machine learning framework. 
The key idea is to introduce a new loss function calculated based on optimal 
transport loss, which have been widely used in classical machine learning 
due to its several good properties; e.g., no need to ensure the common support 
of two ensembles. 
We then give in-depth analysis on this measure, such as the scaling property 
of the approximation error. 
We also demonstrate the generative modeling with the application to quantum 
anomaly detection problem, that cannot be handled via existing methods. 
The proposed model paves the way for a wide application such as the health 
check of quantum devices and efficient initialization of quantum computation. 
\end{abstract}

\section{Introduction}
\label{sec:introduction}

In the recent great progress of quantum algorithms for both noisy near-term and 
future fault-tolerant quantum devices, particularly the quantum machine learning 
(QML) attracts huge attention. 
QML is largely categorised into two regimes in view of the type of data, which 
can be roughly called classical data and quantum data. 
The former has a conventional meaning used in the classical case; for the 
supervised learning scenario, e.g., a quantum system is trained to give 
a prediction for a given classical data such as an image. 
As for the latter, on the other hand, the task is to predict some property 
for a given quantum state drawn from a set of states, e.g., the phase 
of a many-body quantum state, again in the supervised learning scenario. 
Thanks to the obvious difficulty to directly represent a huge quantum state 
classically, some quantum advantage have been proven in QML for quantum 
data \cite{aharonov2022quantum, wu2021provable,huang2021quantum}.

In the above paragraph we used the supervised learning setting to explain the 
difference of classical and quantum data. 
But the success of unsupervised learning in classical machine learning, 
particularly the generative modeling, is of course notable; actually a variety of 
algorithms have demonstrated strong performance in several applications, 
such as image generation~\cite{bao2017cvae, brock1809large, kulkarni2015deep}, 
molecular design~\cite{gomez2018automatic}, and anomaly detection 
\cite{zhou2017anomaly}. 
Hence, it is quite reasonable that several quantum unsupervised learning 
algorithms have been actively developed, such as quantum circuit born machine 
(QCBM) \cite{benedetti2019generative, coyle2020born}, quantum generative 
adversarial network (QGAN) \cite{lloyd2018quantum, dallaire2018quantum}, 
and quantum autoencoder (QAE) \cite{romero2017quantum, wan2017quantum}. 
Also, Ref. \cite{dallaire2018quantum} studied the generative modeling problem for quantum data; 
the task is to construct a model quantum system producing a set of quantum 
states, i.e., {\it quantum ensemble}, that approximates a given quantum ensemble. 
The model quantum system contains latent variables, the change of which 
corresponds to the change of output quantum state of the system. 
In classical case, such generative model governed by latent variables is 
called an implicit model. 
It is known that, to efficiently train an implicit model, we are often 
encouraged to take the policy to minimize a distance between the model 
dataset and training dataset, rather than minimizing e.g., the divergence 
between two probability distributions. 
The {\it optimal transport loss (OTL)}, which typically leads to the Wasserstein 
distance, is suitable for the purpose of measuring the distance of two dataset; 
actually the quantum version of Wasserstein distance was proposed in \cite{zhou2022quantum,de2021quantum} and 
was applied to construct a generative model for quantum ensemble in QGAN 
framework \cite{NEURIPS2019_f35fd567,kiani2022learning}.

Along this line of research, in this paper we also focus on the generative 
modeling problem for quantum ensemble. 
We are motivated from the fact that the above-mentioned existing works employed 
the Wasserstein distance defined for two mixed quantum states corresponding to 
the training and model quantum ensembles, where each mixed state is obtained by 
compressing all element of the quantum ensemble to a single mixed state. 
This is clearly problematic, because this compression process loses a lot of 
information of the ensemble; for instance, single qubit pure states uniformly 
distributed on the equator of the Bloch sphere may be compressed to a maximally 
mixed state, which clearly does not recover the original ensemble. 
Accordingly, it is obvious that learning a single mixed state produced from 
the training ensemble does not lead to a model system that can approximate the 
original training ensemble.

In this paper, hence, we propose a new quantum OTL, which directly measures 
the difference between two quantum ensembles. 
The generative model can then be obtained by minimizing this quantum OTL 
between a training quantum ensemble and the ensemble of pure quantum states 
produced from the model. 
As the generative model, we use a parameterized quantum circuit (PQC) that 
contains tuning parameters and latent variables, which are both served by 
the angles of single-qubit rotation gates. 
A notable feature of the proposed OTL is that this has a form of sum of local 
functions that operates on a few neighboring qubits. 
This condition (i.e., the locality of the cost) is indeed necessary to 
train the model without suffering from the so-called vanishing gradient 
issue \cite{mcclean2018barren}, meaning that the gradient vector with respect 
to the parameters decreases exponentially fast when increasing the number 
of qubits.

Using the proposed quantum OTL, which will be formally defined in Section~\ref{sec:proposed_algorithm}, 
we will show the following result. 
The first result is given in Section~\ref{sec:PerformanceAnalysis}, which 
provides performance analysis of OTL and its gradient from several aspects; 
e.g., scaling properties of OTL as a function of the number of training data 
and the number of measurement. 
We also numerically confirm that the gradient of OTL is certainly free from 
the vanishing gradient issue. 
The second result is provided in Section~\ref{sec:demonstration}, showing 
some example of constructing a generative model for quantum ensemble by 
minimizing the OTL. 
This demonstration includes the application of quantum generative model 
to an anomaly detection problem of quantum data; that is, once a generative 
model is constructed by learning a given quantum ensemble, then it can be 
used to detect an anomaly quantum state by measuring the distance of this 
state to the output ensemble of the model. 
Section~\ref{sec:conclusion} gives a concluding remark, and some supporting 
materials including the proof of theorems are given in Appendix.

\section{Preliminaries}

In this section, we first review the implicit generative model for classical 
machine learning problems in Sec.~\ref{sec:implicitGenerativeModel}. 
Next, Sec.~\ref{sec:OptimalTransportLoss} is devoted to describe the general 
OTL, which can be effectively used as a cost function to train a generative 
model.

\subsection{Implicit Generative Model}
\label{sec:implicitGenerativeModel}

The generative model is used to approximate an unknown probability 
distribution that produces a given training dataset. 
The basic strategy to construct a generative model is as follows; assuming the 
probability distribution $\alpha(\bm{x})$ behind the given training dataset 
$\{\bm{x}_i\}_{i=1}^{M}\in \mathcal{X}^{M}$, where $\mathcal{X}$ denotes 
the space of random variables, we prepare a parameterized probability 
distribution $\beta_{\bm{\theta}}(\bm{x})$ and learn the parameters 
$\bm{\theta}$ that minimize an appropriate loss function defined on the 
training dataset.

In general, generative models are categorized to two types: \textit{prescribed 
models} and \textit{implicit models}. 
The prescribed generative modeling explicitly defines 
$\beta_{\bm{\theta}}(\bm{x})$ with the parameters $\bm{\theta}$. 
Then we can calculate the log-likelihood function of 
$\beta_{\bm{\theta}}(\bm{x}_i)$, and the parameters are determined by the 
maximum likelihood method, which corresponds to minimizing the Kullback-Leibler 
divergence $KL(\alpha \| \beta_{\bm{\theta}})$. 
On the other hand, the implicit generative modeling does not give us an explicit 
form of $\beta_{\bm{\theta}}(\bm{x})$. 
An important feature of the implicit generative model is that it can easily 
describe a probability distribution whose random variables are confined on 
a hidden low-dimensional manifold; also the data-generation process can be 
interpreted as a physical process from a latent variable to the data  
\cite{bottou2018geometrical}. 
Examples of the implicit generative model includes Variational Auto-Encoders 
\cite{kingma2013auto}, Generative Adversarial Networks 
\cite{goodfellow2014generative}, and the flow-based method 
\cite{rezende2015variational}. 
This paper focuses on the implicit generative model.

In the implicit generative model we usually assume that the distribution behind 
the training data resides on a relatively low-dimensional manifold. 
That is, an implicit generative model is expressed as a map of a random latent 
variable $\bm{z}$ onto $\mathcal{X}$; $\bm{z}$ resides in a latent space 
$\mathcal{Z}$ whose dimension $N_z$ is significantly smaller than that of the 
sample space, $N_x$. 
The latent random variable $\bm{z}$ follows a known distribution $\gamma(\bm{z})$ 
such as a uniform distribution or a Gaussian distribution. 
That is, the implicit model distribution is given by 
$\beta_{\bm{\theta}}=G_{\bm{\theta}}{\#}\gamma$, where $\#$ is called the 
push-forward operator \cite{peyre2019computational} which moves the 
distribution $\gamma$ on $\mathcal{Z}$ to a probability distribution 
on $\mathcal{X}$ through the map $G_{\bm{\theta}}$. 
This implicit generative model is trained so that the set of samples generated 
from the model distribution are close to the set of training data, by adjusting 
the parameters $\bm{\theta}$ to minimize some appropriate cost function 
$\mathcal{L}$ as follows: 
\begin{equation}
\label{eq:def of optimal parameter}
	\theta^\star = \argmin_{\bm{\theta}} \mathcal{L}(\hat{\alpha}_{M},\hat{\beta}_{\bm{\theta},{M_g}}). 
\end{equation}
$\hat{\alpha}_{M}(\bm{x})$ and $\hat{\beta}_{\bm{\theta},{M_g}}=G_{\bm{\theta}}\#\hat{\gamma}_{M_g}(\bm{z})$ denote 
empirical distributions defined with the sampled data $\{\bm{x}_i\}_{i=1}^{M}$ 
and $\{\bm{z}_i\}_{i=1}^{M_g}$, which follow the probability distributions 
$\alpha(\bm{x})$ and $\gamma(\bm{z})$, respectively: 
\begin{equation}
\label{eq:empiricalDistribution}
      \hat{\alpha}_{M}(\bm{x}) 
        = \frac{1}{{M}}\sum_{i=1}^{M} \delta(\bm{x}-\bm{x}_i), ~~~ 
      \hat{\gamma}_{M_g}(\bm{z}) 
        = \frac{1}{{M_g}}\sum_{i=1}^{M_g} \delta(\bm{z}-\bm{z}_i).
\end{equation}

\subsection{Optimal Transport Loss}
\label{sec:OptimalTransportLoss}

The OTL is used in various fields such as image analysis, natural language 
processing, and finance 
\cite{ollivier_pajot_villani_2014,santambrogio2015optimal,peyre2019computational,10.1093/imaiai/iaz032}. 
In particular, the OTL is widely used as a loss function in the generative 
modeling, mainly because it can be applicable even when the support of 
probability distributions do not match, and it can naturally incorporate the 
distance in the sample space $\mathcal{X}$ 
\cite{montavon2016wasserstein,bernton2017inference,arjovsky2017wasserstein,tolstikhin2017wasserstein,genevay2017gan,bousquet2017optimal}. 
The OTL is defined as the minimum cost of moving a probability distribution 
$\alpha$ to another distribution $\beta$:

\begin{definition}
[Optimal Transport Loss \cite{kantorovich1942translocation}]
\begin{equation}
		\begin{split}
			& \mathcal{L}_c(\alpha,\beta)  = \min_{\pi} \int c(\bm{x},\bm{y}) d\pi(\bm{x},\bm{y}),               \\
			& \mathrm{subject\ to} \quad  
			\int \pi(\bm{x},\bm{y}) d\bm{x}=\beta(\bm{y}), ~~ 
			\int \pi(\bm{x},\bm{y}) d\bm{y}=\alpha(\bm{x}), ~~ 
			\pi(\bm{x},\bm{y}) \ge 0,
		\end{split}
		\label{eq:optimaltransportloss}
\end{equation}
where $c(\bm{x},\bm{y}) \ge 0$ is a non-negative function on 
$\mathcal{X}\times \mathcal{X}$ that represents the transport cost from 
$\bm{x}$ to $\bm{y}$, and is called the \textit{ground cost}. 
Also, we call the set of couplings $\pi$ that minimizes 
$\mathcal{L}_c(\alpha,\beta)$ as the \textit{optimal transport plan}.
\end{definition}

In general, the OTL does not meet the axioms of metric between probability 
distributions; but it does when the ground cost is represented in terms of 
a metric function as follows:

\begin{definition}[p-Wasserstein distance \cite{villani2009optimal}]
	When the ground cost $c(\bm{x},\bm{y})$ is expressed as $c(\bm{x},\bm{y})=d(\bm{x},\bm{y})^p$ with a metric function 
	$d(\bm{x},\bm{y})$ and a real positive constant $p$, the p-Wasserstein 
	distance is defined as 
	\begin{equation}
		\mathcal{W}_p(\alpha,\beta) = \mathcal{L}_{d^p}(\alpha,\beta)^{1/p}.
	\end{equation}
\end{definition}

The p-Wasserstein distance satisfies the conditions of metric between probability 
distributions. 
That is, for arbitrary probability distributions $\alpha, \beta, \gamma$, the  
p-Wasserstein distance $\mathcal{W}_p$ satisfies 
$\mathcal{W}_p(\alpha,\beta)\ge 0$ 
and $\mathcal{W}_p(\alpha,\beta)=\mathcal{W}_p(\beta,\alpha)$; also it satisfies 
$\mathcal{W}_p(\alpha,\beta)=0\Leftrightarrow\alpha = \beta$ and the triangle 
inequality $\mathcal{W}_p(\alpha,\gamma)\le\mathcal{W}_p(\alpha,\beta)+\mathcal{W}_p(\beta,\gamma)$.

In general it is difficult to directly handle the probability distributions 
$\alpha$ and $\beta_{\bm{\theta}}$ to minimize the OTL 
$\mathcal{L}_c(\alpha,\beta_{\bm{\theta}})$. 
Instead, as mentioned in Eq.~\eqref{eq:def of optimal parameter}, we try to 
minimize the approximation of the OTL via the empirical distributions 
\eqref{eq:empiricalDistribution}:

\begin{definition}
[Empirical estimator for optimal transport loss \cite{villani2009optimal}]
	\begin{equation}
		\begin{split}
			& \mathcal{L}_c\left( \hat{\alpha}_{M},  \hat{\beta}_{\bm{\theta},{M_g}} \right) 
			= \min_{\{\pi_{i,j}\}_{i,j=1}^{{M},{M_g}} } 
			\sum_{i=1}^{{M}} \sum_{j=1}^{{M_g}}
			c(\bm{x}_i,G_{\bm{\theta}}(\bm{z}_j))\pi_{i,j}, \\
			& \mathrm{subject\ to} \quad  
			\sum_{i=1}^{M}\pi_{i,j} = \frac{1}{{M_g}}, ~~ 
			\sum_{j=1}^{M_g}\pi_{i,j} = \frac{1}{{M}}, ~~ 
			\pi_{i,j} \ge 0.
			\label{eq:EmpiricalEstimatorOT}
		\end{split}
	\end{equation}
\end{definition}

The empirical estimator converges as $\mathcal{L}_c( \hat{\alpha}_{M},\hat{\beta}_{\bm{\theta},{M}})\to	\mathcal{L}_c(\alpha,\beta_{\bm{\theta}})$ in the limit ${M=M_g}\to\infty$. 
In general, the speed of this convergence is of the order of $O(M^{-1/N_x})$ 
with $N_x$ the dimension of the sample space $\mathcal{X}$ \cite{dudley1969speed}, 
but the p-Wasserstein distance enjoys the following convergence law 
\cite{weed2019sharp}.

\begin{theorem}[Convergence rate of p-Wasserstein distance]
	\label{th:ConvergenceSpeedWasserstein}
	For the upper Wasserstein dimension $d_p^*(\alpha)$ (which is given in 
	Definition~4 of \cite{weed2019sharp}) of the probability distribution $\alpha$, the following expression holds when $s$ is larger than $d_p^*(\alpha)$:
    \begin{equation}
		\mathbb{E}\left[\mathcal{W}_p(\alpha,\hat{\alpha}_{M})\right] \lesssim O({M}^{-1/s}),
		\label{eq:dimensiondependence_self}
	\end{equation}
where the expectation $\mathbb{E}$ is taken with respect to the samples 
drawn from the empirical distribution $\hat{\alpha}_{M}$.
\end{theorem}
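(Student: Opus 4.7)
The plan is to follow the strategy of Weed and Bach \cite{weed2019sharp}, who establish exactly this kind of dimension-dependent rate by combining a covering argument with a multi-scale decomposition of the support of $\alpha$. First I would recall that the upper Wasserstein dimension $d_p^*(\alpha)$ is defined via $(p,\tau)$-covering numbers: roughly, $d_p^*(\alpha)$ is the infimum of $s$ such that the number of balls of radius $\varepsilon$ needed to cover a high-mass ``bulk'' of $\alpha$ grows at most like $\varepsilon^{-s}$. This definition is tailor-made so that the empirical transport cost is controlled by these covering numbers at the right exponent, rather than by the naive ambient-dimensional rate $M^{-1/N_x}$ of \cite{dudley1969speed}.

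The main step is a two-scale decomposition. For a scale $\varepsilon_M$ to be chosen, cover the effective support of $\alpha$ by $N(\varepsilon_M) \lesssim \varepsilon_M^{-s}$ cells $\{C_k\}$ of diameter at most $\varepsilon_M$. Define the coarsened measure $\alpha_{\varepsilon_M}$ by concentrating mass $\alpha(C_k)$ at a representative point of each $C_k$, and analogously $\hat{\alpha}_{M,\varepsilon_M}$ from $\hat{\alpha}_M$. By the triangle inequality for $\mathcal{W}_p$,
\begin{equation}
\mathcal{W}_p(\alpha,\hat{\alpha}_M) \le \mathcal{W}_p(\alpha,\alpha_{\varepsilon_M}) + \mathcal{W}_p(\alpha_{\varepsilon_M},\hat{\alpha}_{M,\varepsilon_M}) + \mathcal{W}_p(\hat{\alpha}_{M,\varepsilon_M},\hat{\alpha}_M).
\end{equation}
The first and third terms are each bounded by $\varepsilon_M$ since the transport stays inside individual cells. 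For the middle term, which compares two discrete measures on at most $N(\varepsilon_M)$ atoms, I would apply the standard discrete-measure estimate $\mathbb{E}\,\mathcal{W}_p(\alpha_{\varepsilon_M},\hat{\alpha}_{M,\varepsilon_M}) \lesssim \varepsilon_M\,(N(\varepsilon_M)/M)^{1/p}$, obtained from a root-mean-square bound on the multinomial fluctuations of the per-cell counts.

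Balancing $\varepsilon_M$ against $M$ by choosing $\varepsilon_M^{-s}/M \asymp 1$, i.e.\ $\varepsilon_M \asymp M^{-1/s}$, then produces the claimed $O(M^{-1/s})$ rate. The main obstacle is making the covering step sharp enough to substitute the intrinsic $d_p^*(\alpha)$ for the ambient dimension: a single-scale partition only reproduces the coarse Dudley bound, and one must actually use the dyadic refinement of \cite{weed2019sharp}, partitioning cells repeatedly at geometric scales and summing a telescoping series, in order to absorb the tail of $\alpha$ outside the ``bulk'' without spoiling the exponent. Once this multi-scale estimate is in place, the theorem follows by taking expectations and letting $s \downarrow d_p^*(\alpha)$.
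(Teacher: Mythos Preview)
The paper does not contain its own proof of this theorem: it is stated purely as a citation of the result of Weed and Bach \cite{weed2019sharp}, with no argument given in the body or appendices. So there is nothing in the paper to compare your proposal against.

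That said, your sketch is a faithful outline of the Weed--Bach argument itself: the covering/quantization step, the triangle-inequality split into a within-cell transport term of order $\varepsilon_M$ and a discrete-atom fluctuation term, and the balancing $\varepsilon_M \asymp M^{-1/s}$ are exactly the ingredients of their proof, and you correctly flag that the single-scale version is too crude and that the dyadic multi-resolution refinement is what actually delivers the exponent $d_p^*(\alpha)$ rather than the ambient dimension. If anything, your one-line discrete estimate $\mathbb{E}\,\mathcal{W}_p \lesssim \varepsilon_M (N(\varepsilon_M)/M)^{1/p}$ is a bit optimistic as stated and hides the telescoping sum over scales that does the real work, but you acknowledge this in the final paragraph. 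As a proof plan it is sound; just be aware that in the context of this paper the theorem is quoted, not proved.
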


Intuitively, the upper Wasserstein dimension $d_p^*(\alpha)$ can be interpreted 
as the support dimension of the probability distribution $\alpha$, which 
corresponds to the dimension of the latent space, $N_z$, in the implicit 
generative model. 
Exploiting the metric properties of the p-Wasserstein distance, the following 
corollaries are immediately derived from Theorem~\ref{th:ConvergenceSpeedWasserstein}:

\begin{corollary}[Convergence rate of p-Wasserstein distance between empirical distributions sampled from a common distribution]
    \label{co:ConvergenceSpeedWassSame}
	Let $\hat{\alpha}_{1,M}$ and $\hat{\alpha}_{2,M}$ be two different 
	empirical distributions sampled from a common distribution $\alpha$. 
	The number of samples is $M$ in both empirical distributions. 
	Then the following expression holds for $s > d_p^*(\alpha)$:
	\begin{equation}
		\mathbb{E}\left[\mathcal{W}_p(\hat{\alpha}_{1,M},\hat{\alpha}_{2,{M}})\right] \lesssim O({M}^{-1/s}),
		\label{eq:dimensiondependence_identicaldistribution}
	\end{equation}
where the expectation $\mathbb{E}$ is taken with respect to the samples 
drawn from the empirical distributions $\hat{\alpha}_{1,M}$ and 
$\hat{\alpha}_{2,M}$. 
\end{corollary}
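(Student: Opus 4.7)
The plan is to reduce the bound for two independent empirical measures to the single-empirical bound of Theorem~\ref{th:ConvergenceSpeedWasserstein} via the triangle inequality. Since $\mathcal{W}_p$ is a genuine metric on the space of probability distributions (as noted right after the definition of the $p$-Wasserstein distance), for every pair of realized samples producing $\hat{\alpha}_{1,M}$ and $\hat{\alpha}_{2,M}$ we have the pointwise inequality
\begin{equation*}
  \mathcal{W}_p(\hat{\alpha}_{1,M}, \hat{\alpha}_{2,M})
  \;\le\; \mathcal{W}_p(\hat{\alpha}_{1,M}, \alpha) + \mathcal{W}_p(\alpha, \hat{\alpha}_{2,M}).
\end{equation*}

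Next, I would take the expectation over the joint draw of the two independent samples and use linearity of expectation to split the right-hand side into two separate expectations. Each of these is an expectation of $\mathcal{W}_p$ between the true distribution $\alpha$ and an $M$-sample empirical distribution drawn from $\alpha$, so Theorem~\ref{th:ConvergenceSpeedWasserstein} applies directly to each term, giving $\mathbb{E}[\mathcal{W}_p(\alpha,\hat{\alpha}_{k,M})] \lesssim O(M^{-1/s})$ for any $s > d_p^*(\alpha)$ and $k = 1, 2$.

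Adding these two bounds yields $\mathbb{E}[\mathcal{W}_p(\hat{\alpha}_{1,M}, \hat{\alpha}_{2,M})] \lesssim 2\,O(M^{-1/s}) = O(M^{-1/s})$, which is the claim. There is essentially no technical obstacle here; the proof is a direct consequence of the triangle inequality and Theorem~\ref{th:ConvergenceSpeedWasserstein}. The only point deserving a moment of care is confirming that the expectation in Theorem~\ref{th:ConvergenceSpeedWasserstein} is taken over precisely the distribution of samples forming $\hat{\alpha}_{k,M}$, so that applying it to each of $\hat{\alpha}_{1,M}$ and $\hat{\alpha}_{2,M}$ under the joint-product law is legitimate; independence of the two samples makes this marginalization immediate.
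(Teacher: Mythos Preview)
Your proposal is correct and matches the paper's intended argument: the paper states explicitly that Corollaries~\ref{co:ConvergenceSpeedWassSame} and~\ref{co:ConvergenceSpeedWassDiff} are ``immediately derived from Theorem~\ref{th:ConvergenceSpeedWasserstein}'' by ``exploiting the metric properties of the $p$-Wasserstein distance,'' which is exactly the triangle-inequality reduction you carry out.
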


\begin{corollary}[Convergence rate of p-Wasserstein distance between different empirical distributions]
    \label{co:ConvergenceSpeedWassDiff}
	Suppose that the upper Wasserstein dimension of the probability distributions $\alpha$ and $\beta_{\bm{\theta}}$ is at most $d_p^*$, then the following expression holds for $s>d_p^*$:
	\begin{equation}
		\mathbb{E}\left[\left|\mathcal{W}_p(\alpha,\beta_{\bm{\theta}})-\mathcal{W}_p(\hat{\alpha}_{M},\hat{\beta}_{\bm{\theta},{M}})\right|\right] \lesssim O({M}^{-1/s}),
		\label{eq:dimensiondependence_differentdistribution}
	\end{equation}
where the expectation $\mathbb{E}$ is taken with respect to the samples 
drawn from the empirical distribution $\hat{\alpha}_{M}$ and 
$\hat{\beta}_{\bm{\theta},{M}}$. 
\end{corollary}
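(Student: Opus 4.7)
The plan is to reduce the corollary to Theorem~\ref{th:ConvergenceSpeedWasserstein} by splitting the comparison of $\mathcal{W}_p(\alpha,\beta_{\bm{\theta}})$ and $\mathcal{W}_p(\hat{\alpha}_M,\hat{\beta}_{\bm{\theta},M})$ into two ``self-convergence'' terms that each fall under the scope of the theorem. The entry point is the reverse triangle inequality for the $p$-Wasserstein metric, which is valid because $\mathcal{W}_p$ is a genuine metric on probability distributions.

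First, I would apply the triangle inequality twice, in both directions. Writing it once yields
\begin{equation*}
\mathcal{W}_p(\alpha,\beta_{\bm{\theta}}) \le \mathcal{W}_p(\alpha,\hat{\alpha}_M) + \mathcal{W}_p(\hat{\alpha}_M,\hat{\beta}_{\bm{\theta},M}) + \mathcal{W}_p(\hat{\beta}_{\bm{\theta},M},\beta_{\bm{\theta}}),
\end{equation*}
and the analogous bound holds with the roles of $(\alpha,\beta_{\bm{\theta}})$ and $(\hat{\alpha}_M,\hat{\beta}_{\bm{\theta},M})$ swapped. Combining the two inequalities and using symmetry gives the pointwise reverse-triangle bound
\begin{equation*}
\bigl|\mathcal{W}_p(\alpha,\beta_{\bm{\theta}})-\mathcal{W}_p(\hat{\alpha}_M,\hat{\beta}_{\bm{\theta},M})\bigr|
\le \mathcal{W}_p(\alpha,\hat{\alpha}_M) + \mathcal{W}_p(\beta_{\bm{\theta}},\hat{\beta}_{\bm{\theta},M}).
\end{equation*}

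Second, I would take expectations over the draws of the two empirical distributions. By linearity of expectation, the right-hand side becomes $\mathbb{E}[\mathcal{W}_p(\alpha,\hat{\alpha}_M)] + \mathbb{E}[\mathcal{W}_p(\beta_{\bm{\theta}},\hat{\beta}_{\bm{\theta},M})]$. Since both $\alpha$ and $\beta_{\bm{\theta}}$ are assumed to have upper Wasserstein dimension at most $d_p^{*}$, Theorem~\ref{th:ConvergenceSpeedWasserstein} applies to each term separately with any $s > d_p^{*}$, giving a bound of order $O(M^{-1/s})$ for each. Summing the two contributions preserves this rate and yields the claimed estimate.

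There is no real obstacle here beyond ensuring that the reverse triangle inequality is valid, which is immediate once the metric axioms of $\mathcal{W}_p$ (already recorded in the preliminaries) are invoked, and that the assumption on $\beta_{\bm{\theta}}$ allows Theorem~\ref{th:ConvergenceSpeedWasserstein} to be applied on the model side as well as the data side. The only mildly delicate point is bookkeeping with the hidden constants in the $\lesssim$ symbol: the two contributions may have different implicit constants, so the final $O(M^{-1/s})$ bound should be understood with a constant depending on the common upper dimension $d_p^{*}$ and on $s$, but not on $M$.
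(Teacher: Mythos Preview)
Your proposal is correct and matches the paper's own approach: the paper states that the corollary is ``immediately derived from Theorem~\ref{th:ConvergenceSpeedWasserstein}'' by ``exploiting the metric properties of the $p$-Wasserstein distance,'' which is precisely the reverse-triangle-inequality reduction you carry out.
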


These corollaries indicate that the empirical estimator 
\eqref{eq:EmpiricalEstimatorOT} is a good estimator if the intrinsic dimension 
of the training data and the dimension of the latent space $N_z$ are sufficiently 
small, because the Wasserstein dimension $d_p^*$ is almost the same as the 
intrinsic dimension of the training data and the latent dimension. 
In Sec.~\ref{sec:dependenceDataDimension}, we numerically see that similar 
convergence laws hold even when the OTL is not the p-Wasserstein distance.

\section{Learning algorithm of generative model for quantum ensemble}
\label{sec:proposed_algorithm}

In Sec.~\ref{sec:vanishingGradientVQA}, we define the new quantum OTL that can be suitably used 
in the learning algorithm of the generative model for quantum ensemble. 
The learning algorithm is provided in Sec.~\ref{sec:LearningAlgorithm}.

\subsection{Optimal transport loss with local ground cost}
\label{sec:vanishingGradientVQA}

Our idea is to directly use Eq.~\eqref{eq:EmpiricalEstimatorOT} yet with 
the ground cost for quantum states, $c(\ket{\psi}, \ket{\phi})$, rather than 
that for classical data vectors, $c(\bm{x}, \bm{y})$. 
This actually enables us to define the OTL between quantum ensembles 
$\{\ket{\psi_i} \}$ and $\{\ket{\phi_i} \}$, as follows: 
\begin{equation}
\label{OTL_pre}
	\begin{split}
		\mathcal{L}_c\left( \{\ket{\psi_i}\}, \{ \ket{\phi_i}\} \right)
		= \min_{\{\pi_{i,j}\} } 
		     \sum_{i,j} c\left( \ket{\psi_i}, \ket{\phi_i} \right)\pi_{i,j}, ~~ 
		\mathrm{subject\ to} \quad 
		   \sum_{i}\pi_{i,j} = q_j, ~~ \sum_{j}\pi_{i,j} = p_i, ~~ \pi_{i,j} \ge 0, 
	\end{split}
\end{equation}
where $p_i$ and $q_j$ are probabilities that $\ket{\psi_i}$ and $\ket{\phi_i}$ 
appears, respectively. 
Note that we can define the transport loss between the corresponding mixed states 
$\sum_i p_i \ket{\psi_i}\bra{\psi_i}$ and $\sum_i q_i \ket{\phi_i}\bra{\phi_i}$ 
or some modification of them, as discussed in \cite{NEURIPS2019_f35fd567}; 
but as mentioned in Sec.~\ref{sec:introduction} , such mixed state loses the original configuration 
of ensemble (e.g., single qubit pure states uniformly distributed on the equator 
of the Bloch sphere) and thus are not applicable to our purpose.

Then our question is how to define the ground cost $c(\ket{\psi}, \ket{\phi})$. 
An immediate choice might be the trace distance: 
\begin{definition}[Trace distance for pure states\cite{nielsen2002quantum}]
	\begin{equation}
		\begin{split}
			c_{\mathrm{tr}}(\ket{\psi},\ket{\phi}) & = \sqrt{1-|\braket{\psi|\phi}|^2}.
			\label{eq:tracedistance}
		\end{split}
	\end{equation}
\end{definition}

Because the trace distance satisfies the axioms of metric, we can define the 
p-Wasserstein distance for quantum ensembles, 
$\mathcal{W}_p( \{\ket{\psi_i}\}, \{ \ket{\phi_i}\} ) = \mathcal{L}_{d^p}( \{\ket{\psi_i}\}, \{ \ket{\phi_i}\} )^{1/p}$, which allows us to have some 
useful properties described in Corollary~\ref{co:ConvergenceSpeedWassSame}. 
It is also notable that the trace distance is relatively easy to compute on 
a quantum computer, using e.g. the swap test\cite{buhrman2001quantum} or 
the inversion test\cite{havlivcek2019supervised}.

We now give an important remark. 
As will be formally described, our goal is to find a quantum circuit that 
produces a quantum ensemble (via changing latent variables) which best 
approximates a given quantum ensemble. 
This task can be executed by the gradient descent method for a parametrized 
quantum circuit, but a naive setting leads to the vanishing gradient issue, 
meaning that the gradient vector decays to zero exponentially fast with 
respect to the number of qubits \cite{mcclean2018barren}. 
There have been several proposals found in the literature 
\cite{nakaji2021expressibility,cerezo2021cost}, but a common prerequisite is 
that the cost should be a {\it local} one. 
To explain the meaning, let us consider the case where $\ket{\phi}$ is given 
by $\ket{\phi}=U\ket{0}^{\otimes n}$ where $U$ is a unitary matrix (which will 
be a parametrized unitary matrix $U(\bm{\theta})$ defining the generative model) 
and $n$ is the number of qubits. 
Then the trace distance is based on the fidelity 
$|\braket{\psi|\phi}|^2 = |\bra{\psi}U\ket{0}^{\otimes n}|^2$. 
This is the probability to get all zeros via the {\it global} measurement 
on the state $U^\dagger \ket{\psi}$ in the computational basis, which thus 
means that the trace distance is a global cost; accordingly, the fidelity-based 
learning method suffers from the vanishing gradient issue. 
On the other hand, we find that the following cost function is based on the 
localized fidelity measurement.

\begin{definition}
[Ground cost for quantum states only with local measurements 
\cite{khatri2019quantum,sharma2020noise}]
	\begin{equation}
		\begin{split}
			\mbox{}& 
			c_{\rm local}(\ket{\psi}, \ket{\phi}) 
			 = c_{\rm local}(\ket{\psi}, U\ket{0}^{\otimes n}) 
			 = \sqrt{\frac{1}{n}\sum_{k=1}^n(1-p^{(k)})},   \\
			\mbox{}& p^{(k)}  = \mathrm{Tr} 
			  \left[P_0^k  U^\dagger \ket{\psi}\bra{\psi}U\right], ~~~ 
			P_0^k  = \mathbb{I}_1\otimes \mathbb{I}_2\otimes\cdots \otimes\overbrace{\ket{0}\bra{0}_k}^{k\text{-}\mathrm{th\ bit}}\otimes\cdots\otimes\mathbb{I}_n,
			\label{eq:localcost}
		\end{split}
	\end{equation}
where $n$ is the number of qubits. 
Also, $\mathbb{I}_i$ and $\ket{0}\bra{0}_i$ denote the identity operator and 
the projection operator that act on the $i$-th qubit, respectively; 
thus $p^{(k)}$ represents the probability of getting $0$ when observing the 
$k$-th qubit. 
\end{definition}

Equation~\eqref{eq:localcost} is certainly a local cost, and thus it may be 
used for realizing effective learning free from the vanishing gradient issue 
provided that some additional conditions (which will be described in Section~\ref{sec:demonstration}) 
are satisfied. 
However, importantly, $c_{\rm local}(\ket{\psi},\ket{\phi})$ is not a distance 
between the two quantum states, because it is not symmetric and it does not 
satisfy the triangle inequality, while the trace distance \eqref{eq:tracedistance} 
satisfies the axiom of distance. 
Yet $c_{\rm local}(\ket{\psi},\ket{\phi})$ is always non-negative and becomes 
zero only when $\ket{\psi}=\ket{\phi}$, meaning that 
$c_{\rm local}(\ket{\psi},\ket{\phi})$ functions as a divergence. 
Then we can prove that, in general, the OTL defined with a divergence ground 
cost also functions as a divergence, as follows. 
The proof is given in Appendix~\ref{sec:proof_optimaltransport_divergence}.

\begin{proposition}
	When the ground cost $c(\bm{x},\bm{y})$ is a divergence satisfying 
	\begin{equation}
		\begin{split}
			c(\bm{x},\bm{y}) & \ge 0, \\
			c(\bm{x},\bm{y}) & = 0 ~~~ \mbox{iff} ~~~ \bm{x} = \bm{y},
		\end{split}
	\end{equation}
	then the OTL $\mathcal{L}_c(\alpha,\beta)$ with $c(\bm{x},\bm{y})$ is 
	also a divergence. 
	That is, $\mathcal{L}_c(\alpha,\beta)$ satisfies the following properties 
	for arbitrary probability distributions $\alpha$ and $\beta$:
    \begin{equation}
		\begin{split}
			\mathcal{L}_c(\alpha,\beta) & \ge 0, \\
			\mathcal{L}_c(\alpha,\beta) & = 0 ~~~ \mbox{iff} ~~~ \alpha =\beta.
		\end{split}
	\end{equation}
	\label{pro:optimaltransport_divergence}
\end{proposition}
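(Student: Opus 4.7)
The plan is to verify the two defining properties of a divergence separately, leveraging only the non-negativity of $c$ and the fact that $c(\bm{x},\bm{y}) = 0$ characterises the diagonal $\bm{x}=\bm{y}$.

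First, non-negativity of $\mathcal{L}_c(\alpha,\beta)$ is immediate: every admissible coupling $\pi$ satisfies $\pi \ge 0$, and the integrand $c(\bm{x},\bm{y})\pi(\bm{x},\bm{y})$ is pointwise non-negative, so the minimum over $\pi$ is non-negative too.

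Second, I would handle the equivalence $\mathcal{L}_c(\alpha,\beta) = 0 \Leftrightarrow \alpha = \beta$ in two directions. For the $(\Leftarrow)$ direction, I would exhibit an explicit admissible coupling concentrated on the diagonal, namely $\pi^\star(\bm{x},\bm{y}) = \alpha(\bm{x})\,\delta(\bm{x}-\bm{y})$. Its two marginals are both $\alpha = \beta$, so $\pi^\star$ is feasible, and
\begin{equation*}
\int c(\bm{x},\bm{y})\,d\pi^\star(\bm{x},\bm{y}) = \int c(\bm{x},\bm{x})\,\alpha(\bm{x})\,d\bm{x} = 0
\end{equation*}
since $c(\bm{x},\bm{x}) = 0$. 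Because $\mathcal{L}_c \ge 0$, this witness forces $\mathcal{L}_c(\alpha,\beta) = 0$.

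For the $(\Rightarrow)$ direction, suppose $\mathcal{L}_c(\alpha,\beta) = 0$ and let $\pi^\star$ be an optimal coupling. Since $c\ge 0$ and $\int c\,d\pi^\star = 0$, the coupling $\pi^\star$ must be supported on $\{(\bm{x},\bm{y}) : c(\bm{x},\bm{y}) = 0\}$, which by the divergence property of $c$ equals the diagonal $\{\bm{x}=\bm{y}\}$. Hence $\pi^\star(\bm{x},\bm{y}) = f(\bm{x})\,\delta(\bm{x}-\bm{y})$ for some non-negative density $f$. Plugging this form into the two marginal constraints of~\eqref{eq:optimaltransportloss} gives $f(\bm{x}) = \alpha(\bm{x})$ and $f(\bm{y}) = \beta(\bm{y})$, so $\alpha = \beta$.

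I expect the main technical obstacle to be the last direction, specifically the justification that ``supported on the diagonal'' is a legitimate conclusion when the optimal $\pi^\star$ is only guaranteed as an infimum or is understood as a general measure. To be fully rigorous one would invoke existence of optimal transport plans under mild regularity (lower semicontinuity of $c$ on a Polish space, as in standard Kantorovich theory) and then argue via a null-set statement: $\pi^\star(\{c>0\}) = 0$. Everything else reduces to routine manipulation of marginals, so in the write-up I would simply state this existence assumption (or cite it) and keep the diagonal argument at the level of sketch above.
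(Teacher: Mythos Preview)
Your argument is correct and matches the paper's proof essentially step for step: non-negativity from $c\ge0$ and $\pi\ge0$, the diagonal coupling $\alpha(\bm{x})\delta(\bm{x}-\bm{y})$ for the $(\Leftarrow)$ direction, and the ``$\pi^\star$ must be supported on the diagonal, hence $\pi^\star=f(\bm{x})\delta(\bm{x}-\bm{y})$, hence $\alpha=\beta$ from the marginals'' reasoning for $(\Rightarrow)$. Your additional remark about existence of an optimal plan is a valid technical caveat that the paper's proof simply glosses over.
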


Therefore, the OTL 
$\mathcal{L}_c\left( \{\ket{\psi_i}\}, \{ \ket{\phi_i}\} \right)$ given in 
Eq.~\eqref{OTL_pre} with the local ground cost 
$c_{\rm local}(\ket{\psi}, \ket{\phi})$ given in Eq.~\eqref{eq:localcost} 
functions as a divergence. 
This means that $\mathcal{L}_c\left( \{\ket{\psi_i}\}, \{ \ket{\phi_i}\} \right)$ 
can be suitably used for evaluating the difference between a given quantum 
ensemble and the set of output states of the generative model. 
At the same time, recall that, for the purpose of avoiding the gradient vanishing 
issue, we had to give up using the fidelity measure and accordingly the distance 
property of the OTL. 
Hence we directly cannot use the desirable properties described in Theorem~\ref{th:ConvergenceSpeedWasserstein}, 
Corollary~\ref{co:ConvergenceSpeedWassSame}, and Corollary~\ref{co:ConvergenceSpeedWassDiff}; nonetheless, in Section~\ref{sec:PerformanceAnalysis}, we will discuss if 
similar properties do hold even for the divergence measure.

\subsection{Learning Algorithm}
\label{sec:LearningAlgorithm}

The goal of our task is to train an implicit generative model so that it 
outputs a quantum ensemble approximating a given ensemble 
$\{\ket{\psi_i}\}_{i=1}^{M}$; that is, our generative model contains 
tunable parameters and latent variables, as in the classical case described 
in Section~\ref{sec:implicitGenerativeModel}. 
In this paper, we employ the following implicit generative model:

\begin{definition}[Implicit generative model on a quantum circuit]
	\label{def:quantumImplicitGenerativeModel}
	Using the initial state $\ket{0}^{\otimes n}$ and the parameterized quantum circuit $U(\bm{z},\bm{\theta})$, the implicit generative model on a quantum circuit 
	is defined as
	\begin{equation}
	\label{eq:quantumImplicitGenerativeModel}
		\ket{\phi_{\bm{\theta}}(\bm{z})}=U(\bm{z},\bm{\theta})\ket{0}^{\otimes n}.
	\end{equation}
    Here $\bm{\theta}$ is the vector of tunable parameters and $\bm{z}$ is 
	the vector of latent variables that follow a known probability distribution; 
	both $\bm{\theta}$ and $\bm{z}$ are encoded in the rotation angles of 
	rotation gates in $U(\bm{z},\bm{\theta})$. 
\end{definition}

The similar circuit model is also found in meta-VQE \cite{cervera2021meta}, 
which uses physical parameters such as the distance of atomic 
nucleus instead of random latent variables $\bm{z}$. 
Also, the model proposed in \cite{dallaire2018quantum} introduces the latent 
variables $\bm{z}$ as the computational basis of an initial state in the form 
$\ket{\phi_{\bm{\theta}}(\bm{z})}=U(\bm{\theta})\ket{\bm{z}}$; however, in this 
model, states with different latent variables are always orthogonal with each 
other, and thus the model cannot capture a small change of state in the Hilbert 
space via changing the latent variables. 
In contrast, the model \eqref{eq:quantumImplicitGenerativeModel} fulfills 
this purpose as long as the expressivity of the state with respect to $\bm{z}$ 
is enough. 
In addition, our model is advantageous in that the analytical derivative is 
available by the parameter shift rule \cite{mitarai2018quantum,schuld2019evaluating} 
not only for the tunable parameters $\bm{\theta}$ but also for the latent 
variables $\bm{z}$. 
This feature will be effectively utilized in the anomaly detection problem 
in Sec.~\ref{sec:demonstration}.

Next, as for the learning cost, we take the following empirical estimator of 
OTL, calculated from the training data $\{\ket{\psi_i}\}_{i=1}^{M}$ and the 
samples of latent variables $\{\bm{z}_j\}_{j=1}^{M_g}$: 
\begin{equation}
	\begin{split}
		\mathcal{L}_{c_{\text{local}}}\left(\{\ket{\psi_i}\}_{i=1}^{M},\{ \ket{\phi_{\bm{\theta}}(\bm{z}_j)}\}_{j=1}^{M_g}\right)
		&= \min_{\{\pi_{i,j}\}_{i,j=1}^{{M},{M_g}} } 
		\sum_{i=1}^{M}\sum_{j=1}^{M_g}
		c_{\text{local},i,j}\pi_{i,j}, \\
		\mathrm{subject\ to} \quad    & \sum_{i=1}^{M}\pi_{i,j} = \frac{1}{{M_g}}, ~~ 
		\sum_{j=1}^{M_g}\pi_{i,j} = \frac{1}{{M}}, ~~ 
		\pi_{i,j} \ge 0.
		\label{eq:quantumEmpiricalLoss}
	\end{split}
\end{equation}
where $c_{\text{local},i,j}$ is the ground cost given by 
\begin{align}
    \begin{split}
			c_{\text{local},i,j} = &\sqrt{\frac{1}{n}\sum_{k=1}^n (1-p^{(k)}_{i,j})}, \\
	     p^{(k)}_{i,j}=\mathrm{Tr} \left[P_0^k  U^\dagger(\bm{z}_j,\bm{\theta})\ket{\psi_i}\bra{\psi_i}U(\bm{z}_j,\bm{\theta})\right]&, ~~~ 
			P_0^k  = \mathbb{I}_1\otimes \mathbb{I}_2\otimes\cdots \otimes\overbrace{\ket{0}\bra{0}_k}^{k\text{-}\mathrm{th\ bit}}\otimes\cdots\otimes\mathbb{I}_n.
	    \label{eq:localcost_estimator} 
	\end{split}
\end{align}
Note that in practice $c_{\text{local},i,j}$ is estimated with the finite 
number of measurements (shots); 
we denote $\tilde{c}_{\text{local},i,j}^{({N_s})}$ to be the estimator 
with $N_s$ shots for the ideal one $c_{\text{local},i,j}$, and in this case 
the OTL is denoted as $\mathcal{L}_{\tilde{c}_{\text{local}}^{({N_s})}}$.

Based on the OTL \eqref{eq:quantumEmpiricalLoss}, the pseudo-code of proposed 
algorithm is shown in Algorithm~\ref{alg:quantumOTLossLearning}.
The total number of training quantum states $\ket{\psi_i}$ required for 
the parameter update is of the order $O({M}M_g N_s)$ in step~3, and 
$O(\max(M,M_g) N_s N_p)$ in step~5, since the parameter shift rule 
\cite{mitarai2018quantum,schuld2019evaluating} is applicable.

\begin{algorithm}
	\caption{Learning Algorithm with Quantum Optimal Transport Loss 
	\eqref{eq:quantumEmpiricalLoss}}
	\label{alg:quantumOTLossLearning}
	\begin{algorithmic}[1]
		\renewcommand{\algorithmicrequire}{\textbf{Input:}}
		\renewcommand{\algorithmicensure}{\textbf{Output:}}
		\REQUIRE Quantum circuit model $U(\bm{z},\bm{\theta})$ with initial parameters $\bm{\theta}$, learning rate $\varepsilon$, 
		ensemble $\{\ket{\psi_i}\}$
		\ENSURE  A quantum circuit that outputs an ensemble approximating 
		the input ensemble
		\REPEAT
		\STATE Generate latent variables $\{\bm{z}_j\}_{j=1}^{{M_g}}$ sampled 
		from the latent distribution.
		\STATE Estimate the ground costs $\left\{\tilde{c}^{({N_s})}_{\text{local},i,j}\right\}_{i,j=1}^{{M},{M_g}}$ from $\{\ket{\psi_i}\}_{i=1}^{M}$ and $\{U(\bm{z}_j,\bm{\theta})\}_{j=1}^{M_g}$ with ${N_s}$ shots, using  Eq.~\eqref{eq:localcost}.
		\STATE Calculate the optimal transport plan $\pi_{i,j}$ by solving the linear programming \eqref{eq:quantumEmpiricalLoss}.
		\STATE Calculate the gradients $\left\{\frac{\partial}{\partial \theta_k}\mathcal{L}_{c_{\text{local}}}\right\}_{k=1}^{N_p}$ from $\pi_{i,j}$ and $\left\{\frac{\partial}{\partial \theta_k} \tilde{c}^{({N_s})}_{\text{local},i,j}\right\}_{i,j,k=1}^{{M},{M_g},N_p}$ using the parameter shift rule.
		\STATE Update $\{\theta_k\}_{k=1}^{N_p}$ by using the gradients $\left\{\frac{\partial}{\partial \theta_k}\mathcal{L}_{c_{\text{local}}}\right\}_{k=1}^{N_p}$ with 
		learning rate $\varepsilon$. 
		\UNTIL{convergence}
	\end{algorithmic}
\end{algorithm}

\section{Performance analysis of the cost and its gradient}
\label{sec:PerformanceAnalysis}

In this section, we analyze the performance of the proposed OTL 
\eqref{eq:quantumEmpiricalLoss} and its gradient vector. 
First, in Sec.~\ref{sec:dependenceDataDimension}, we numerically study the 
approximation error of the loss with the focus on its dependence on the 
intrinsic dimension of data and the number of qubits, to see if the similar 
results to Theorem~\ref{th:ConvergenceSpeedWasserstein} and Corollaries~\ref{co:ConvergenceSpeedWassSame} and \ref{co:ConvergenceSpeedWassDiff} would hold even despite that 
the OTL is now a divergence rather than distance. 
Then, in Sec.~\ref{sec:dependenceShotNumber}, we provide numerical and 
theoretical analyses on the approximation error as a function of the number 
of measurement (shots). 
Finally, in Sec.~\ref{sec:barrenplateau}, we numerically show that the 
OTL certainly avoids the vanishing gradient issue; i.e., thanks to the 
locality of the cost, the its gradient does not decay exponentially fast. 
All the analysis in this section is focused on the property of cost at 
a certain point of learning process (say, at the initial time); the 
performance analysis on the training process will be discussed in the 
next section.

We employ the parameterized unitary matrix 
$U(\bm{z},\bm{\theta})$ shown in Fig.~\ref{fig:ansatzcircuit} to construct 
the implicit generative model \eqref{eq:quantumImplicitGenerativeModel}, 
which is similar to that given in Ref.~\cite{mcclean2018barren} except that 
our model contains the latent variables $\bm{z}$. 
That is, the model is composed of the following $N_L$ repeated unitaries 
(we call each unitary the $\ell$-th layer): 
\begin{equation}
	U_{N_L,\bm{\xi},\bm{\eta}} (\bm{z},\bm{\theta}) = \prod_{\ell=1}^{N_L} W V_{\bm{\xi}_\ell,\bm{\eta}_\ell}(\bm{z},\bm{\theta}_\ell),
\end{equation}
where $\bm{\theta}_\ell=\{\theta_{\ell,j}\}_{j=1}^n$, 
$\bm{\xi}_\ell=\{\xi_{\ell,j}\}_{j=1}^n$, and 
$\bm{\eta}_\ell=\{\eta_{\ell,j}\}_{j=1}^n$ are $n$-dimensional parameter 
vectors in the $\ell$-th layer. 
We summarize these vectors to 
$\bm{\theta}=\{\bm{\theta}_\ell\}_{\ell=1}^{N_L}$, 
$\bm{\xi}=\{\bm{\xi}_\ell\}_{\ell=1}^{N_L}$, and 
$\bm{\eta}=\{\bm{\eta}_\ell\}_{\ell=1}^{N_L}$. 
Here $\bm{\theta}$ are trainable parameters and $\bm{z}$ are latent variables. 
$W$ is a fixed entangling unitary gate composed of the ladder-structured 
controlled-$Z$ gates; that is, $W$ operates the two-qubit controlled-$Z$ 
gate on all adjacent qubits;
\begin{equation}
	W =\prod_{i=1}^{n-1} CZ_{i,i+1},
\end{equation}
where $CZ_{i,i+1}$ is the controlled-$Z$ gate acting on the $i$-th and 
$(i+1)$-th qubits. 
The operator $V_{\bm{\xi}_\ell,\bm{\eta}_\ell}(\bm{z},\bm{\theta}_\ell)$ 
consists of the single-qubit rotation operators:
\begin{equation}
	V_{\bm{\xi}_\ell,\bm{\eta}_\ell}(\bm{z},\bm{\theta}_\ell) =\prod_{i=1}^n R_{\xi_{\ell,i}}(\theta_{\ell,i}z_{\eta_{\ell,i}}),
\end{equation}
where $R_{\xi_{\ell,i}}(\theta_{\ell,i}z_{\eta_{\ell,i}})$ is the single-qubit 
Pauli rotation operator with angle $\theta_{\ell,i}z_{\eta_{\ell,i}}$ and 
direction ${\xi_{\ell,i}}\in\{X,Y,Z\}$ in the $\ell$-th layer, such as 
$R_X(\theta_{\ell,3} z_5)={\rm exp}(-i \theta_{\ell,3} z_5 \sigma_x)$. 
The component of the latent variables, $\eta_{\ell,i}\in\{0,1,2,\ldots,N_z\}$, 
and ${\xi_{\ell,i}}\in\{X,Y,Z\}$ are randomly chosen at the beginning of 
learning and never changed during learning. 
Also, we have introduced a constant bias term $z_0=1$ so that the ansatz 
$\ket{\phi_{\bm{\theta}}(\bm{z})}=U(\bm{z},\bm{\theta})\ket{0}^{\otimes n}$ can take 
a variety of states (for instance, if $z_1, \ldots, z_{N_z}\sim 0$ in the 
absence of the bias, then $\ket{\phi_{\bm{\theta}}(\bm{z})}$ is limited 
to around $\ket{0}^{\otimes n}$). 
We used Qiskit \cite{Qiskit} in all the simulation studies.

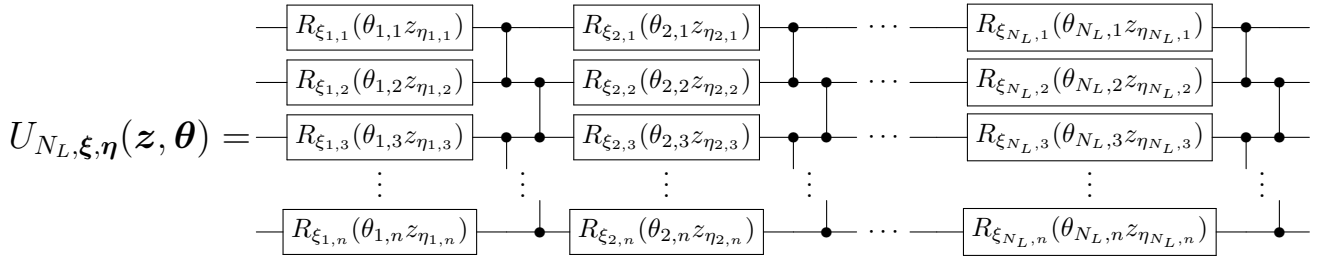
\begin{figure}[tb]
	\centering
	\leavevmode
	\Qcircuit @C=1em @R=.3em {
	&\nghost{\hspace{5em}}&\lstick{} & \gate{R_{\xi_{1,1}}(\theta_{1,1}z_{\eta_{1,1}})} & \ctrl{1}    & \qw             & \gate{R_{\xi_{2,1}}(\theta_{2,1}z_{\eta_{2,1}})} & \ctrl{1}    & \qw             & \qw & \cdots &  & \qw & \gate{R_{\xi_{{N_L},1}}(\theta_{{N_L},1}z_{\eta_{{N_L},1}})} & \ctrl{1}    & \qw             & \qw\\
	&&\lstick{} & \gate{R_{\xi_{1,2}}(\theta_{1,2}z_{\eta_{1,2}})} & \control\qw & \ctrl{1}        & \gate{R_{\xi_{2,2}}(\theta_{2,2}z_{\eta_{2,2}})} & \control\qw & \ctrl{1}        & \qw & \cdots &  & \qw & \gate{R_{\xi_{{N_L},2}}(\theta_{{N_L},2}z_{\eta_{{N_L},2}})} & \control\qw & \ctrl{1}        & \qw\\
	&&\lstick{} & \gate{R_{\xi_{1,3}}(\theta_{1,3}z_{\eta_{1,3}})} & \ctrl{1}    & \control\qw     & \gate{R_{\xi_{2,3}}(\theta_{2,3}z_{\eta_{2,3}})} & \ctrl{1}    & \control\qw     & \qw & \cdots &  & \qw & \gate{R_{\xi_{{N_L},3}}(\theta_{{N_L},3}z_{\eta_{{N_L},3}})} & \ctrl{1}    & \control\qw     & \qw\\
	&&          &                                                  &             &                 &                                                  &             &                 &     &        &  &     &                                                              &             &                 & \\
	&&          & \vdots                                           &             & \lstick{\vdots} & \vdots                                           &             & \lstick{\vdots} &     &        &  &     & \vdots                                                       &             & \lstick{\vdots} & \\\\\\
	&&          &                                                  &             &                 &                                                  &             &                 &     &        &  &     &                                                              &             &                 & \\
	&&\lstick{} & \gate{R_{\xi_{1,n}}(\theta_{1,n}z_{\eta_{1,n}})} & \qw         & \ctrl{-1}       & \gate{R_{\xi_{2,n}}(\theta_{2,n}z_{\eta_{2,n}})} & \qw         & \ctrl{-1}       & \qw & \cdots &  & \qw & \gate{R_{\xi_{{N_L},n}}(\theta_{{N_L},n}z_{\eta_{{N_L},n}})} & \qw         & \ctrl{-1}       & \qw
	\inputgrouph{1}{9}{4.2em}{\mbox{\Large $U_{N_L,\bm{\xi},\bm{\eta}} (\bm{z},\bm{\theta})=$ }}{-3.5em}
	}
	\caption{
		The structure of parameterized quantum circuit (ansatz) used in the performance analysis in Section~\ref{sec:PerformanceAnalysis} and {sec:demonstration}. 
		This ansatz consists of the repeated layers with a similar structure.
		In the $\ell$-th layer, the single qubit Pauli rotation operators with angles $\{\theta_{\ell,j}\times z_{\eta_{\ell,j}}\}_{j=1}^n$ and 
		directions $\{\xi_{\ell,j}\}_{j=1}^n$ are applied to each qubit 
		followed by a ladder-structured controlled-$Z$ gate. 
		The rotation angles $\bm{\xi}=\{\xi_{\ell,j}\}_{\ell,j=1}^{N_L,n}$ and 
		the components of the latent variables $\bm{\eta}=\{\eta_{\ell,j}\}_{\ell,j=1}^{N_L,n}$ are randomly chosen 
		at the beginning of the learning process and never changed during 
		the learning. 
	}
	\label{fig:ansatzcircuit}
\end{figure}

\subsection{Approximation error as a function of the number of training data}
\label{sec:dependenceDataDimension}

We have seen in Sec.~\ref{sec:OptimalTransportLoss} that, for the p-Wasserstein 
distance, the convergence rate of its approximation error is of the order 
$\mathcal{O}(M^{-1/s})$ rather than $\mathcal{O}(M^{-1/N_x})$, where 
$M$ is the number of training samples, $N_x$ is the dimension of the sample 
space $\mathcal{X}$, and $s$ can be interpreted as the intrinsic dimension of 
the data or the latent space dimension. 
This is desirable, because in general $s \ll N_x$. 
However, the OTL \eqref{eq:quantumEmpiricalLoss} is not a distance but 
a divergence. 
Hence it is not clear if it could theoretically enjoy a similar error 
scaling, but in this subsection we give a numerical evidence to positively 
support this conjecture.

We present the following two types of numerical simulations. 
The aim of the first one (\textbf{Experiment A}) is to see if our OTL 
would satisfy a similar property to 
Eq.~\eqref{eq:dimensiondependence_identicaldistribution}, which describes 
the difference of two empirical distributions sampled from the common 
hidden distribution. 
The second one (\textbf{Experiment B}) is studied for the case of 
Eq.~\eqref{eq:dimensiondependence_differentdistribution}, which describes 
the difference of the ideal OTL assuming the infinite samples available 
and an empirical distributions sampled from the common the ideal one. 
We used the statevector simulator \cite{Qiskit} to calculate the OTL 
assuming the infinite number of measurement; in the next subsection, we 
will study the influence of the finite number of shots on the performance. 
Throughout all the numerical experiments, we randomly chose the parameters 
$\bm{\xi},\bm{\eta},\bm{\theta},\tilde{\bm{\xi}},\tilde{\bm{\eta}},
\tilde{\bm{\theta}}$ and did not change these values.

{\bf Experiment A.} 
As an analogous quantity appearing in 
Eq.~\eqref{eq:dimensiondependence_identicaldistribution}, we here focus on 
the following expected value of the empirical OTL defined in 
Eq.~\eqref{eq:quantumEmpiricalLoss}: 
\begin{equation}
	\begin{split}
		\mathbb{E}_{\tilde{\bm{z}},\bm{z}\sim U(0,1)^{N_z}}\left[
		J^{c_{\text{local}}}_{\bm{{\xi}},{\bm{\eta}};\bm{\xi},\bm{\eta}}
		(\tilde{\bm{z}},\tilde{\bm{\theta}}:\bm{z},\bm{\theta};M)
		\right],
		\label{eq:numericalExperiment_self}
	\end{split}
\end{equation}
where 
\begin{equation}
	\begin{split}
		J^{c_{\text{local}}}_{\bm{\tilde{\xi}},\tilde{\bm{\eta}};\bm{\xi},\bm{\eta}}(\tilde{\bm{z}},\tilde{\bm{\theta}}:\bm{z},\bm{\theta};M)	=\mathcal{L}_{c_{\text{local}}}\left(\{ U_{N_L,\tilde{\bm{\xi}},\tilde{\bm{\eta}}}(\tilde{\bm{z}}_i,\tilde{\bm{\theta}})\ket{0}^{\otimes n}\}_{i=1}^{M}, ~ 
		\{ U_{N_L,\bm{\xi},\bm{\eta}}(\bm{z}_j,\bm{\theta})\ket{0}^{\otimes n}\}_{j=1}^{M}\right).
		\label{eq:definitionNumericalEmpiricalLoss}
	\end{split}
\end{equation}
In Eq.~\eqref{eq:numericalExperiment_self}, we set 
$\bm{\xi}=\tilde{\bm{\xi}}$ and $\bm{\eta}=\tilde{\bm{\eta}}$ for the 
two unitary operators that appear in the argument of 
$\mathcal{L}_{c_{\text{local}}}$ in Eq.~\eqref{eq:definitionNumericalEmpiricalLoss}. 
This indicates that $J^{c_{\text{local}}}_{\bm{{\xi}},{\bm{\eta}};\bm{\xi},\bm{\eta}}(\tilde{\bm{z}},{\bm{\theta}}:\bm{z},\bm{\theta};M)$ in 
Eq.~\eqref{eq:numericalExperiment_self} would become zero in the limit of infinite 
number of training data ($M\to\infty$). 
The expectation in Eq.~\eqref{eq:numericalExperiment_self} is taken with respect 
to the latent variables $\tilde{\bm{z}}_i$ and $\bm{z}_j$ subjected to the  
$N_z$-dimensional uniform distribution $U(0,1)^{N_z}$, but we numerically 
approximate it by $N_{\text{Monte}}$ Monte Carlo samplings. 
Other conditions in the numerical simulation are shown in 
Table~\ref{tab:ExperimentalParamters_dimensionDependence}.

Figure~\ref{fig:numericalExperiment_self} plots the values of 
Eq.~\eqref{eq:numericalExperiment_self} with several $N_z$ (the dimension of the 
latent variables) and $n$ (the number of qubits). 
In the figures, the dotted lines show the scaling curve $M^{-1/N_z}$. 
Notably, in the range of a large number of training data, the points and dotted 
lines are almost consistent, regardless of the number of qubits. 
This implies that the OTL for two different ensembles given by 
Eq.~\eqref{eq:numericalExperiment_self} is almost independent of the number 
of qubits $n$ and depends mainly on the latent dimension $N_z$, likewise the 
case of distance-based loss function proven in Corollary~\ref{co:ConvergenceSpeedWassSame}.

\begin{table}[tb]
	\begin{center}
		\caption{List of parameters for numerical simulation of Sec.~\ref{sec:dependenceDataDimension}}
		\begin{tabular}{cll} \hline
			number of measurements        & $N_{s}$            & $\infty$ (statevector)              \\
			number of training data             & $M$                & $\{2^i|i\in \{0,1,2,\ldots,10\}\}$ \\
			number of Monte Carlo samples & $N_{\text{Monte}}$ & $100$                              \\
			number of qubits              & $n$                &
			\begin{tabular}{l}
				$\{1,2,4,6,8,10\}$ for Experiment A \\
				$\{1,2,4,6,8\}$ for Experiment B
			\end{tabular}                                                              \\
			dimension of latent space     & $N_z$              &
			\begin{tabular}{l}
				$\{1,2,4,6,8\}$ for Experiment A \\
				$\{1,2,4,6,10,14\}$ for Experiment B
			\end{tabular}                                                              \\
			number of layers              & $N_L$              & $3+\lfloor N_z/n\rfloor$           \\
			\hline
		\end{tabular}
		\label{tab:ExperimentalParamters_dimensionDependence}
	\end{center}
\end{table}

\begin{figure}[hbtp]
	\centering
	\subfigure[$n=1$]{\includegraphics[width=.49\linewidth]{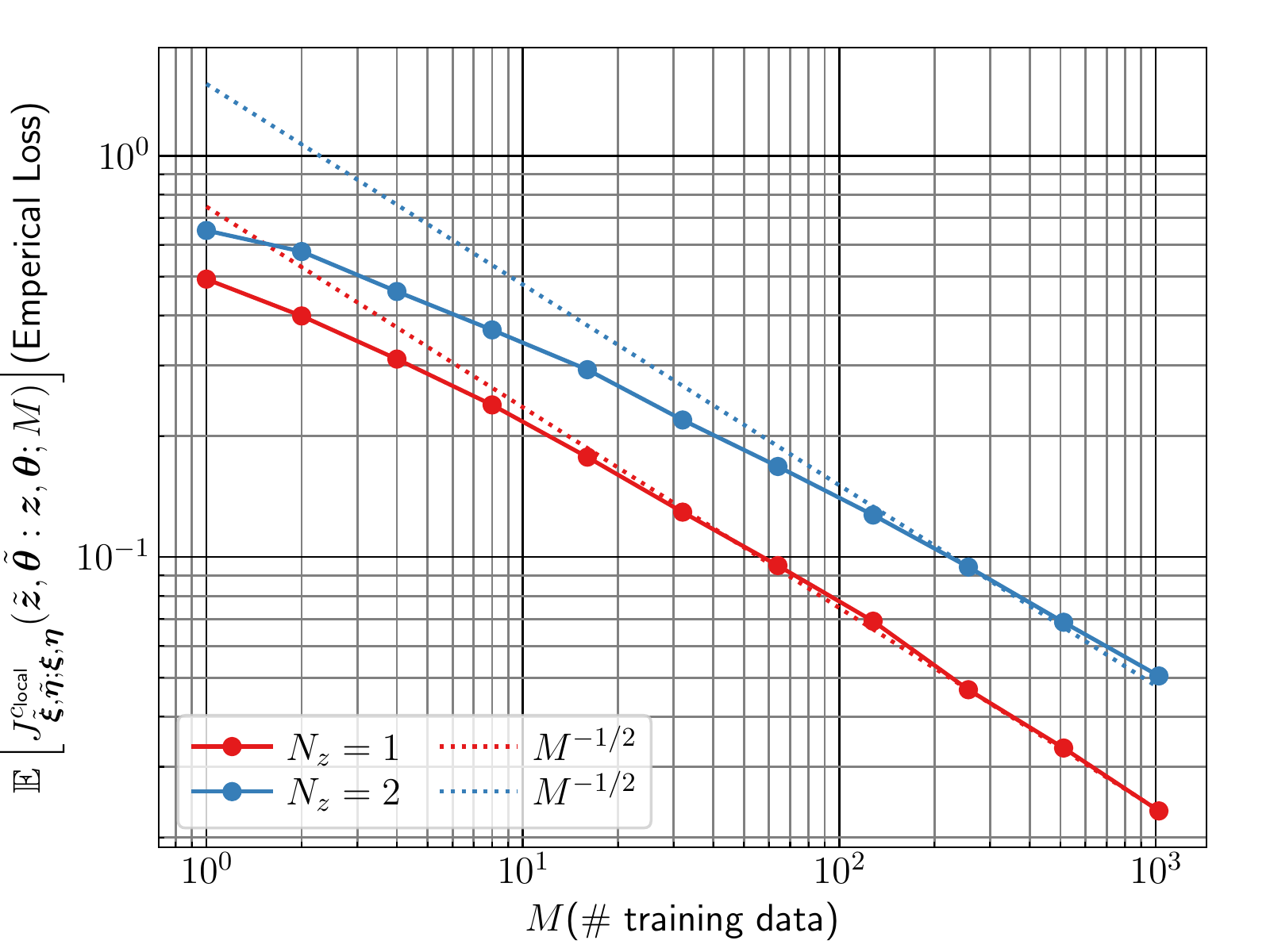}}
	\subfigure[$n=2$]{\includegraphics[width=.49\linewidth]{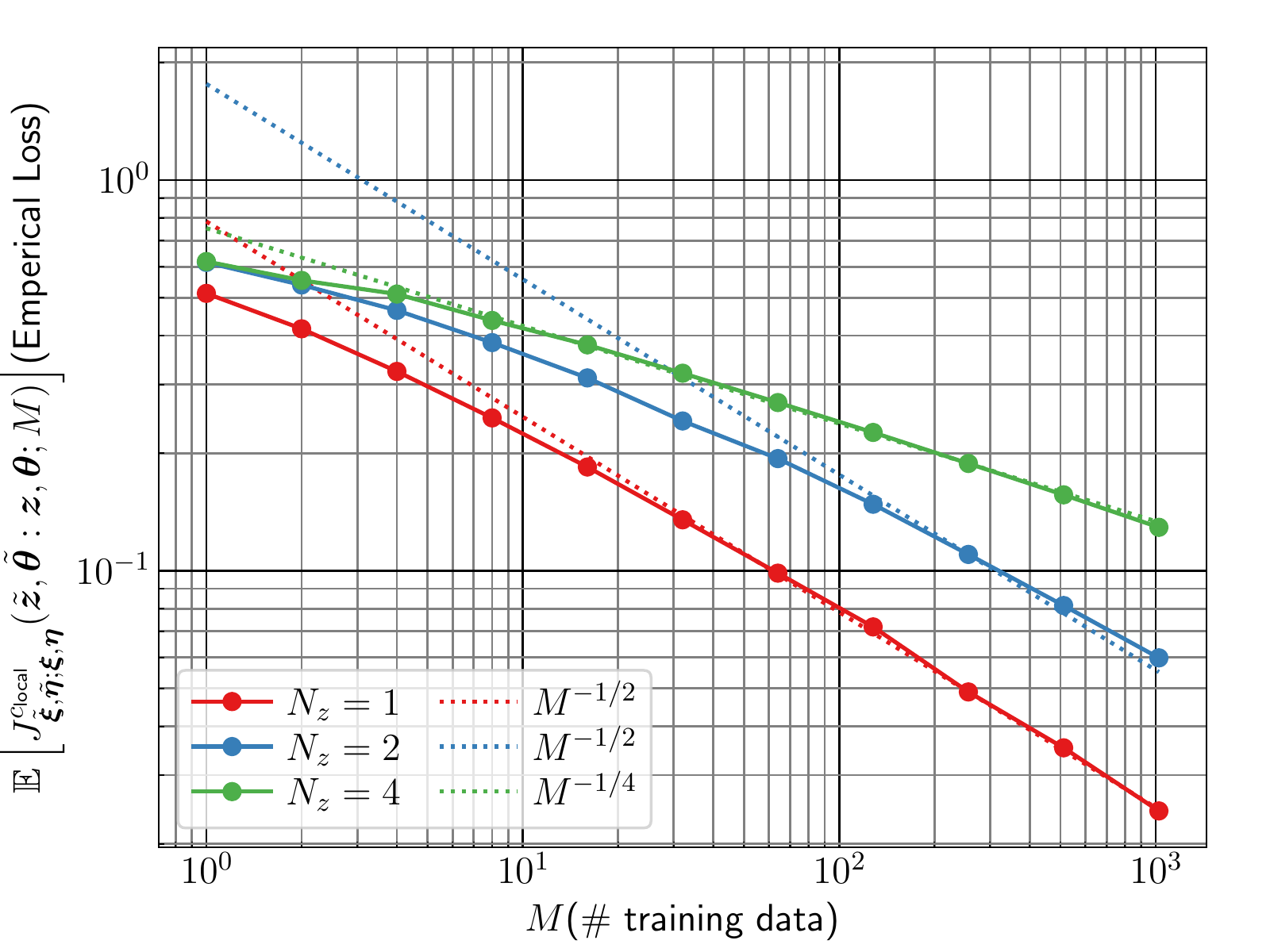}}
	\subfigure[$n=4$]{\includegraphics[width=.49\linewidth]{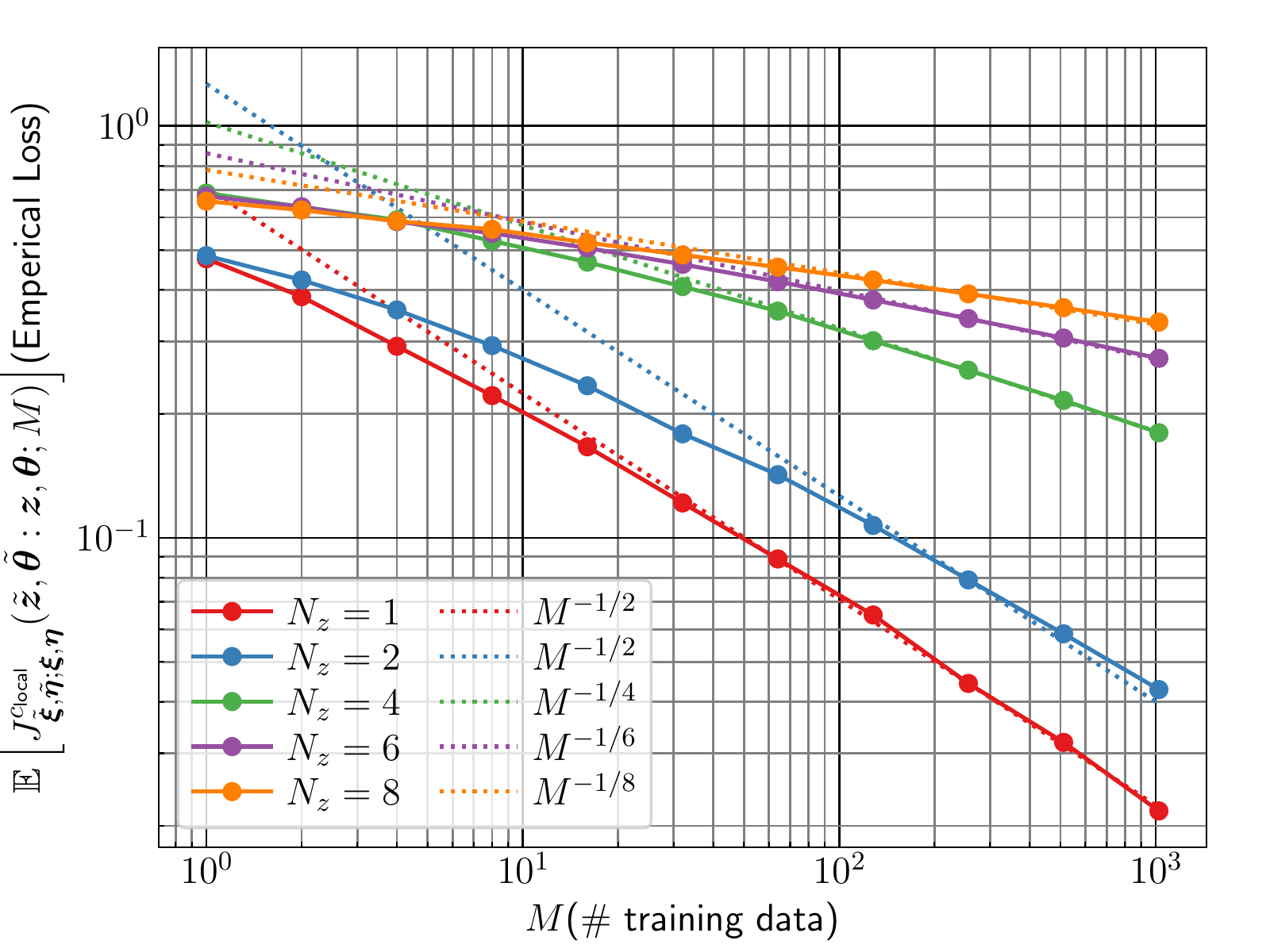}}
	\subfigure[$n=6$]{\includegraphics[width=.49\linewidth]{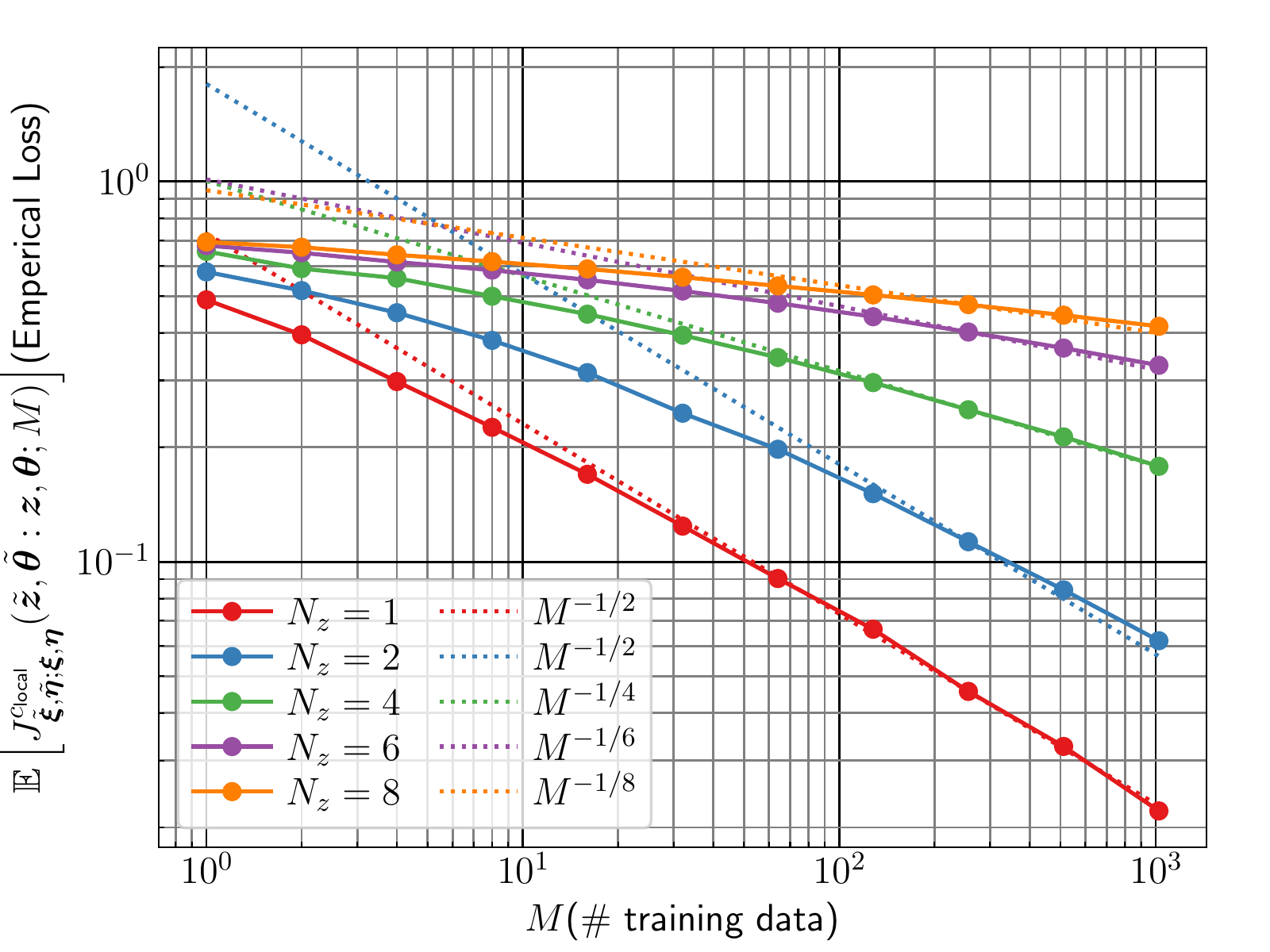}}
	\subfigure[$n=8$]{\includegraphics[width=.49\linewidth]{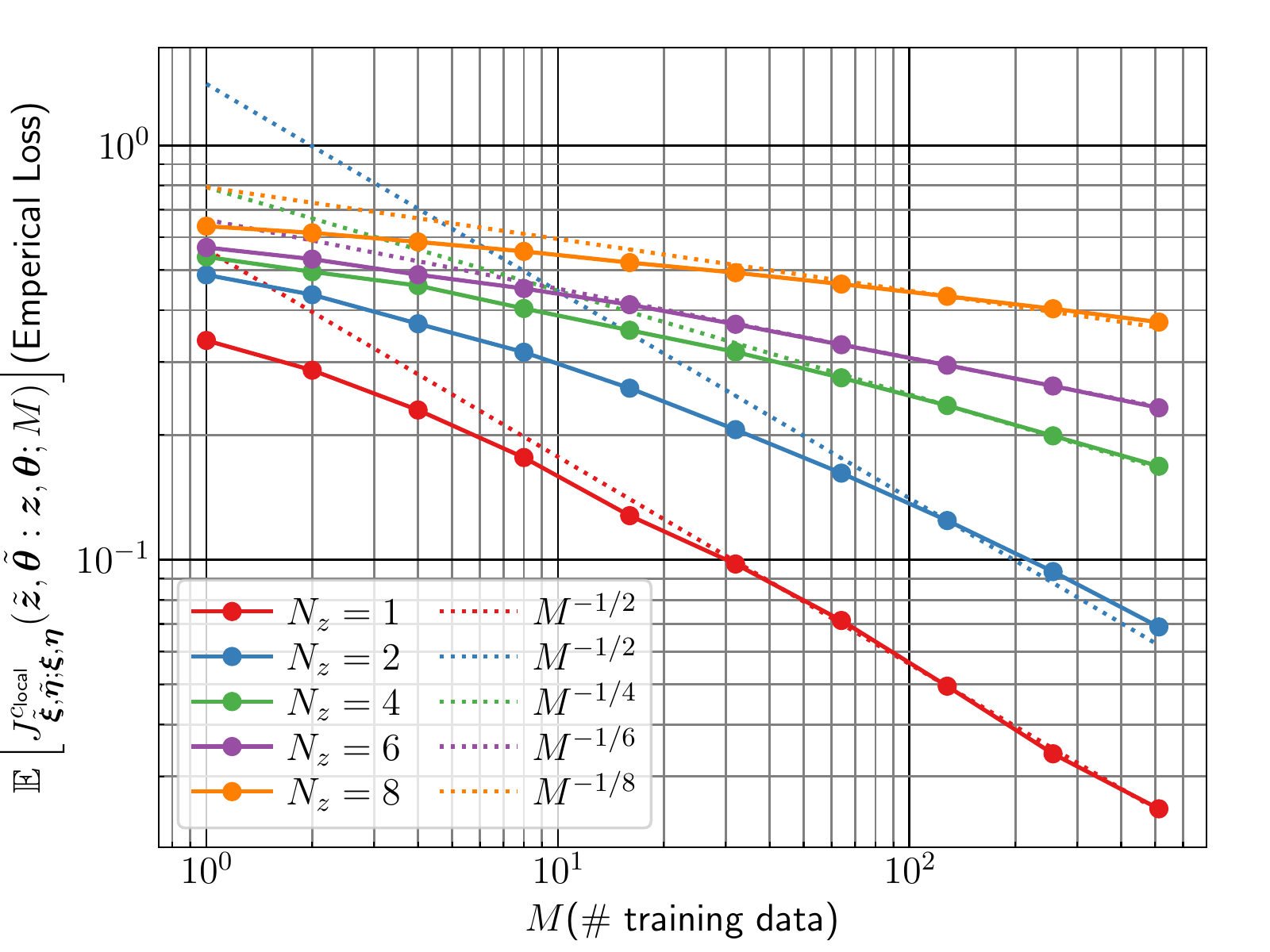}}
	\subfigure[$n=10$]{\includegraphics[width=.49\linewidth]{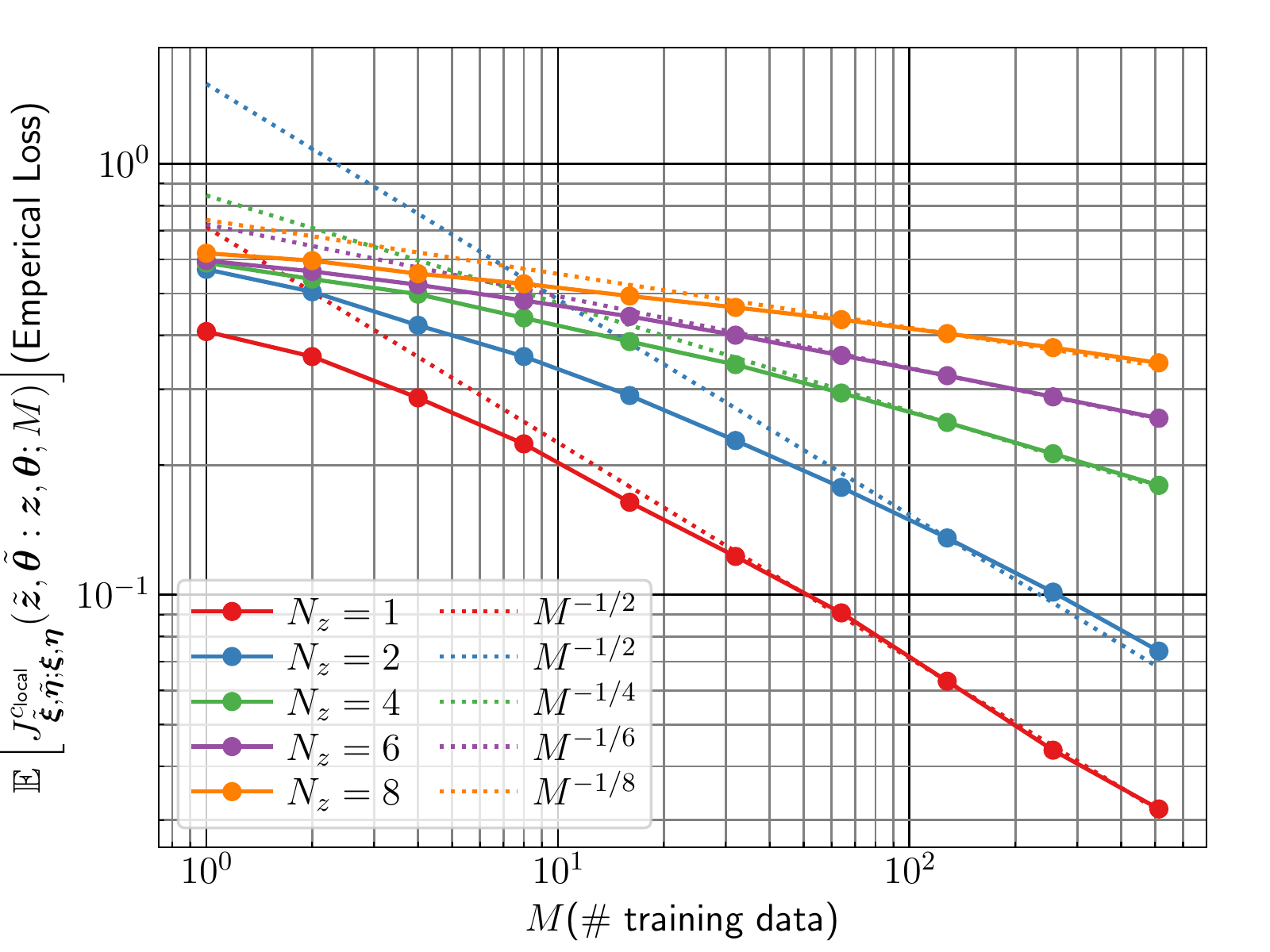}}
	\caption{
		Results of numerical simulations on the relationship between the number 
		of training data and the OTL given by  Eq.~\eqref{eq:numericalExperiment_self}, with several qubit number $n$. 
		Each subgraph shows the results for various latent dimensions $N_z$.
		For reference, the scaling curves $M^{-1/N_z}$ are added as dotted lines. 
		These graphs show that Eq.~\eqref{eq:numericalExperiment_self} mainly scale as $M^{-1/N_z}$ and are almost independent to the number of 
		qubits, $n$. 
	}\label{fig:numericalExperiment_self}
\end{figure}

{\bf Experiment B.} 
We next turn to the second experiment to confirm that the approximation error 
of the proposed OTL scales similar to 
Eq.~\eqref{eq:dimensiondependence_differentdistribution}. 
Specifically, we numerically show the dependence of the following expectation 
value on the number of training data $M$: 
\begin{equation}
	\begin{split}
		\mathbb{E}_{\tilde{\bm{z}},\bm{z}\sim U(0,1)^{N_z}}
		&\left[
			\lim_{K\to\infty}
			J^{c_{\text{local}}}_{\bm{\tilde{\xi}},\tilde{\bm{\eta}};\bm{\xi},\bm{\eta}}(\tilde{\bm{z}},\tilde{\bm{\theta}}:\bm{z},\bm{\theta};K)
			-
			J^{c_{\text{local}}}_{\bm{\tilde{\xi}},\tilde{\bm{\eta}};\bm{\xi},\bm{\eta}}(\tilde{\bm{z}},\tilde{\bm{\theta}}:\bm{z},\bm{\theta};M)
			\right],
		\label{eq:numericalExperiment_differentdistribution}
	\end{split}
\end{equation}
where $J^{c_{\text{local}}}$ is defined in Eq.~\eqref{eq:definitionNumericalEmpiricalLoss}.
In this case, we set different fixed parameters $\bm{\xi}\neq\tilde{\bm{\xi}}$ 
and $\bm{\eta}\neq\tilde{\bm{\eta}}$ for the two unitary operators in 
Eq.~\eqref{eq:definitionNumericalEmpiricalLoss}. 
The parameters used in the numerical simulation are shown in 
Table~\ref{tab:ExperimentalParamters_dimensionDependence}.

The first term of Eq.~\eqref{eq:numericalExperiment_differentdistribution} 
is an ideal quantity assuming that an infinite number of training data is available. 
Since the first term is independent of the number of training data, the second 
term is expected to take the following form, as suggested from  
Eq.~\eqref{eq:dimensiondependence_differentdistribution}:
\begin{equation}
	\mathbb{E}_{\tilde{\bm{z}},\bm{z}\sim U(0,1)^{N_z}}
	\left[
		J^{c_{\text{local}}}_{\bm{\tilde{\xi}},\tilde{\bm{\eta}};\bm{\xi},\bm{\eta}}(\tilde{\bm{z}},\tilde{\bm{\theta}}:\bm{z},\bm{\theta};M)\right]
	= a M^{-1/b} + c.
	\label{eq:fitting}
\end{equation}
To identify the parameter $b$, we use the Monte Carlo method to calculate the 
left hand side of Eq.~\eqref{eq:fitting} as a function of $M$ and then execute 
the curve-fitting via $a M^{-1/b} + c$. 
We repeat this procedure with several values of the number of qubits $n$ and 
the latent dimension $N_z$; see Appendix~\ref{sec:fittingresult} for a more 
detailed discussion. 
The result of parameter identification is depicted in Fig.~\ref{fig:result_fittingparameter}, which shows that the fitting parameter 
$b$ is almost independent to the number of qubits $n$ and linearly scales with 
respect to the latent dimension $N_z$. 
This result is indeed consistent with 
Eq.~\eqref{eq:dimensiondependence_differentdistribution}.

\begin{figure}[hbtp]
	\centering
	\subfigure[]{\includegraphics[width=.49\linewidth]{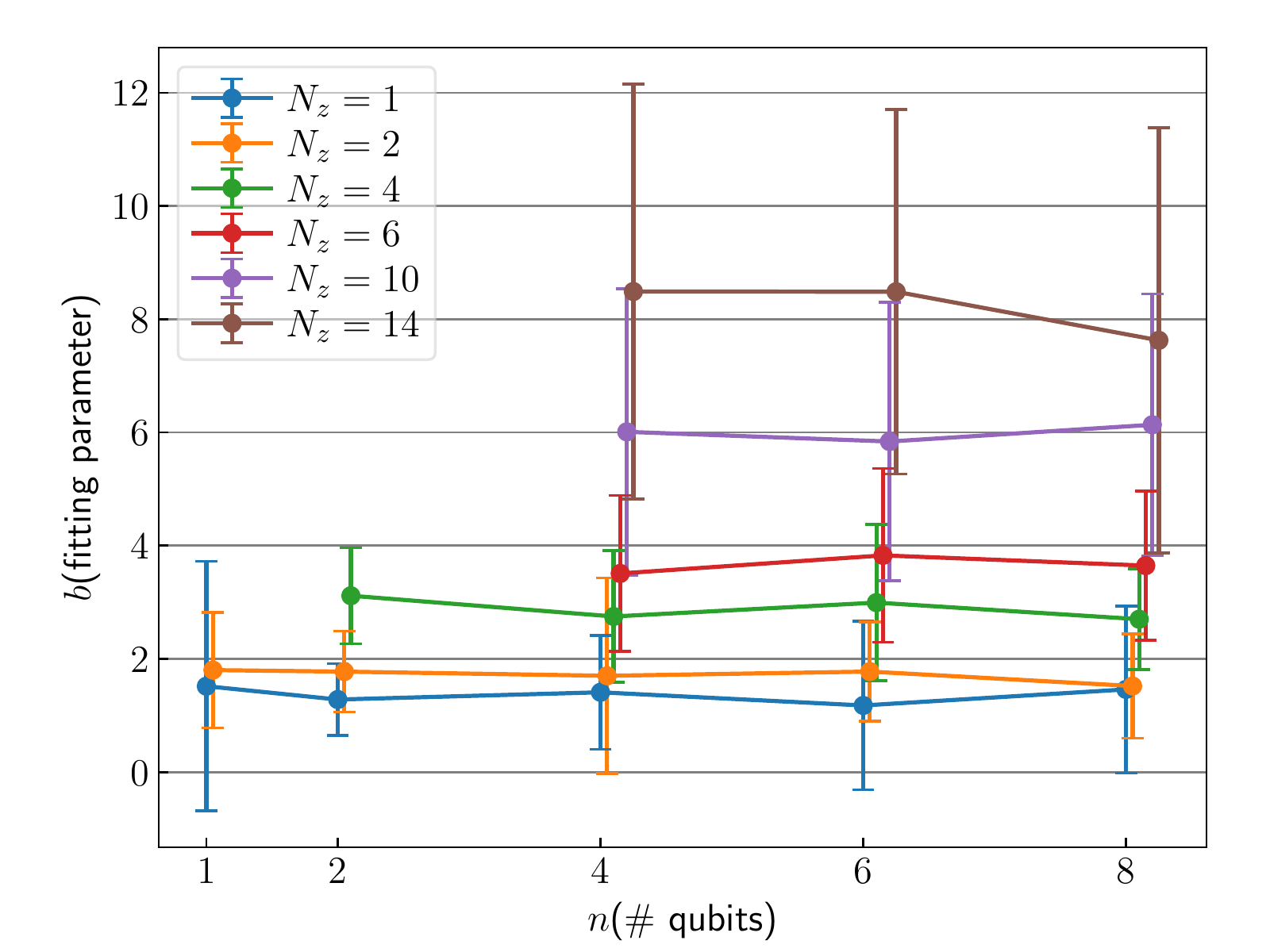}}
	\subfigure[]{\includegraphics[width=.49\linewidth]{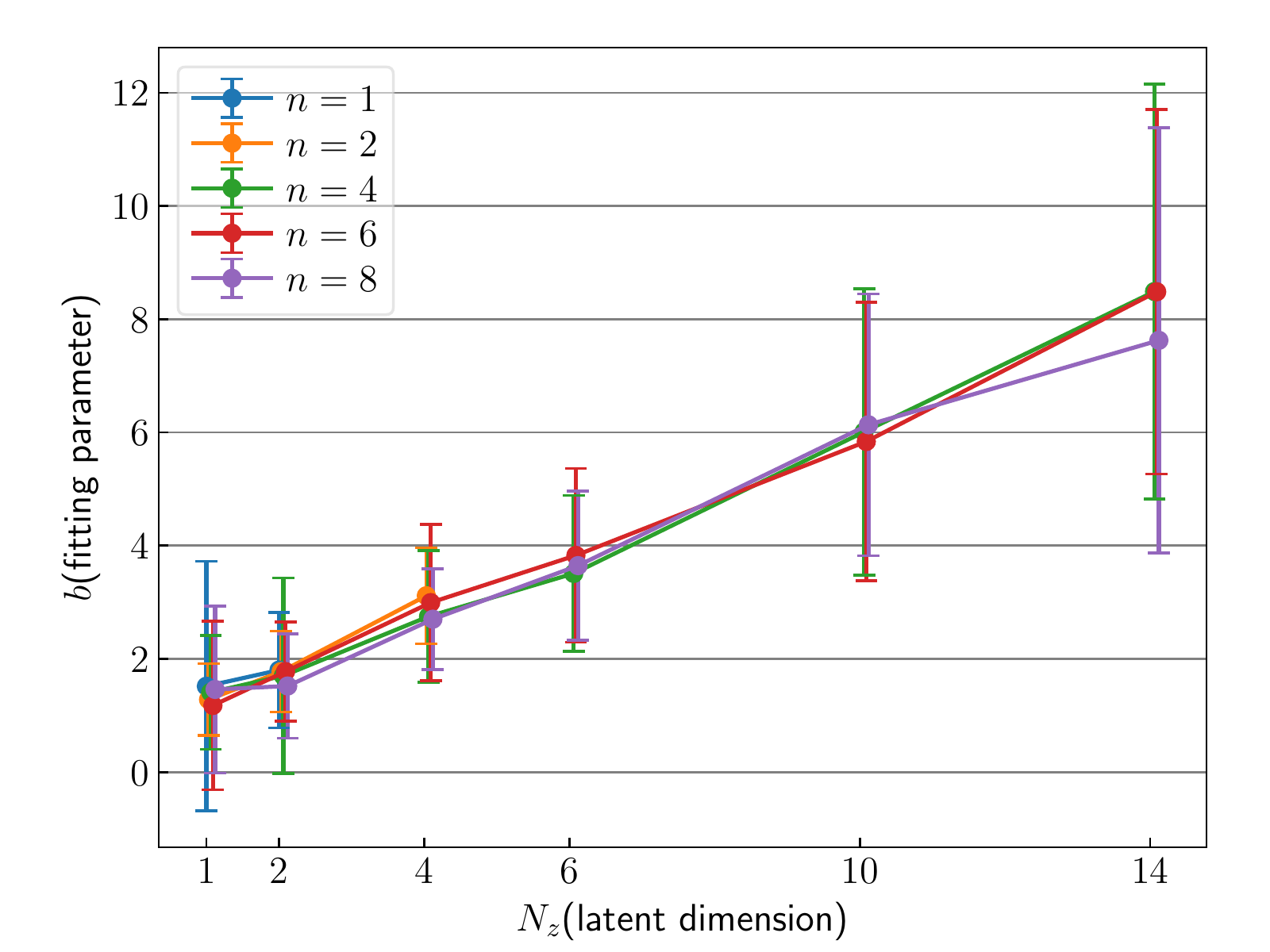}}
	\caption{
		The simulation results of the fitting parameter $b$ as a function of 
		(a) the number of qubits $n$ and (b) the latent dimension $N_z$. 
		The fitting parameter $b$ is obtained by fitting the second term of 
		Eq.~\eqref{eq:numericalExperiment_differentdistribution} by using 
		Eq.~\eqref{eq:fitting}.
		The subfigure (a) shows that $b$ is almost independent to the number 
		of qubits $n$, while the subgraph (b) shows that $b$ linearly scales 
		with the latent dimension $N_z$.
	}\label{fig:result_fittingparameter}
\end{figure}

The results obtained in Experiments~A and B suggest us to have the following 
conjecture: 
\begin{conjecture}
The scaling of the approximation error of the optimal transport loss 
\eqref{eq:quantumEmpiricalLoss} with respect to the number of training data $M$ 
is determined via the latent dimension $N_z$ as follows: 
    \begin{equation}
			\mathbb{E}_{\tilde{\bm{z}},\bm{z}\sim U(0,1)^{N_z}}
			\left[ \mathcal{L}_{c_{\rm{local}}}\left(\{ U(\tilde{\bm{z}}_i,\bm{\theta})\ket{0}^{\otimes n}\}_{i=1}^{\infty},\{ U(\bm{z}_j,\bm{\theta})\ket{0}^{\otimes n}\}_{j=1}^{M}\right) \right] 
			\lesssim O({M}^{-1/N_z}),
	\end{equation}
without respect to the number of qubits. 
\label{obs:zdependence}
\end{conjecture}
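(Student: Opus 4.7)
The plan is to dominate the local-cost OTL by a genuine $1$-Wasserstein distance on the space of pure states and then invoke Theorem~\ref{th:ConvergenceSpeedWasserstein}. The first step is the pointwise inequality $c_{\text{local}}(\ket{\psi},\ket{\phi})\le c_{\text{tr}}(\ket{\psi},\ket{\phi})$. Writing $\ket{\phi}=U\ket{0}^{\otimes n}$, the marginal probability $p^{(k)}=\mathrm{Tr}[P_0^k U^\dagger\ket{\psi}\bra{\psi}U]$ of observing $0$ on the $k$-th qubit of $U^\dagger\ket{\psi}$ is at least the joint probability of observing $0$ on \emph{all} qubits, which equals $|\braket{\psi|\phi}|^2$. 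Averaging the complementary probabilities over $k$ yields
\begin{equation*}
c_{\text{local}}(\ket{\psi},\ket{\phi})^2=\frac{1}{n}\sum_{k=1}^{n}(1-p^{(k)})\le 1-|\braket{\psi|\phi}|^2=c_{\text{tr}}(\ket{\psi},\ket{\phi})^2.
\end{equation*}

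Second, since this inequality is pointwise, every coupling $\pi$ in the linear program defining the empirical OTL satisfies $\sum_{i,j} c_{\text{local},i,j}\pi_{i,j}\le\sum_{i,j} c_{\text{tr},i,j}\pi_{i,j}$, so the minima inherit the relation
\begin{equation*}
\mathcal{L}_{c_{\text{local}}}\bigl(\beta,\hat{\beta}_{M}\bigr)\le\mathcal{L}_{c_{\text{tr}}}\bigl(\beta,\hat{\beta}_{M}\bigr)=\mathcal{W}_{1}\bigl(\beta,\hat{\beta}_{M}\bigr),
\end{equation*}
where $\beta$ is the push-forward of $U(0,1)^{N_z}$ through the ansatz $\bm{z}\mapsto U(\bm{z},\bm{\theta})\ket{0}^{\otimes n}$, $\hat{\beta}_{M}$ is its $M$-sample empirical version, and $\mathcal{W}_1$ is the $1$-Wasserstein distance under the trace metric. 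Applying Theorem~\ref{th:ConvergenceSpeedWasserstein} then gives $\mathbb{E}[\mathcal{W}_1(\beta,\hat{\beta}_{M})]\lesssim O(M^{-1/s})$ for every $s>d_{1}^{*}(\beta)$.

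The remaining task is to establish $d_{1}^{*}(\beta)\le N_z$, after which any $s$ slightly above $N_z$ delivers the conjectured rate. Because single-qubit Pauli rotations are $1$-Lipschitz in their angle and the entangling gates $W$ are fixed unitaries, the ansatz map $\bm{z}\mapsto U(\bm{z},\bm{\theta})\ket{0}^{\otimes n}$ is Lipschitz from $[0,1]^{N_z}$ into the set of pure states equipped with the trace distance (with Lipschitz constant depending on $N_L$, $\bm{\theta}$ and the ansatz structure, but not on $M$). A Lipschitz image of $[0,1]^{N_z}$ admits $\varepsilon$-coverings of cardinality $O(\varepsilon^{-N_z})$, and plugging this covering-number estimate into the definition of the upper Wasserstein dimension (Definition~4 of \cite{weed2019sharp}) yields $d_{1}^{*}(\beta)\le N_z$ as required.

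The main obstacle is this last step: rigorously converting the intuitive $N_z$-dimensionality of $\mathrm{supp}(\beta)$ into a bound on $d_{1}^{*}(\beta)$ in the sense of \cite{weed2019sharp}. The ansatz map is typically neither injective nor bi-Lipschitz, so one cannot directly invoke a dimension formula; instead, one must chain covering-number estimates through the Lipschitz bound, which may hide a prefactor that scales (polynomially) in $n$ even though the \emph{exponent} does not. A secondary subtlety is that Theorem~\ref{th:ConvergenceSpeedWasserstein} controls only the expectation, so any stronger high-probability version of the conjecture would additionally require concentration of $\mathcal{W}_{1}(\beta,\hat{\beta}_{M})$ around its mean. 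The first two steps, by contrast, are short and use only monotonicity of linear programming and the known Wasserstein convergence rate.
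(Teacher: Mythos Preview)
The paper does not prove this statement: it is explicitly labeled a \emph{conjecture}, and immediately after stating it the authors write that ``Proving this conjecture would be challenging and is the subject of future work.'' The only support in the paper is the numerical evidence of Experiments~A and~B in Section~\ref{sec:dependenceDataDimension}. There is therefore no paper proof to compare against; your proposal already goes further than the authors do.

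Your first two steps are correct and clean. The pointwise inequality $c_{\text{local}}\le c_{\text{tr}}$ follows exactly as you say, since each marginal probability $p^{(k)}$ dominates the joint probability $|\braket{\psi|\phi}|^2$ of the all-zeros outcome of $U^\dagger\ket{\psi}$; monotonicity of the linear program in the ground cost then yields $\mathcal{L}_{c_{\text{local}}}(\beta,\hat\beta_M)\le \mathcal{W}_1(\beta,\hat\beta_M)$. Combined with the Lipschitz estimate for the ansatz map and the covering-number machinery of \cite{weed2019sharp}, this does establish the conjectured exponent (up to an arbitrarily small loss and an $n$-dependent prefactor) whenever $N_z>2$.

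The genuine obstruction, however, is not the one you flag. The covering-number step---a Lipschitz image of $[0,1]^{N_z}$ admits $\varepsilon$-nets of size $O(\varepsilon^{-N_z})$---is standard and goes through. The real issue is that in Definition~4 of \cite{weed2019sharp} the infimum defining $d_p^*$ runs only over $s\in(2p,\infty)$, so $d_1^*(\beta)\ge 2$ \emph{always}, no matter how low-dimensional $\mathrm{supp}(\beta)$ is; this reflects the universal $M^{-1/2}$ floor for $\mathcal{W}_1$ convergence. Hence the sharpest rate your comparison can ever yield is $M^{-1/2}$, which does \emph{not} imply the conjectured $O(M^{-1/N_z})$ bound when $N_z=1$ (and is only borderline at $N_z=2$). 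Since the paper's numerics explicitly include $N_z=1$ and fit $M^{-1}$ there, your route cannot recover the conjecture in full---consistent with the authors' own assessment that it remains open.
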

This conjecture means that the proposed OTL can be efficiently computed 
via sampling, when the intrinsic dimension of the data and the latent 
dimension are sufficiently small. 
Proving this conjecture would be challenging and is the subject of future work.

\subsection{Approximation error as a function of the number of shots}
\label{sec:dependenceShotNumber}

\begin{figure}[hbtp]
	\centering
	\subfigure[$N_z=1$]{\includegraphics[width=.49\linewidth]{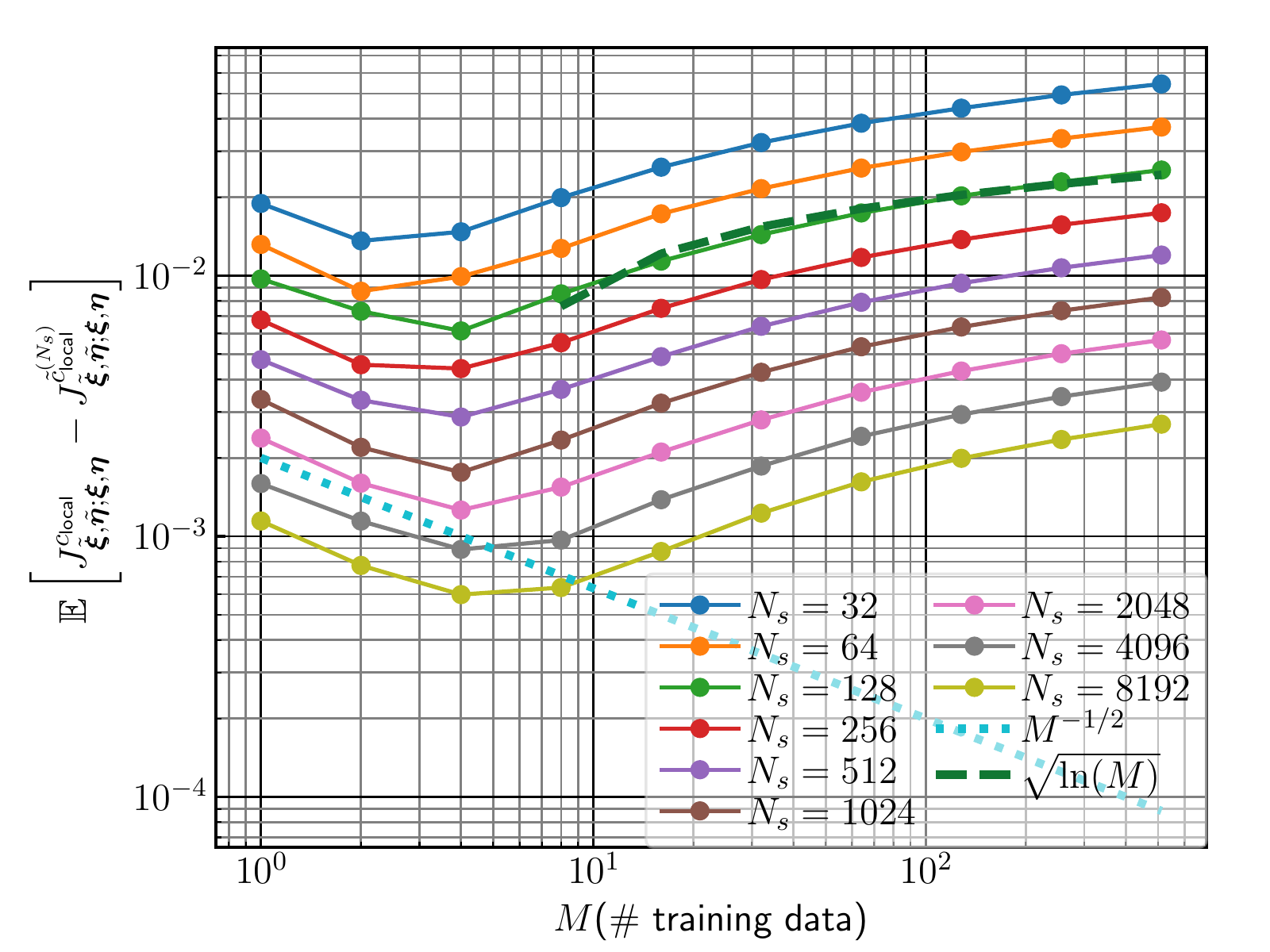}
		\includegraphics[width=.49\linewidth]{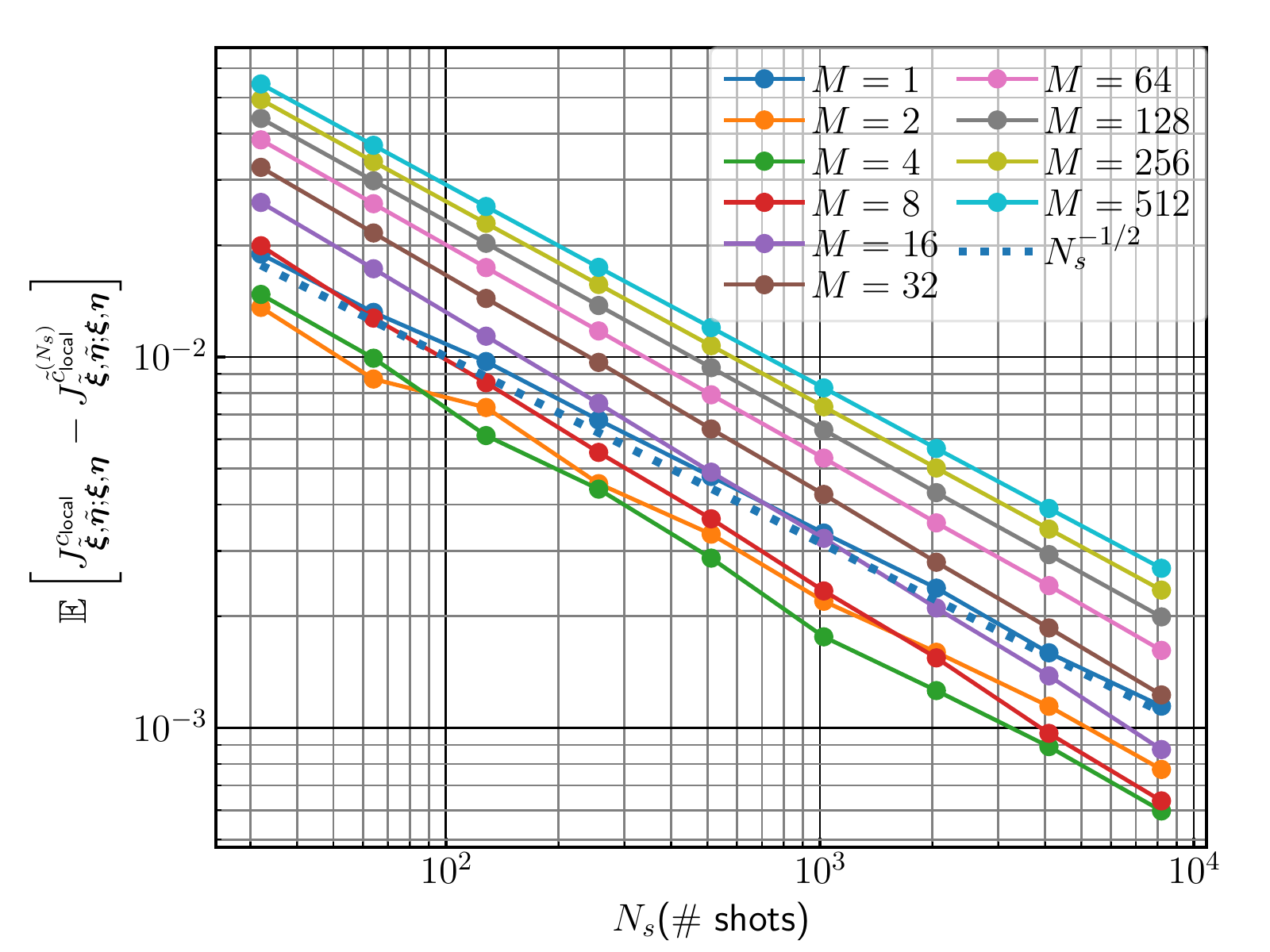}}
	\subfigure[$N_z=2$]{\includegraphics[width=.49\linewidth]{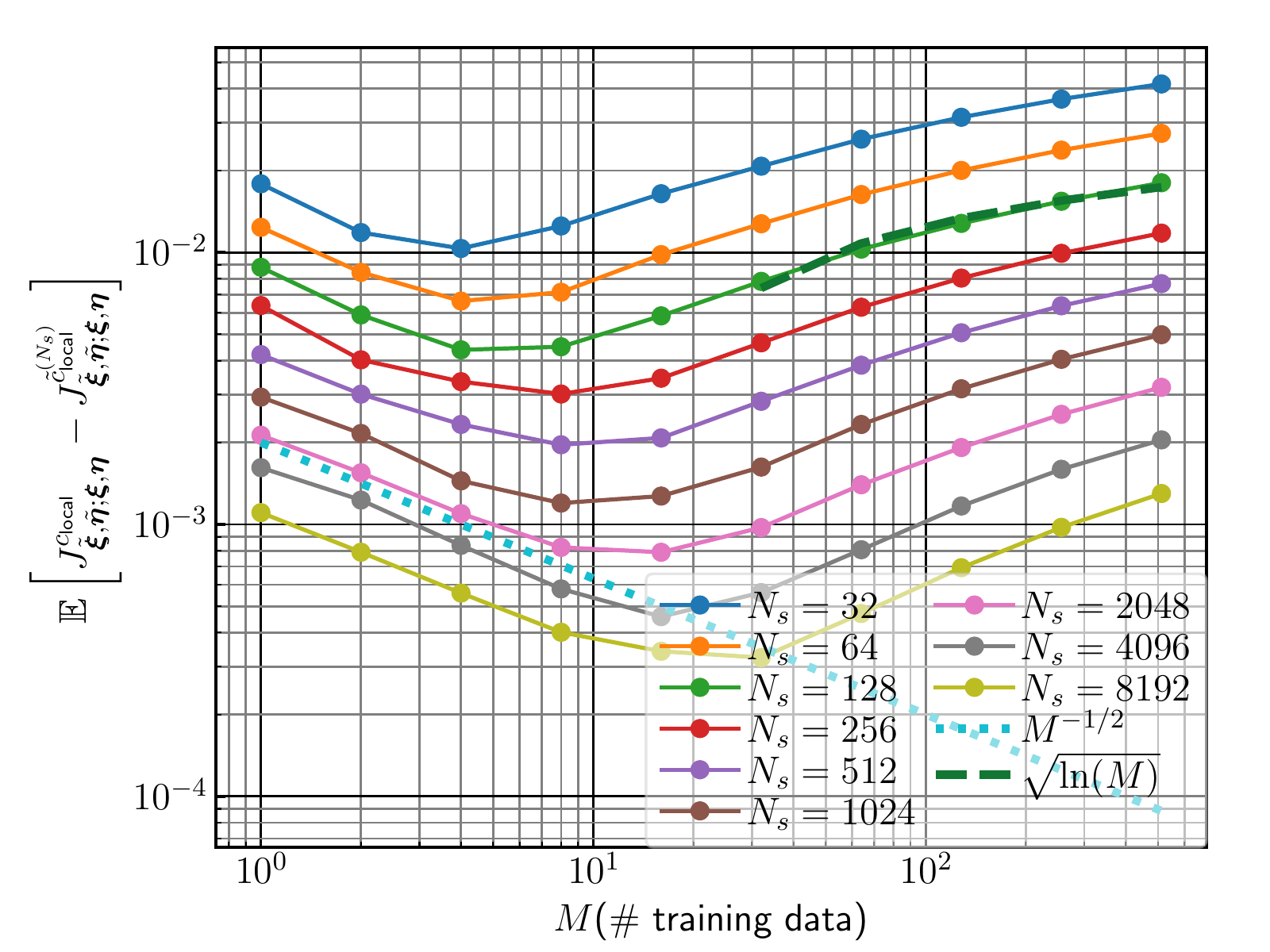}
		\includegraphics[width=.49\linewidth]{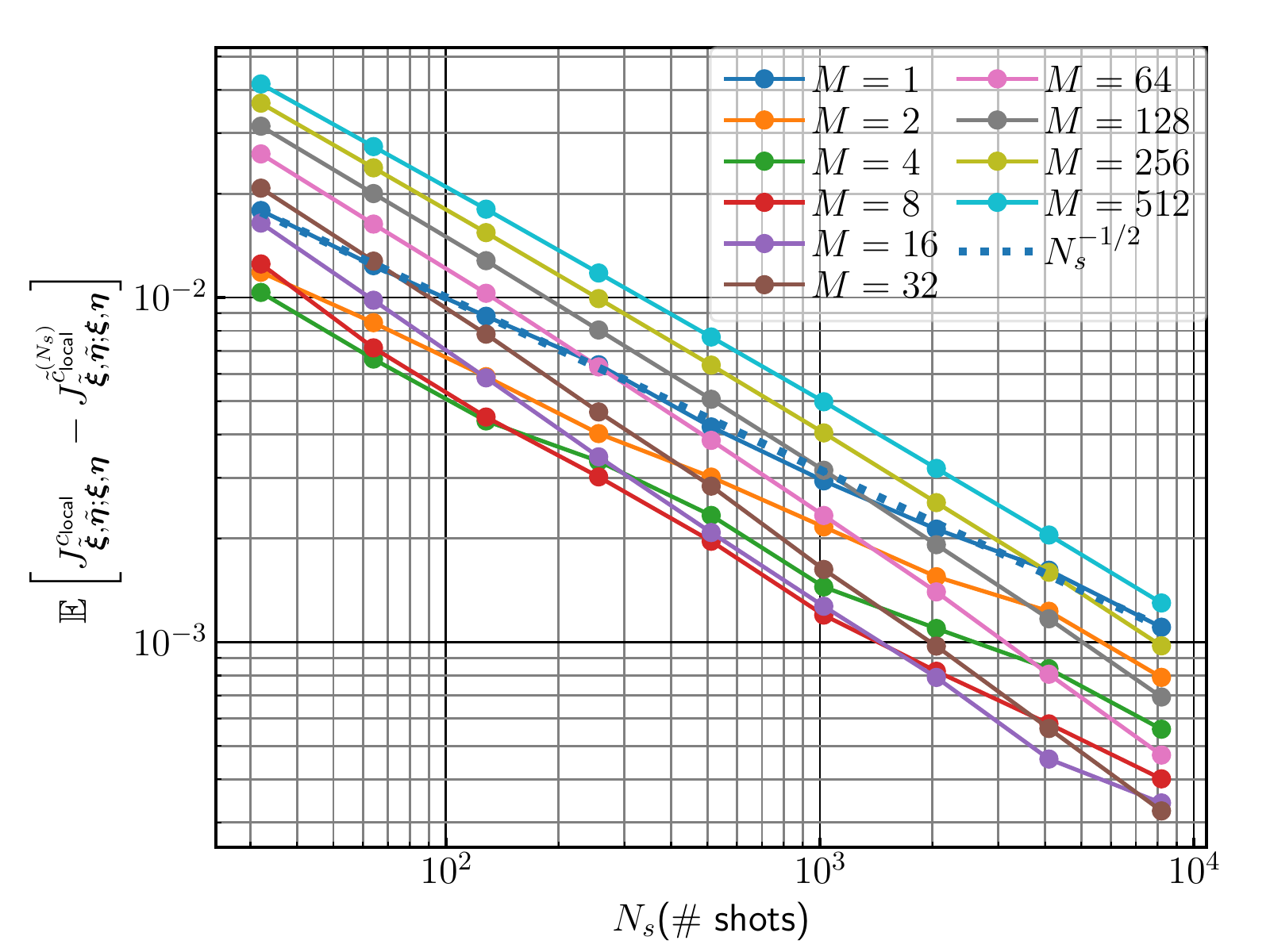}}
	\subfigure[$N_z=4$]{\includegraphics[width=.49\linewidth]{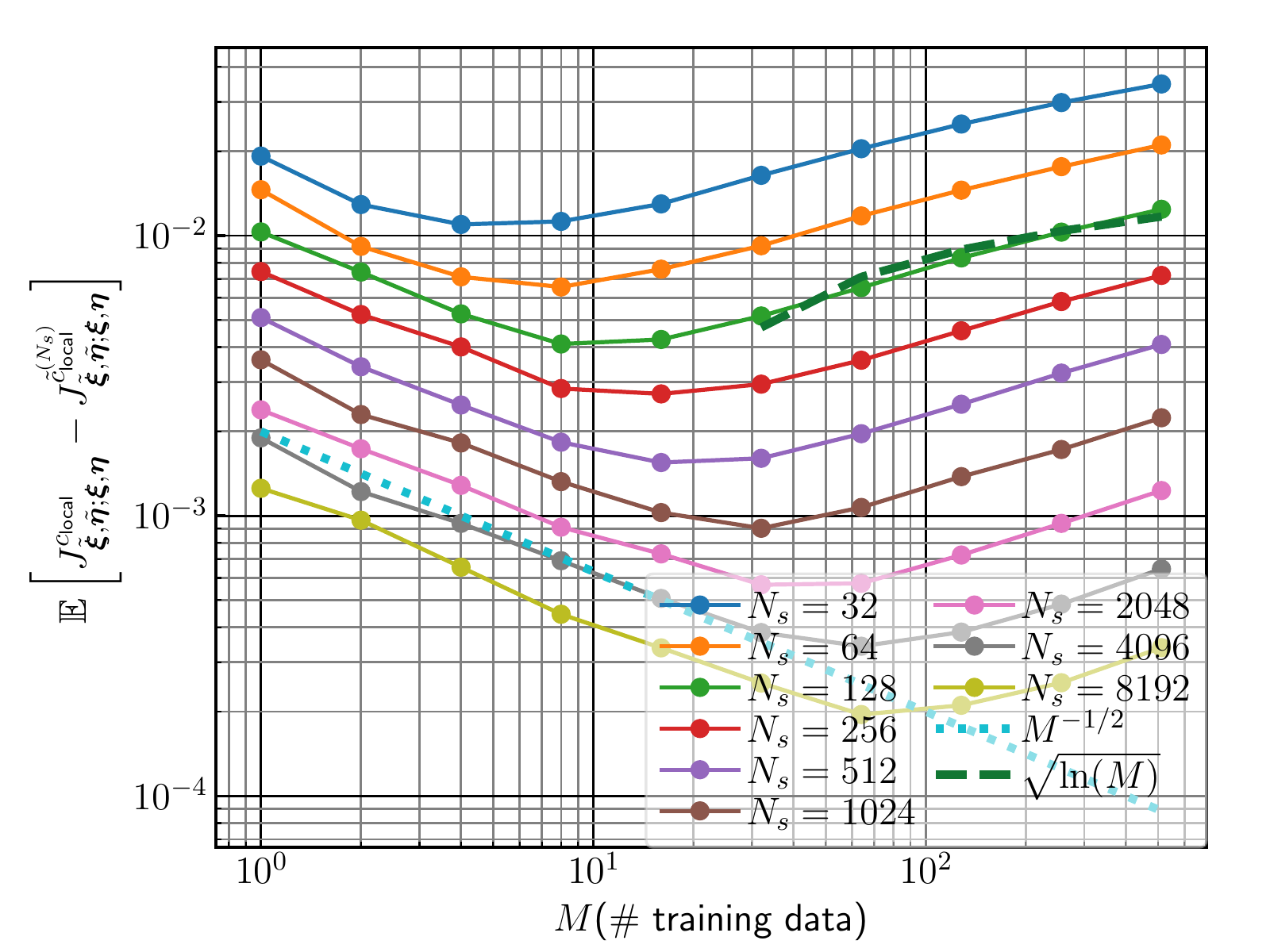}
		\includegraphics[width=.49\linewidth]{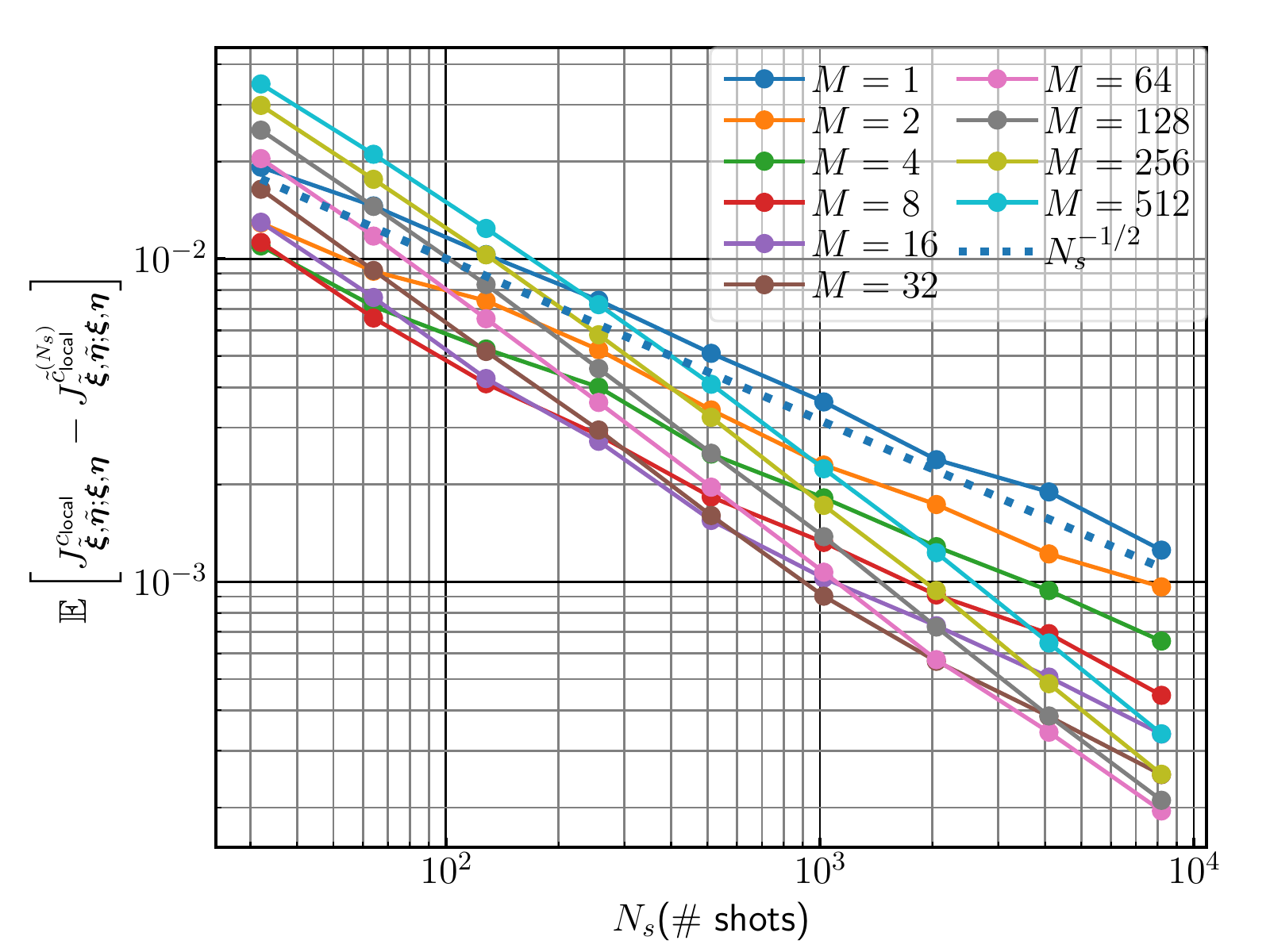}}
	\caption{
		Simulation results of the approximation error of the OTL due to 
		the number of shots defined in Eq.~\eqref{eq:shotDependence}. 
		The left and right panels show the dependence of the error on the number 
		of training data $M$ and shots $N_s$, respectively.
		The results for different latent dimension $N_z$ are shown in the figures 
		(a), (b), and (c). 
		For reference, we added the curve of $M^{-1/2}$ with the dotted line 
		in the left panels. 
		The fitting result of the points $N_s =128$ with the curve $\sqrt{c_1\ln(M)+c_2}$ is added as the dashed line in the left panels. 
		Also, we added the curve of $N_s^{-1/2}$ as the dotted line in the 
		right panels. 
	}\label{fig:numericalExperiment_shotDependence}
\end{figure}

The error analysis in Sec.~\ref{sec:dependenceDataDimension} assumes that 
the number of shot is infinite and thereby the ground cost between quantum 
states can be perfectly determined. 
Here, we analyze the effect of the finiteness of the number of shots on the 
approximation error.
The following proposition serves as a basis of the analysis.

\begin{proposition}
	Let $\tilde{c}^{(N_s)}_{\rm{local}}$ be an estimator of the ground cost $c_{\rm{local}}$ of Eq.~\eqref{eq:localcost} using $N_s$ samples.
	Suppose that the support of two different probability distributions are strictly separated; as a result, there exists a lower bound $g>0$ to 
	the ground cost for any $i,j\in\{1,2,\ldots,M\}$, 
	i.e., $(c_{\rm{local}}(\ket{\psi_i},U(\bm{z}_j,\bm{\theta})\ket{0}^{\otimes n})>g, 
	\forall i,j)$.
	Then, for any positive constant $\delta$, the following inequality holds: 
	\begin{equation}
		\begin{split}
			P&\left( \Big|
				\mathcal{L}_{c_{\rm{local}}}\left(\{\ket{\psi}_i\}_{i=1}^{M},\{ U(\bm{z}_j,\bm{\theta})\ket{0}^{\otimes n}\}_{j=1}^{M}\right)
			-
			\mathcal{L}_{\tilde{c}^{(N_s)}_{\rm{local}}}\left(\{\ket{\psi}_i\}_{i=1}^{M},\{ U(\bm{z}_j,\bm{\theta})\ket{0}^{\otimes n}\}_{j=1}^{M}\right)
			\Big|\right. \\
			&\hspace{4cm} \left.\ge  \sqrt{\frac{2M}{\delta}}\sqrt{ \frac{1-g}{N_s} + \frac{(1-g)^2}{4N_s^2 g}}+\frac{1-g}{2 N_s \sqrt{g}}\right)  \le \delta.
		\end{split}
	\end{equation}
	\label{prop:err_shotdependence}
\end{proposition}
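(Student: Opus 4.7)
The plan is to prove the proposition in three coordinated steps: a linear-programming perturbation that bounds $|\mathcal{L}_{c_{\text{local}}} - \mathcal{L}_{\tilde c^{(N_s)}_{\text{local}}}|$ by a transport-weighted sum of the pointwise errors $|c_{\text{local},i,j} - \tilde c^{(N_s)}_{\text{local},i,j}|$; a bias--variance decomposition of each pointwise error exploiting the separation hypothesis $c_{\text{local},i,j}\ge g$; and Chebyshev's inequality to upgrade the resulting moment bound into the stated probabilistic estimate.

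For the LP perturbation step, since the optimizers $\pi^{*}$ and $\tilde{\pi}^{*}$ of the two transport problems are each feasible for the other's constraints, the standard suboptimality argument gives
\[
\bigl|\mathcal{L}_{c_{\text{local}}} - \mathcal{L}_{\tilde c^{(N_s)}_{\text{local}}}\bigr|
\le \max\Bigl(\sum_{i,j}\pi^{*}_{i,j}\bigl|c_{\text{local},i,j}-\tilde c^{(N_s)}_{\text{local},i,j}\bigr|,\; \sum_{i,j}\tilde{\pi}^{*}_{i,j}\bigl|c_{\text{local},i,j}-\tilde c^{(N_s)}_{\text{local},i,j}\bigr|\Bigr).
\]
Because $\sum_{i,j}\pi_{i,j}=1$ and each $\pi_{i,j}\le 1/M$ from the marginal constraints in Eq.~\eqref{eq:quantumEmpiricalLoss}, Jensen's inequality for $x\mapsto x^2$ applied to the probability measure $\pi$, followed by enlarging the weighted sum to the uniform one, yields $|\mathcal{L}_{c_{\text{local}}} - \mathcal{L}_{\tilde c^{(N_s)}_{\text{local}}}|^2 \le (1/M)\sum_{i,j}(c_{\text{local},i,j} - \tilde c^{(N_s)}_{\text{local},i,j})^2$. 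Taking expectations over the shot randomness then produces a factor of $M\cdot\max_{i,j}\mathbb{E}[(c_{\text{local},i,j} - \tilde c^{(N_s)}_{\text{local},i,j})^2]$, which is the origin of the $\sqrt{M}$ appearing inside the square root of the advertised bound.

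For the pointwise error I write $c_{\text{local},i,j} = \sqrt{X_{i,j}}$ and $\tilde c^{(N_s)}_{\text{local},i,j} = \sqrt{\bar X_{i,j}}$, where $\bar X_{i,j}$ is the empirical average of $N_s$ independent per-shot outcomes in $[0,1]$. The identity $\sqrt{X} - \sqrt{\bar X} = (X-\bar X)/(\sqrt{X}+\sqrt{\bar X})$ combined with the deterministic lower bound $\sqrt{X_{i,j}}\ge g$ (from $c_{\text{local},i,j}\ge g$) yields the variance bound $\mathrm{Var}(\tilde c^{(N_s)}_{\text{local},i,j})\le (1-g)/N_s$, while a second-order Taylor expansion of $\sqrt{\cdot}$ around $X_{i,j}$, whose remainder is controlled using $X_{i,j}\ge g^2$, gives the bias bound $|\mathbb{E}[\tilde c^{(N_s)}_{\text{local},i,j}] - c_{\text{local},i,j}|\le (1-g)/(2 N_s \sqrt{g})$. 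Together, via $\mathbb{E}[(\tilde c - c)^2] = \mathrm{Var}(\tilde c) + (\mathrm{bias})^2$, these produce the moment bound $(1-g)/N_s + (1-g)^2/(4 N_s^2 g)$. Decomposing the pointwise error as a centered contribution plus a deterministic bias, applying Chebyshev's inequality to the centered contribution, and carrying the bias through as a separate additive term, then assembles the bound stated in the proposition.

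The main obstacle I foresee is the bias control in the second step: the hypothesis $c_{\text{local},i,j}\ge g$ gives $X_{i,j}\ge g^2$, but it does \emph{not} provide a uniform lower bound on the random quantity $\bar X_{i,j}$, so the second-order Taylor remainder for $\sqrt{\cdot}$ cannot be handled pathwise. Resolving this will likely require first invoking Jensen's inequality to fix the sign $\mathbb{E}[\sqrt{\bar X}]\le\sqrt{X}$ (which holds without any lower bound on $\bar X$), and then estimating the magnitude of the remainder using only the deterministic lower bound $X_{i,j}\ge g^2$, perhaps via a truncation or conditioning argument that treats the rare event $\{\bar X_{i,j}\text{ small}\}$ separately before combining with the generic bound on the complementary event.
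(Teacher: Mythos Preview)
Your plan is broadly sound and would yield a result with the same $\sqrt{M/N_s}$ scaling, but it follows a genuinely different route from the paper and produces a bound with slightly different constants.

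The paper does \emph{not} pass through an $L^2$-type perturbation as you do. Instead, it proves a sup-norm stability lemma: if $|c^1_{i,j}-c^2_{i,j}|<t$ for every $(i,j)$ in the union of supports of the two optimal plans, then $|\mathcal L_{c^1}-\mathcal L_{c^2}|<t$. The key observation is that an optimal transport plan between two $M$-point empirical measures has at most $M$ nonzero entries (a vertex of the Birkhoff-type polytope), so the union of the two supports has at most $2M$ points. The paper then applies Chebyshev \emph{per pair} $(i,j)$ with failure probability $\delta/(2M)$ (so $k=\sqrt{2M/\delta}$), and union-bounds over those $\le 2M$ pairs. This is where the $\sqrt{2M/\delta}$ factor and the separate additive bias term $(1-g)/(2N_s\sqrt{g})$ come from: the bias is handled deterministically by the triangle inequality $|\sqrt{\bar X}-\sqrt p|\le|\sqrt{\bar X}-\mathbb E\sqrt{\bar X}|+|\mathbb E\sqrt{\bar X}-\sqrt p|$, with Chebyshev applied only to the first piece. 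Your route---Jensen on $(\sum\pi|e|)^2$, then $\pi_{i,j}\le 1/M$, then global Markov on the second moment---is a legitimate alternative that avoids the union bound entirely but folds the bias into the second moment, so you would land on $\sqrt{M/\delta}\sqrt{V}$ rather than the stated $\sqrt{2M/\delta}\sqrt{\mathrm{Var}}+\mathrm{bias}$.

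On the obstacle you flag: the paper does not need truncation or conditioning. It controls the bias $\sqrt p-\mathbb E[\sqrt{\bar X}]$ via the elementary polynomial inequality $\sqrt r\ge(3r-r^2)/2$ for all $r\ge0$, applied with $r=\bar X/p$; taking expectations uses only $\mathbb E[\bar X]=p$ and $\mathbb E[\bar X^2]=p^2+\mathrm{Var}(\bar X)$, never a pathwise lower bound on $\bar X$. This immediately gives $\mathbb E[\sqrt{\bar X}]\ge\sqrt p-(1-p)/(2N_s\sqrt p)$, and the variance bound then follows from $\mathrm{Var}(\sqrt{\bar X})=p-(\mathbb E[\sqrt{\bar X}])^2$. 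This is cleaner than the Taylor-remainder route you sketch and sidesteps the issue you identified.
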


The proof is shown in Appendix~\ref{sec:proof_shotdependence}. 
Proposition~\ref{prop:err_shotdependence} states that the approximation error 
of the OTL is upper bounded by a constant of the order 
$O(\sqrt{M/N_s})$, under the condition $M\gg 1$ and $N_s \gg 1$  . 
Therefore, if Observation~\ref{obs:zdependence} is true, the approximation 
error due to the finiteness of $N_s$ and $M$ is upper bounded by  
$O(M^{-1/N_z})+O\left(\sqrt{{M}/N_s}\right)$, where $N_z$ is the latent 
dimension.

Then, we provide a numerical simulation to show the averaged approximation error 
as a function of $M$ as well as $N_s$. 
Again, we employ the hardware efficient ansatz shown in 
Fig.~\ref{fig:ansatzcircuit} of Sec.~\ref{sec:dependenceDataDimension}. 
The purpose of the numerical simulation is to see the dependence of the 
following expectation value on $M$ and $N_s$. 
\begin{equation}
	\mathbb{E}_{\tilde{\bm{z}},\bm{z}\sim U(0,1)^{N_z}}\left[
	\left|
	J^{\tilde{c}^{(N_s)}_{\text{local}}}_{\bm{\tilde{\xi}},\tilde{\bm{\eta}};\bm{\xi},\bm{\eta}}(\tilde{\bm{z}},\tilde{\bm{\theta}}:\bm{z},\bm{\theta};M)
	-
	J^{c_{\text{local}}}_{\bm{\tilde{\xi}},\tilde{\bm{\eta}};\bm{\xi},\bm{\eta}}(\tilde{\bm{z}},\tilde{\bm{\theta}}:\bm{z},\bm{\theta};M)
	\right|
	\right],
	\label{eq:shotDependence}
\end{equation}
where $J^{\tilde{c}^{(N_s)}_{\text{local}}}$ can be computed via 
Eq.~\eqref{eq:definitionNumericalEmpiricalLoss}. 
As in the numerical simulation in Sec.~\ref{sec:dependenceDataDimension}, we 
approximate the expectation 
$\mathbb{E}_{\tilde{\bm{z}},\bm{z}\sim U(0,1)^{N_z}}[\bullet]$ by 
Monte Carlo sampling of $\tilde{\bm{z}}$ and $\bm{z}$ from the uniform 
distribution $U(0,1)^{N_z}$. 
Also, we randomly choose the fixed parameters  
$\bm{\xi},\bm{\eta},\bm{\theta},\tilde{\bm{\xi}},\tilde{\bm{\eta}},\tilde{\bm{\theta}}$ prior to the simulation.
The other simulation parameters are given in 
Table~\ref{tab:ExperimentalParamters_shotDependence}. 
Simulation results are depicted in Fig.~\ref{fig:numericalExperiment_shotDependence}, where the notable points are summarized as follows. 
\begin{itemize}
	\item In the range of small number of training data $M$, the approximation 
	error is roughly proportional to $M^{-1/2}$.
	\item In the range of large number of training data $M$, the approximation 
	error takes $\sqrt{c_1 \ln M+c_2}$ with constants $c_1$ and $c_2$.
	
	\item The dependence of the error on the number of shots $N_s$ is roughly proportional to $N_s^{-1/2}$.
\end{itemize}
Appendix~\ref{sec:explain_shotDependence} provides an intuitive explanation 
of these results. 
In particular, it is important to know that we need to choose a proper number 
of samples to reduce the approximation error.

\begin{table}[tb]
	\begin{center}
		\caption{List of parameters for numerical simulation of Sec.~\ref{sec:dependenceShotNumber}}
		\begin{tabular}{cll} \hline
			number of measurements        & $N_{s}$            & $\{2^{i+7}|\ i\in \{0,1,2,\ldots,7\}\}$, $\infty$(statevector) \\
			number of training data             & $M$                & $\{2^i|\ i\in \{0,1,2,\ldots,10\}\}$                           \\
			number of Monte Carlo samples & $N_{\text{Monte}}$ & $256$                                                          \\
			number of qubits              & $n$                & $8$                                                            \\
			dimension of latent space     & $N_z$              & $\{1,2,4\}$                                                    \\
			number of layers              & $N_L$              & $3+\lfloor N_z/n\rfloor$                                       \\
			\hline
		\end{tabular}
		\label{tab:ExperimentalParamters_shotDependence}
	\end{center}
\end{table}

\subsection{Avoidance of the vanishing gradient issue}
\label{sec:barrenplateau}

In Sec.~\ref{sec:vanishingGradientVQA} we chose the cost function composed 
of the local measurements, as a least condition to avoid the vanishing 
gradient issue. 
Note that, however, employing a local cost is not enough to avoid this issue; 
for instance, Ref.~\cite{cerezo2021cost} proposed the method of using a special type of 
parametrized quantum circuit called the {\it alternating layered ansatz (ALA)} 
in addition to using the local cost, which is actually proven to avoid the 
issue. 
Nevertheless, we here numerically demonstrate that our method can certainly 
mitigate the decrease of gradient even without such additional condition, 
compared to the case with global cost.

More specifically, we calculated the expectation of the variance of the partial 
derivative of the OTL \eqref{eq:quantumEmpiricalLoss}, based on the training 
ensemble $\{\ket{\psi_i}\}^{M}_{i=1}$ and the sampled data from the generative 
model $\{U(\bm{z}_j,\bm{\theta})\ket{0}^{\otimes n}\}^{M}_{j=1}$; 
\begin{equation}
\label{eq:barren plateau}
	\mathbb{V}_{\bm{\xi},\bm{\eta},\bm{\theta},\bm{z}}\left[\frac{\partial}{\partial \theta}
	\mathcal{L}_{c_{\text{local}}}\left(\{ \ket{\psi}_i\}_{i=1}^{M},\{ U_{N_L,\bm{\xi},\bm{\eta}}(\bm{z}_j,\bm{\theta})\ket{0}^{\otimes n}\}_{j=1}^{M}\right)\right].
\end{equation}
The partial derivative is calculated using the parameter shift rule~\cite{mitarai2018quantum}. 
The expectation of the variance is approximated by Monte Carlo calculations 
with respect to $\bm{z}$, $\bm{\xi}$, $\bm{\eta}$, and $\bm{\theta}$, where 
$\bm{z}$ is sampled from the uniform distribution $U(0,1)$ and 
$\bm{\xi},\bm{\eta},\bm{\theta}$ are randomly chosen from 
${\bm{\xi}}\in\{X,Y,Z\}^{n N_L}$, $\bm{\eta}\in\{0,1,2,\ldots,N_z\}^{n N_L}$, 
and $\bm{\theta} \in [0,2\pi]^{n N_L}$. 
The structure of the generative model $U(\bm{z},\theta)\ket{0}^{\otimes n}$ 
is the same as that shown in Fig.~\ref{fig:ansatzcircuit}. 
The derivative is taken with respect to $\theta_{1,1}$. 
The training ensemble $\{\ket{\psi_i}\}_{i=1}^{M}$ is prepared as follows;
\begin{align*}
    \ket{\psi}_i 
       = W' V'_2(\bm{\zeta}^i_2)W' V'_1(\bm{\zeta}^i_1)\ket{0}^{\otimes n}, 
\end{align*}
where $\bm{\zeta}^i_1=\{\zeta^i_{1,j}\}_{j=1}^n$ and 
$\bm{\zeta}^i_2=\{\zeta^i_{2,j}\}_{j=1}^n$ are randomly chosen from the uniform 
distribution on $[0,2\pi]$ and fixed during Monte Carlo calculation. 
The operators $W'$, $V'_1$, and $V'_2$ are defined as follows:
\begin{equation}
	\begin{split}
		W' =\prod_{i=1}^{n-1} CX_{j,j+1}, ~~ 
		V'_1(\bm{\zeta}^i_1) =\prod_{j=1}^n R_{j,Y}(\zeta^i_{1,j}), ~~ 
		V'_2(\bm{\zeta}^i_2) =\prod_{j=1}^n R_{j,Z}(\zeta^i_{2,j}),
	\end{split}
\end{equation}
where $CX_{j,k}$ denotes the controlled-$X$ gate, which operates $X$ gate on 
the $k$-th qubit with the $j$-th control qubit. 
$R_{j,Y}$ and $R_{j,Z}$ denote single qubit Pauli rotations around the $x$ and 
$y$ axes, respectively. 
The other simulation parameters are given in 
Table~\ref{tab:ExperimentalParamters_barrenplateau}.

Figure~\ref{fig:grad_n} shows the numerical simulation result of the 
variance \eqref{eq:barren plateau}, for the cases where (a) the global cost 
\eqref{eq:tracedistance} is used and (b) the local cost \eqref{eq:localcost} 
is used. 
The number of training data is fixed to $M=8$. 
The clear exponential decays in variance of gradient are observed for the global 
cost, regardless of $N_L$. 
In contrast, for the case of local cost, the relatively shallow circuits with $N_L=10, 25$ 
exhibit approximately constant scaling with respect to $n \geq 10$, while the deep 
circuits with $N_L\geq 50$ also exhibit slower scaling than the global one and 
keep larger variance even when $n \geq 8$. 
This result implies that the OTL with local ground cost can avoid the gradient 
vanishing issue, despite that the circuit is not specifically designed for 
this purpose. 
Note also that the result is consistent with that reported in 
\cite{holmes2022connecting}, which studied the cost function composed of single ground cost of our setting.

In addition, Fig.~\ref{fig:grad_n}(c) shows the variance 
\eqref{eq:barren plateau} as a function of the number of training data, $M$, 
in the case of $n=14$. 
In the figure, the points represent the Monte-Carlo numerical results and the 
dotted lines represent the scaling curves $M^{-x}$ where the value $x$ is 
determined via fitting. 
This fitting result implies that the gradient obeys the simple statistical 
scaling with respect to $M$ and thus the proposed algorithm would enjoy 
efficient learning even for a large training ensemble.

\begin{table}[tb]
	\begin{center}
		\caption{List of parameters for numerical simulation of Sec.~\ref{sec:barrenplateau}}
		\begin{tabular}{cll} \hline
			number of measurements        & $N_{s}$            & $\infty$ (statevector)          \\
			number of training data       & $M$                & $\{2^i|\ i\in \{1,2,3,4\}\}$   \\
			number of Monte Carlo samples & $N_{\text{Monte}}$ & $300$                          \\
			number of qubits              & $n$                & $\{2,4,6,8,10,12,14\}$         \\
			dimension of latent space     & $N_z$              & $\{1\}$                      \\
			number of layers              & $N_L$              & $\{10,25,50,75,100,200\}$      \\
			\hline
		\end{tabular}
		\label{tab:ExperimentalParamters_barrenplateau}
	\end{center}
\end{table}

\begin{figure}[hbtp]
	\centering
	\subfigure[Global cost]{\includegraphics[width=.32\linewidth]{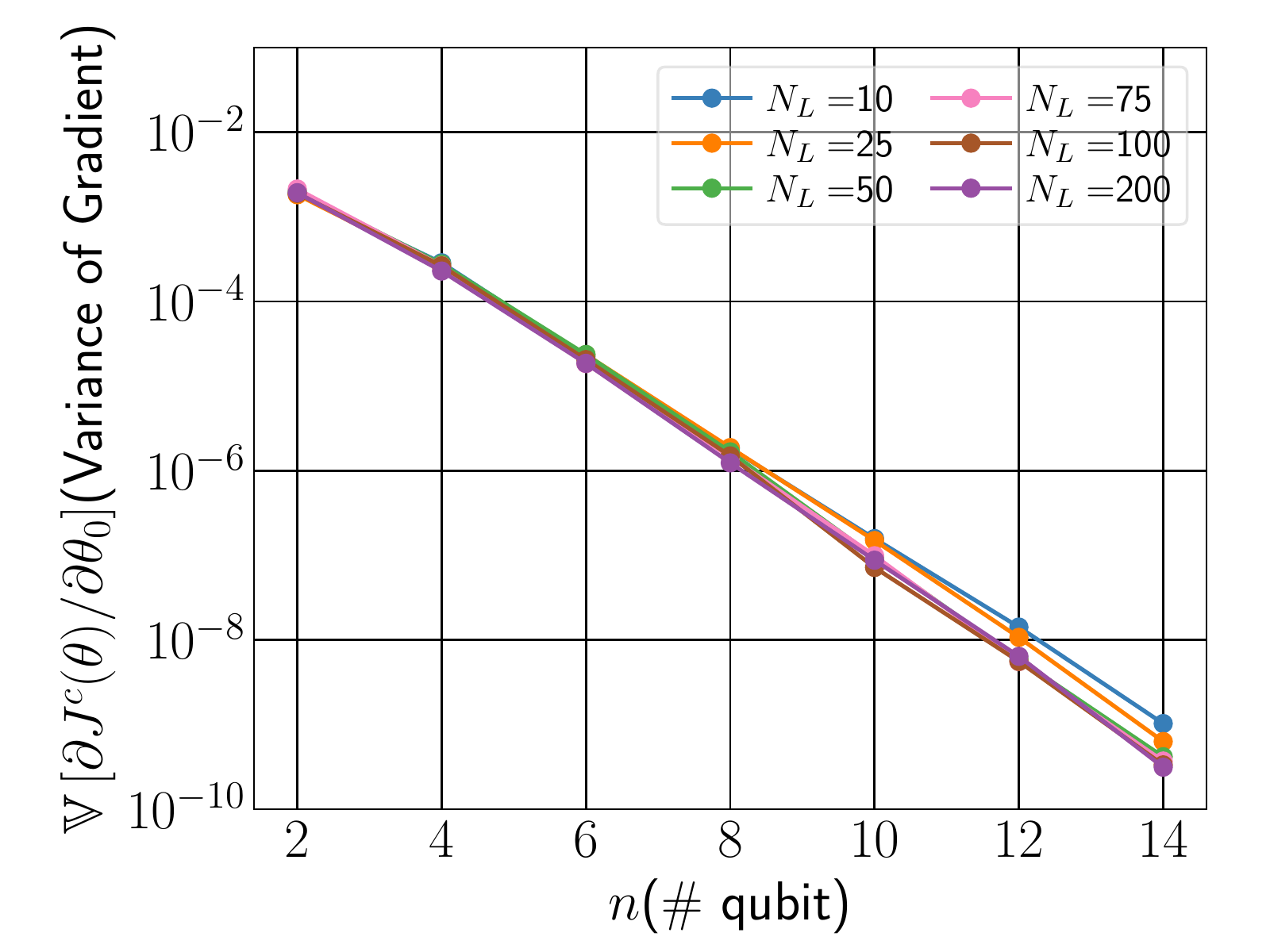}}
	\subfigure[Local cost]{\includegraphics[width=.32\linewidth]{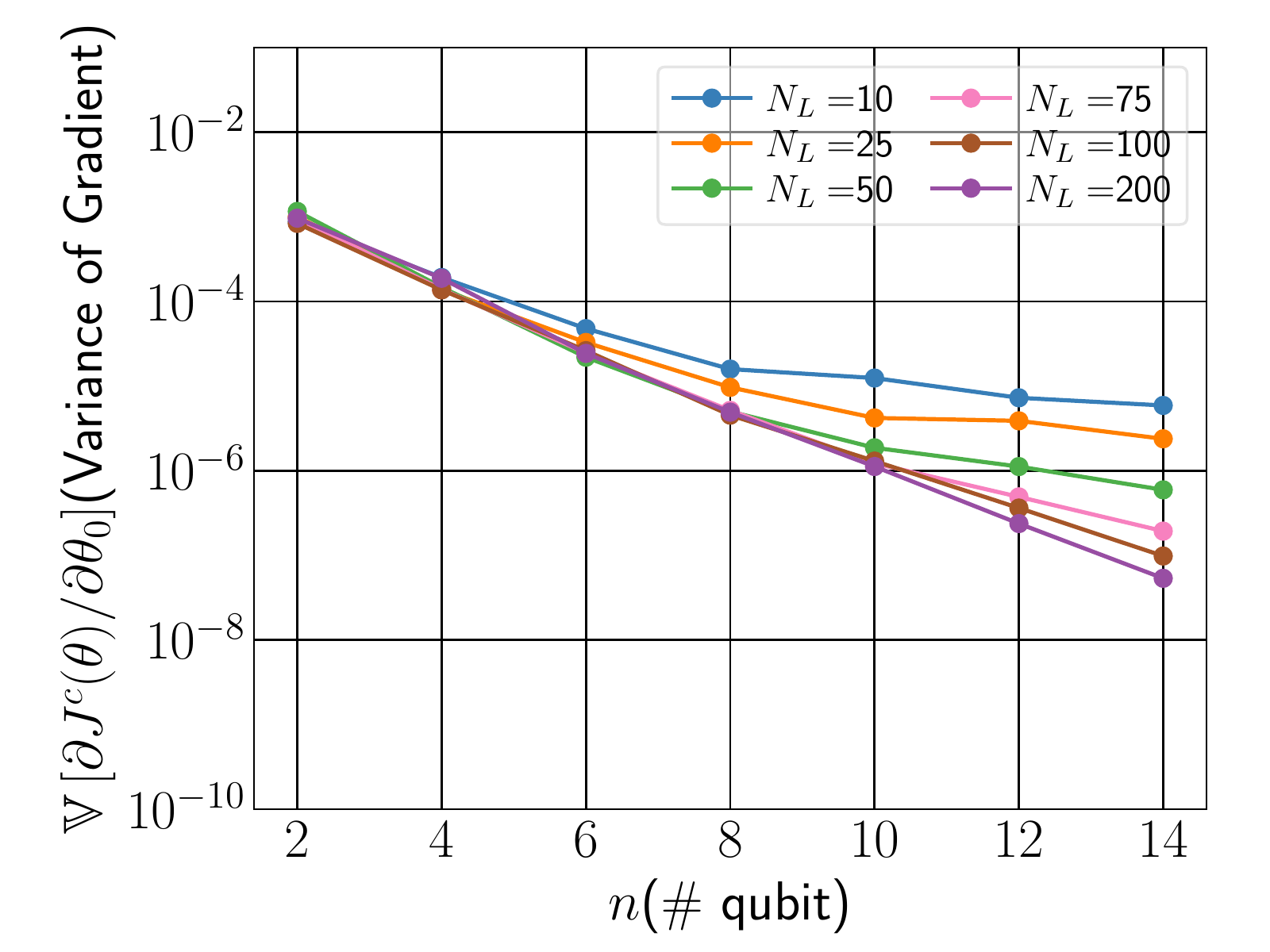}}
	\subfigure[Training number dependence(Local cost)]{\includegraphics[width=.32\linewidth]{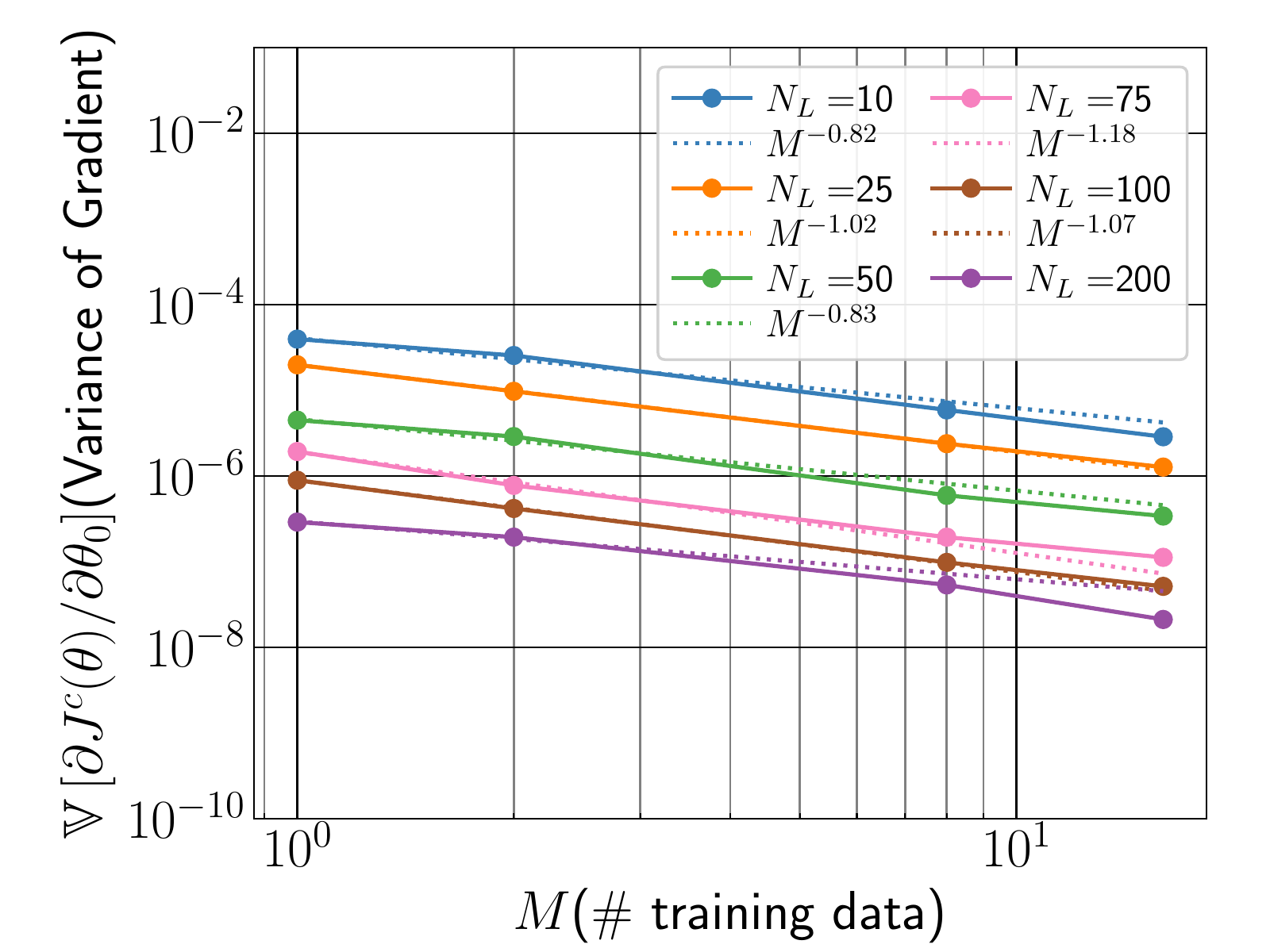}}
	\caption{
		The gradient of the cost function as a function of the number of qubit, 
		$n$, for the case of (a) the global cost \eqref{eq:tracedistance} and 
		(b) the local cost \eqref{eq:localcost}. 
		Figure (c) shows the dependence on the number of training data, $M$, for 
		the case of local cost. 
		Several curves are depicted for different number of layers $N_L$. 
		The clear exponential decay is observed in (a), but is avoided in (b).
		The polynomial decay ($\simeq M^{-1}$) is observed in (c), implying the 
		simple statistical scaling.
	}\label{fig:grad_n}
\end{figure}

\section{Demonstration of the generative model training and application to 
anomaly detection}
\label{sec:demonstration}

In this section, we present anomaly detection based on the cost function defined in Eq.~\eqref{eq:optimaltransportloss} as a proof-of-concept of the proposed loss function.

\subsection{Quantum Anomaly Detection}
\label{sec:anomalydetection}

Anomaly detection is a task to judge whether a given test data $\bm{x}^{(t)}$ 
is anomalous (rare) data or not, based on the knowledge learned from the past 
training data $\bm{x}_i,(i=1,2,\ldots,M)$, i.e., the generative model. 
Unlike typical  classification tasks, this problem deals with a large 
imbalance in the number of normal data and that of anomalous data; actually, 
the former is usually much bigger than the latter. 
Therefore, typical classification methods are not suitable to solve this task, 
and some specialized schemes have been widely developed 
\cite{chandola2009anomaly}.

The anomaly detection problem is important in the field of quantum technology. 
That is, to realize accurate state preparation and control, we are required 
to detect contaminated quantum states and remove those states as quickly as 
possible. 
Previous quantum anomaly detection schemes rely on the measurement-based 
data processing \cite{hara2014anomaly,hara2016quantum}, which however 
require a large number of measurement as in the case of quantum state 
tomography. 
In contrast, our anomaly detection scheme directly inputs quantum states to 
the constructed generative model and then diagnoses the anormality with 
much fewer measurements.

The following is the procedure for constructing the conventional anomaly 
detector based on the generative model \cite{ide2015introduction}, which 
we apply to our quantum case. 
\begin{enumerate}
	\item(\textit{Distribution estimation}): 
	Construct a model probability distribution from the normal dataset.  \label{item:procedure1_DistributionEstiomation}
	\item(\textit{Anomaly score design}): 
	Define an Anomaly Score (AS), based on the model distribution of normal data. 
	\label{item:procedure2_DefineAnomalyScore}
	\item(\textit{Threshold determination}): 
	Set a threshold of AS for diagnosing the anormality.  \label{item:procedure3_SetThreshold}
\end{enumerate}

Of these steps, the model probability distribution in Step~\ref{item:procedure1_DistributionEstiomation} is constructed by the 
learning algorithm presented in Sec.~\ref{sec:LearningAlgorithm}. 
To designing the AS in Step~\ref{item:procedure2_DefineAnomalyScore}, we 
refer to AnoGAN~\cite{schlegl2017unsupervised} in classical machine learning. 
Namely, we define a loss function $\mathcal{L}(U(\bm{z},\bm{\theta})\ket{0}^{\otimes n},\ket{\psi^{(t)})}$ 
for a test data $\ket{\psi^{(t)}}$ and the generative model  
$U(\bm{z},\bar{\bm{\theta}})$ constructed from the training dataset with 
learned parameter $\bar{\bm{\theta}}$; then take the minimum with respect 
to the latent variables $\bm{z}$ to calculate AS: 
\begin{equation}
	(\textrm{Anomaly Score})
	=\min_{\bm{z}} \mathcal{L}
	   (U(\bm{z},\bar{\bm{\theta}})\ket{0}^{\otimes n},\ket{\psi^{(t)}}).
\end{equation}
As the loss function $\mathcal{L}$, we use the local ground cost 
$c_{\rm local}(\ket{\psi^{(t)}},U(\bm{z},\bar{\bm{\theta}})\ket{0}^{\otimes n})$ 
defined in Eq.~\eqref{eq:localcost}. 
The above minimization is executed via the gradient descent with respect 
to $\bm{z}$, which is obtained via the parameter shift rule similar to the 
derivative in $\theta$. 
Algorithm~\ref{alg:anomalydetection} summarizes the procedure. 
\begin{algorithm}
	\caption{Algorithm to calculate Anomaly Score}
	\label{alg:anomalydetection}
	\begin{algorithmic}[1]
		\renewcommand{\algorithmicrequire}{\textbf{Input:}}
		\renewcommand{\algorithmicensure}{\textbf{Output:}}
		\REQUIRE A trained quantum circuit $U(\bm{z},\bm{\bar{\theta}})$, 
		test data $\{\ket{\psi^{(t)}}\}$
		\ENSURE  Anomaly Score
		\STATE Initialize $\bm{z}$
		\REPEAT
		\STATE Calculate the ground cost $\mathcal{L}$ from $\{U(\bm{z},\bm{\bar{\theta}})\}$ and $\ket{\psi^{(t)}}$ according to  Eq.~\eqref{eq:localcost}
		\STATE Calculate the gradients 
		$\left\{\frac{\partial \mathcal{L}}{\partial z_k}\right\}_{k=1}^{N_z}$ 
		using the parameter shift rule.
		\STATE Update $\bm{z}$ by using the gradient descent method 
		via $\left\{\frac{\partial \mathcal{L}}{\partial z_k}\right\}_{k=1}^{N_z}$
		\UNTIL{convergence}
	\end{algorithmic}
\end{algorithm}

\subsection{Distributed dataset}
\label{sec:distributed dataset}

The first demonstration is to construct a generative model that learns 
a quantum ensemble distributed on the equator of the generalized Bloch 
sphere. 
That is, the training ensemble (i.e., the normal dataset) 
$\{\ket{\psi_j}\}_{j=1}^{M}$ to be learned is set as follows: 
\begin{align}
    \ket{\psi_j} = \mathrm{cos}(\pi/4)\ket{0} +e^{2\pi i \phi^j}\mathrm{sin}(\pi/4)\ket{2^n -1},
    \label{eq:ad_trainingdata_sekido}
\end{align}
where $\phi^j$ is randomly generated from the uniform distribution on 
$[0,1]$ and $\ket{x}$ denotes the $x$-th basis in the $2^n$-dimensional 
Hilbert space. 
Note that the configuration of this ensemble cannot be learned by the 
existing mixed-state-based quantum anomaly detection scheme 
\cite{hara2014anomaly,hara2016quantum}, because the mixed state 
corresponding to this ensemble is nearly the maximally mixed state, the 
learning of which thus does not give us a generative model recovering 
the original ensemble.

We employ the same ansatz as that given in Sec.~\ref{sec:PerformanceAnalysis} 
with the parameters shown in Table~\ref{tab:ExperimentalParamters_ad} and 
construct the generative model according to Algorithm~\ref{alg:quantumOTLossLearning}. 
As the optimizer, we take Adam \cite{Kingma2014-pa} with learning rate 0.01. 
The number of learning iterations (i.e., the number of the updates of the 
parameters) is set to 1500 for $n=2$ and 10000 for $n=10$.

Once the model for normal ensemble is constructed, it is then used to 
anomaly detection. 
Here the set of test data $\{\ket{\psi^{(t)}}\}$ is given by 
\begin{align}
    \ket{\psi^{(t)}} 
      = \mathrm{cos}(\frac{\pi}{2}\theta^{(t)})\ket{0} 
      +e^{2\pi i \phi^{(t)}}\mathrm{sin}(\frac{\pi}{2}\theta^{(t)})\ket{2^n -1},
    \label{eq:ad_TestData}
\end{align}
where $\theta^{(t)}, \phi^{(t)} \in \{0, 0.1, 0.2, \ldots ,2\}$. 
We calculate the AS using Algorithm~\ref{alg:anomalydetection}. 
The other simulation parameters are shown in 
Table~\ref{tab:ExperimentalParamters_ad}.

\begin{table}[tb]
	\begin{center}
		\caption{List of parameters for numerical simulation in  Sec.~\ref{sec:distributed dataset}}
		\begin{tabular}{cll} \hline
			number of measurements~(training)          & $N_{s}$  & $100$~(for~$n=2$),~$\infty$~(for~$n=10$) \\
			number of measurements~(anomaly detection) & $N_{ad}$ & $50$~(for~$n=2$),~$100$~(for~$n=10$) \\
			number of training data             & $M$                & $30$                \\
			number of qubits              & $n$                & $2,10$       \\
			dimension of latent space     & $N_z$              & $1$  \\
			number of layers              & $N_L$              & $10$            \\
			\hline
		\end{tabular}
		\label{tab:ExperimentalParamters_ad}
	\end{center}
\end{table}

The numerical simulation result in the case of $n=2, 10$ are presented in  
Figs.~\ref{fig:anomalyDetection_n2} and \ref{fig:anomalyDetection_n10}, 
respectively. 
The training ensemble is shown in the figure (a), where each blue point 
corresponds to the generalized Bloch vector.
Some output states of the constructed generative model, corresponding to 
different value of $z \in [0,1]$, are shown in the figure (b). 
Both of the red and blue points in Fig.~(c) represent the test data state 
\eqref{eq:ad_TestData}. 
The figure (d) shows the calculated AS, where the blue and red plots 
correspond to the blue and red points in (c), respectively. 
The dotted line in (d) illustrates the theoretical expected values assuming 
that the model completely learns the training data.

Firstly, we see the clear correlation between the AS and the theoretical curve 
in Fig.~\ref{fig:anomalyDetection_n2}, implying that AS is appropriately 
calculated via the proposed method. 
In the practical usecase, a user defines a threshold of AS depending on the 
task and then compare the calculated AS with the threshold for identifying the anomaly quantum states. 
For instance, if we set the threshold as $\mathrm{AS}=0.3$, the test states  
conditioned in $0.3 \le \theta^{(t)}/\pi \le 0.7$ in 
Fig.~\ref{fig:anomalyDetection_n2} are judged as normal while others are 
anomaly. 
Moreover, the output states of the learned generative model and the result 
of anomaly detection in the case of $n=10$ are shown in 
Fig.~\ref{fig:anomalyDetection_n10}. 
Although, the result displayed in (b) would suggest that the learning fails, 
the output states show correlation with the training states displayed in (a); 
actually, the output states live on the $xy$-plane in the generalized Bloch 
sphere spanned by $\ket{0}^{\otimes 10}$ and $\ket{1}^{\otimes 10}$. 
In addition, it is notable that only $N_s=100$ is enough even for the case 
of $n=10$ to perform anomaly detection, provided that we obtain an appropriate 
generative model from the training normal states. 
This is an advantage for practical situation, as this indicates that the 
proposed method may scale up withe respect to the number of qubits.

\begin{figure}[hbtp]
	\centering
	\subfigure[Training data]{\includegraphics[width=.23\linewidth]{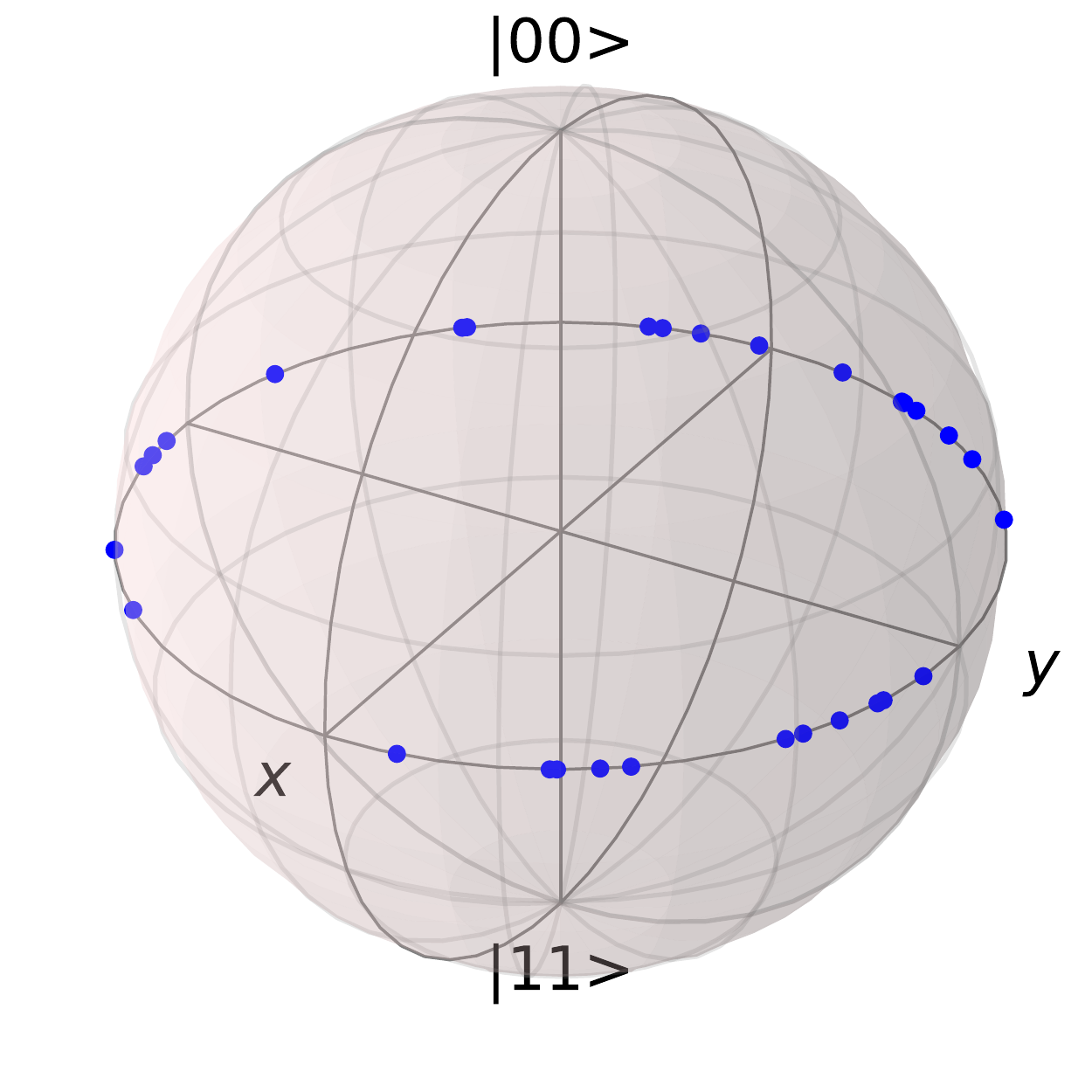}}
	\subfigure[Model output (after learn; $z \in \lbrack 0,1\rbrack $)]{\includegraphics[width=.23\linewidth]{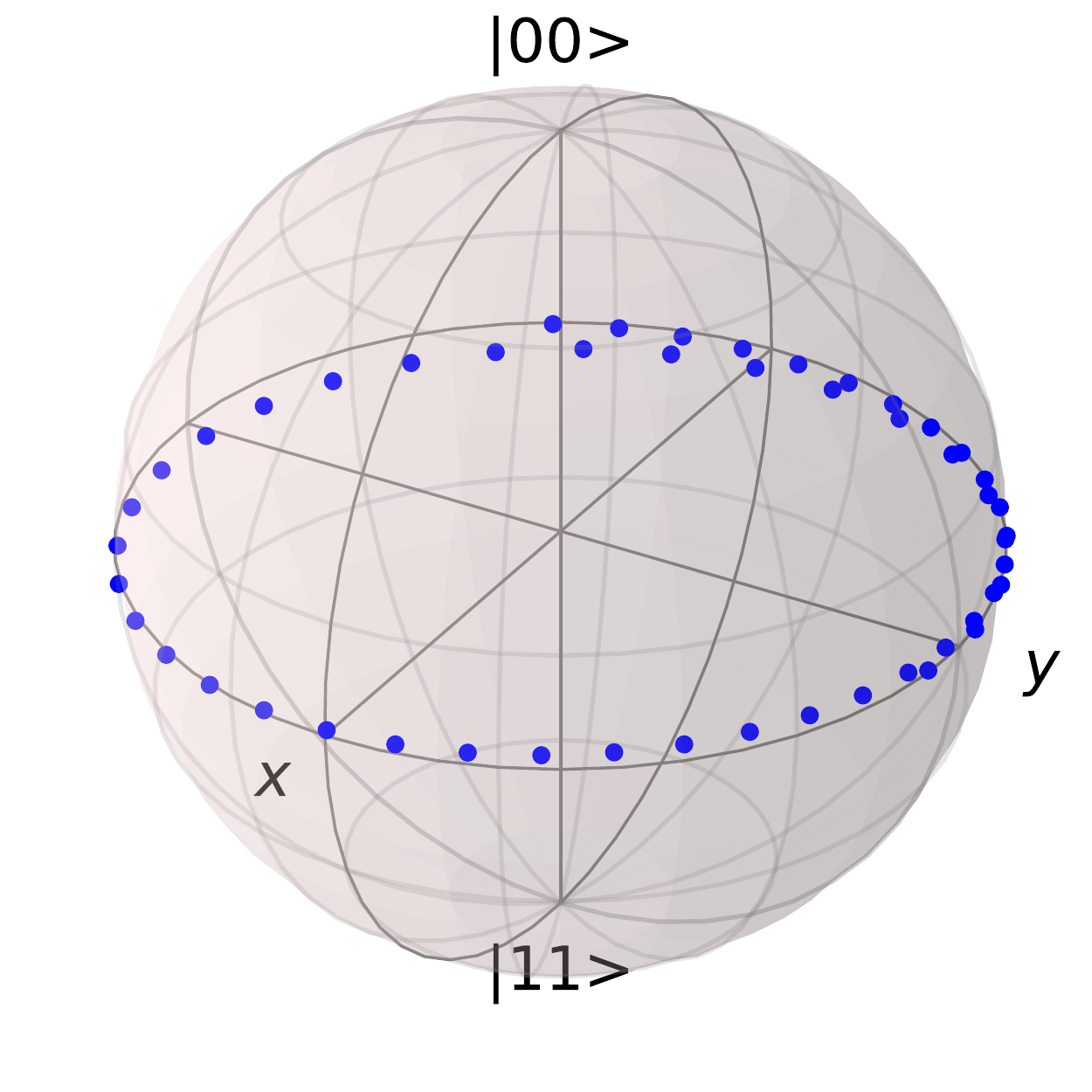}}
	\subfigure[Test data]{\includegraphics[width=.23\linewidth]{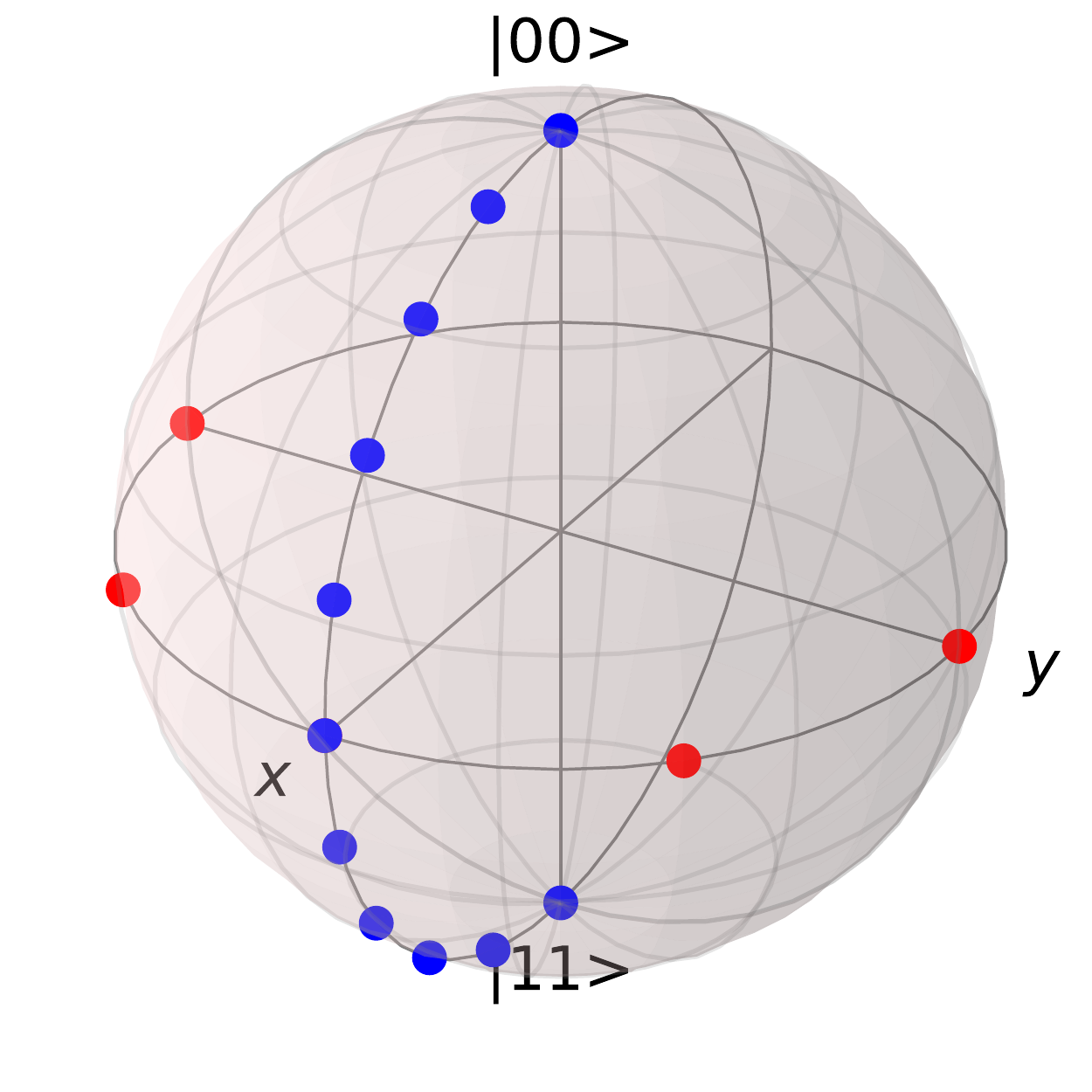}}
	\subfigure[Anomaly score]{\includegraphics[width=.27\linewidth]{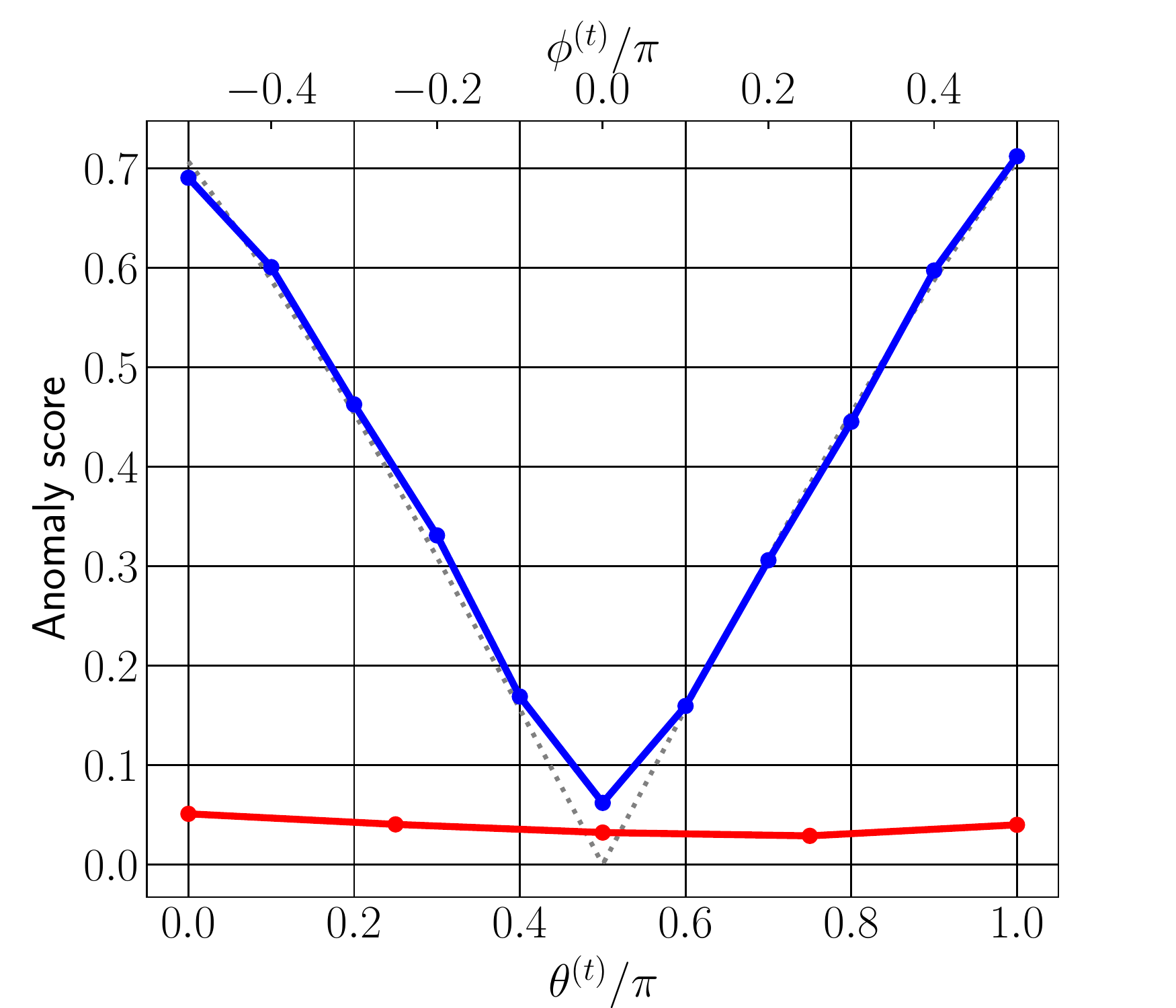}}
	\caption{
		The simulation results in the case of $n=2$, for the distributed 
		training ensemble. 
		(a) Generalized Bloch vector representation of training ensemble. 
		(b) Output states from the generative model with latent variables 
		$\bm{z} \in [0,1]$. 
		(c) Test data (both red and blue points). 
		(d) Anomaly scores for different test data. 
		The blue and red plots correspond to the blue and red points in (c), 
		respectively. 
		The values of angle $\theta^{(t)}$ and $\phi^{(t)}$ are given 
		in the bottom and upper horizontal axis, respectively. 
		The dotted line represents the theoretical value of AS.
	}\label{fig:anomalyDetection_n2}
\end{figure}

\begin{figure}[hbtp]
	\centering
	\subfigure[Training data]{\includegraphics[width=.23\linewidth]{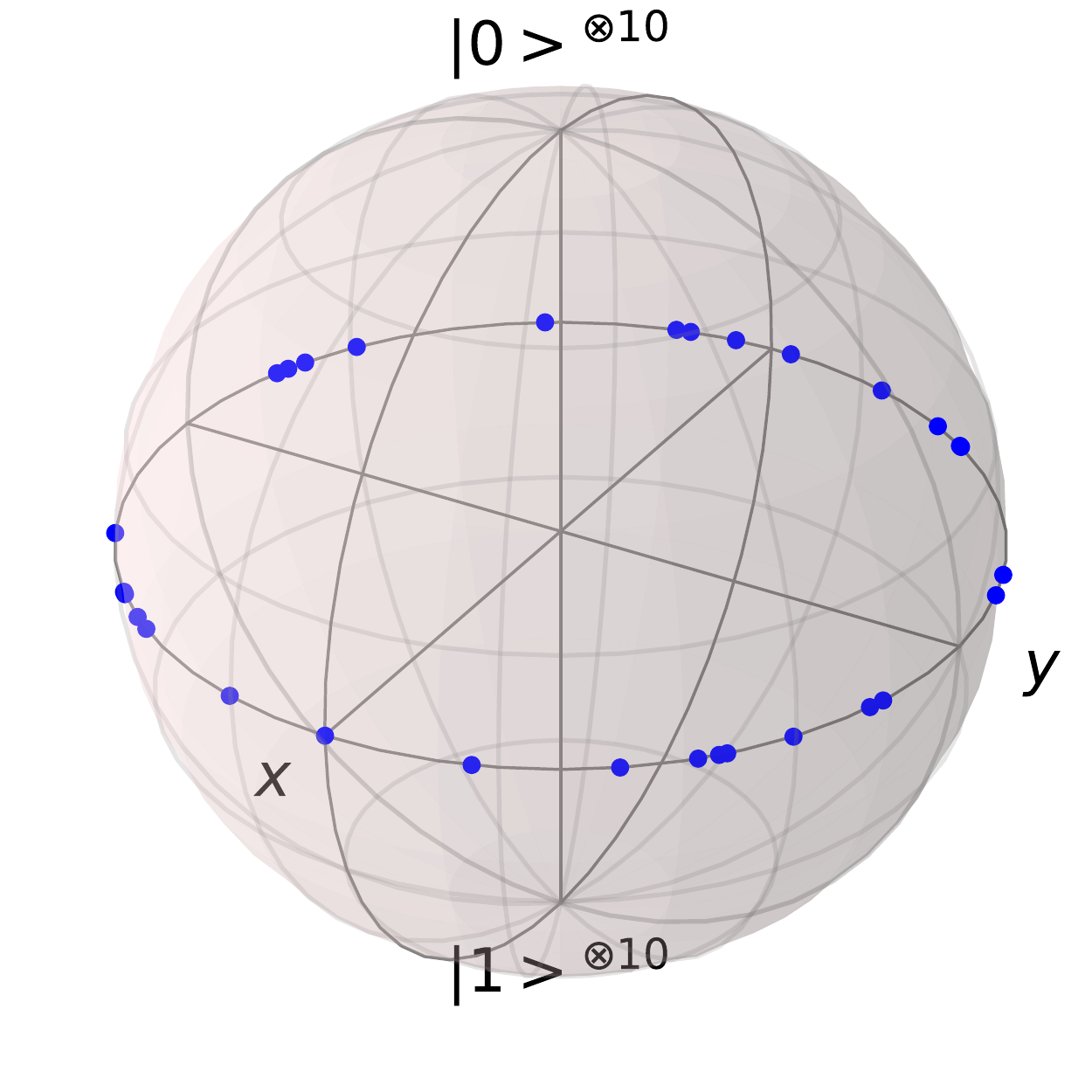}}
	\subfigure[Model output (after learn; $z \in \lbrack 0,1\rbrack $)]{\includegraphics[width=.23\linewidth]{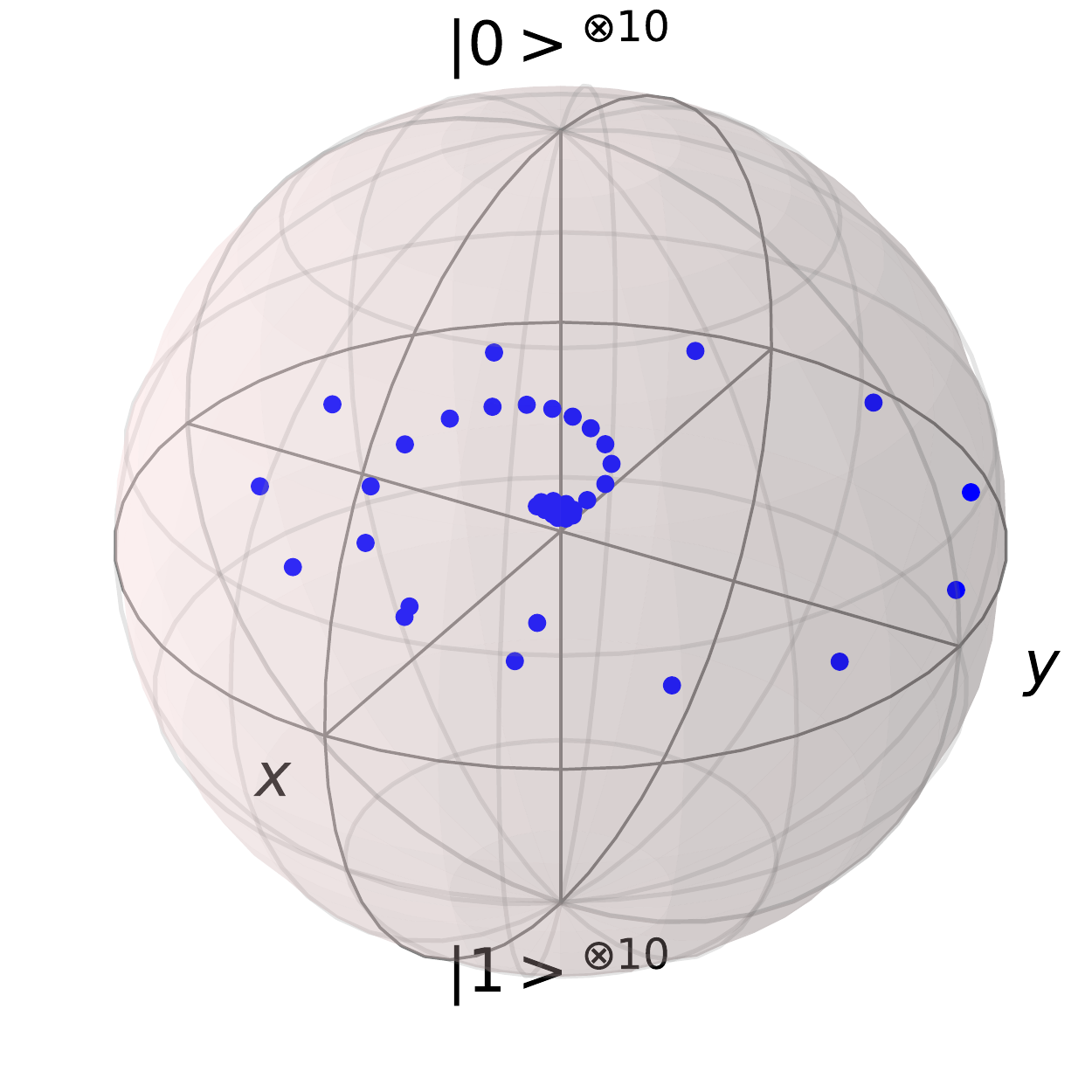}}
	\subfigure[Test data]{\includegraphics[width=.23\linewidth]{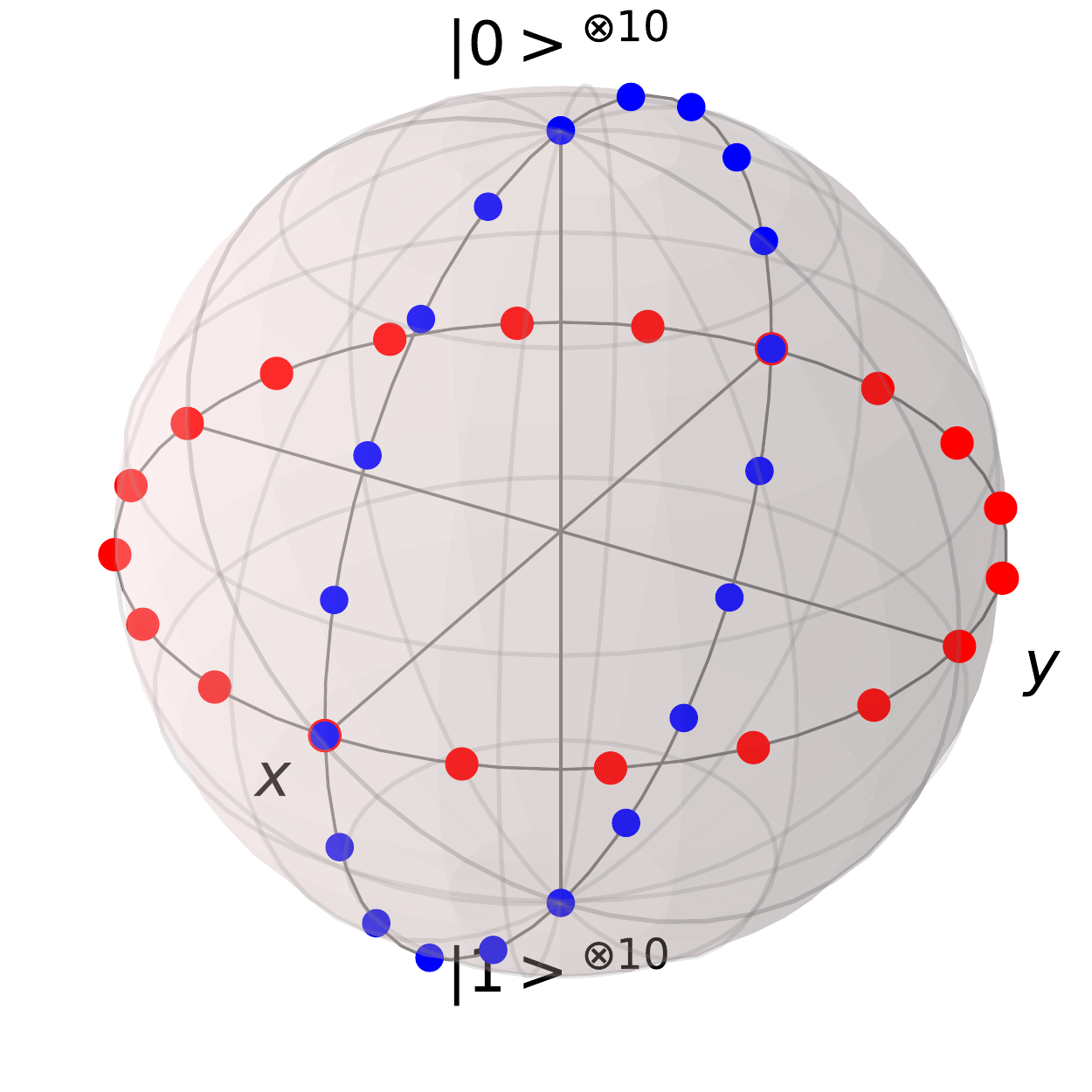}}
	\subfigure[Anomaly score]{\includegraphics[width=.27\linewidth]{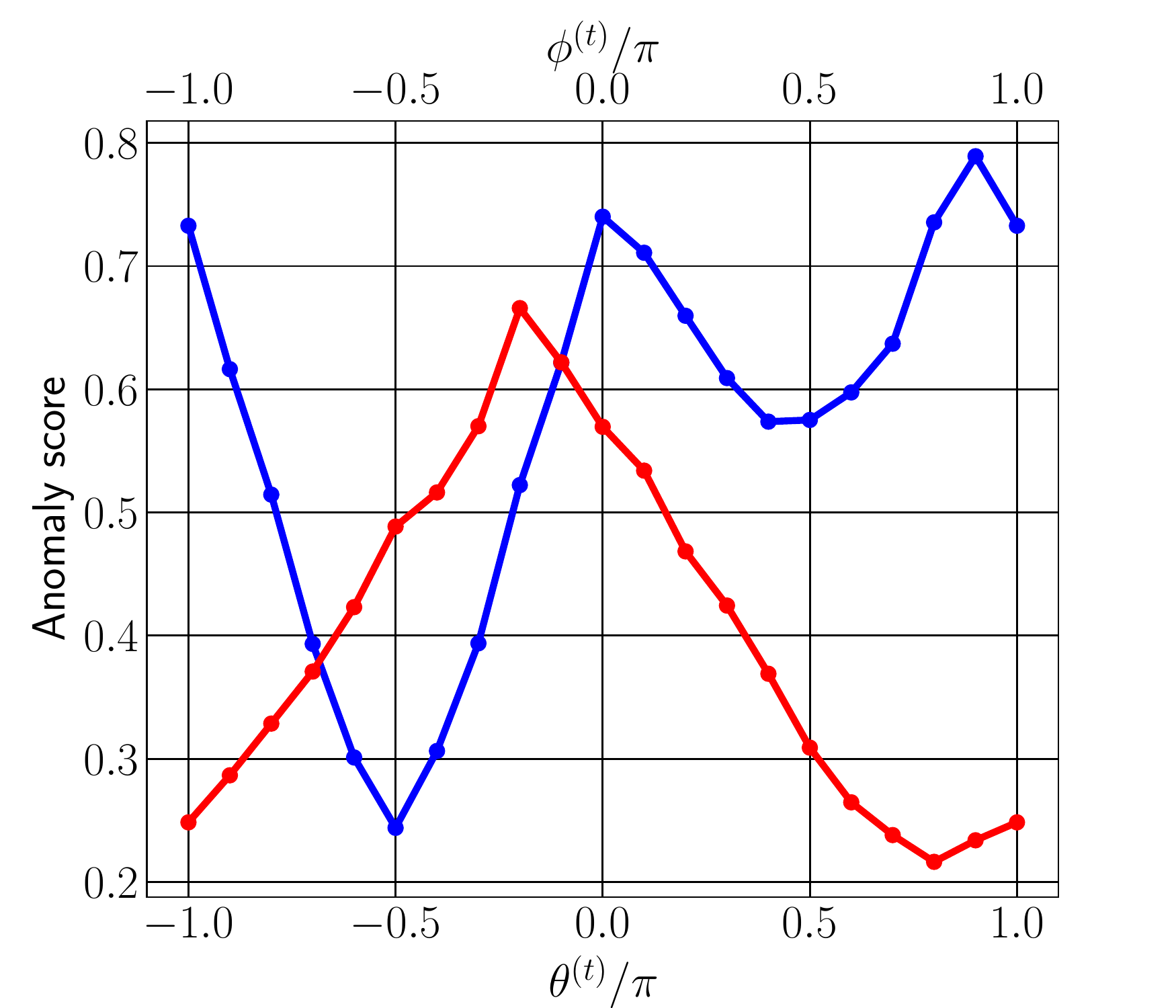}}
	\caption{
		The simulation results in the case of $n=10$, for the distributed 
		training ensemble. 
		(a) Generalized Bloch vector representation of training states. 
		(b) Output states from the generative model with latent variables 
		$\bm{z} \in [0,1]$. 
		(c) Test data (both red and blue points). 
		(d) Anomaly scores for different test data. 
		The blue and red plots correspond to the blue and red points in (c), 
		respectively. 
		The values of angle $\theta^{(t)}$ and $\phi^{(t)}$ are given 
		in the bottom and upper horizontal axis, respectively. 
	}\label{fig:anomalyDetection_n10}
\end{figure}

\subsection{Localized dataset}
\label{sec:appresult}

Next let us consider localized quantum ensemble. 
That is, the state of training ensemble $\{\ket{\psi_j}\}_{j=1}^{M}$ 
corresponding to the normal dataset is given by 
\begin{align}
    \ket{\psi_j} = \mathrm{cos}\left(\frac{\pi}{2}\Delta\theta_j\right)\ket{0} +e^{2\pi i \Delta\phi_j}\mathrm{sin}\left(\frac{\pi}{2}\Delta\theta_j\right)\ket{2^n -1},
    \label{eq:ad_trainingdata}
\end{align}
where $n$ is the number of qubits. 
$\Delta\theta_j$ and $\Delta\phi_j$ are sampled from the normal distribution 
$N(\mu, \sigma)$ and the uniform distribution $U(a, b)$, respectively 
($\mu$ and $\sigma$ represent the mean and the variance, respectively). 
We will consider the two cases $(\mu, \sigma, a, b) = (0, 0.02, 0, 0.1)$ for 
$n=6$ and $(\mu, \sigma, a, b) = (0, 0.02, 0, 0.2)$ for $n=10$. 
Note that in this choice of parameters, the ensemble $\{\ket{\psi_j}\}_{j=1}^{M}$ 
is nearly two-dimensionally distributed on the generalized Bloch sphere, 
as illustrated in Fig.~\ref{fig:AnomalyDetection_appendix} (a, d). 
The other simulation parameters are shown in 
Table~\ref{tab:ExperimentalParamters_ad_app}.

\begin{table}[tb]
	\begin{center}
		\caption{List of parameters for numerical simulation of Sec.~\ref{sec:appresult}}
		\begin{tabular}{cll} \hline
			number of measurements~(training)          & $N_{s}$  & 1000 \\
			number of measurements~(anomaly detection) & $N_{ad}$ & 50 \\
			number of training data             & $M$                & $10$                \\
			number of qubits              & $n$                & $6,10$       \\
			dimension of latent space     & $N_z$              & $2$ \\
			number of layers              & $N_L$              & $10$            \\
			\hline
		\end{tabular}
		\label{tab:ExperimentalParamters_ad_app}
	\end{center}
\end{table}

To construct a generative model via learning this two dimensional distribution, 
we set the dimension of latent variable as $N_z=2$ from the above-mentioned 
observation on the dimensionality of $\{\ket{\psi_j}\}_{j=1}^{M}$. 
Also, we here take the so-called alternating layered ansataz~(ALA), which, 
together with the use of local cost, is guaranteed to mitigate the gradient 
vanishing issue \cite{cerezo2021cost, nakaji2021expressibility}. 
This ansatz is more favorable than the previous one which we here call the 
hardware efficient ansatz~(HEA), in view of the possibility to avoid the 
gradient vanishing issue. 
Therefore it is worth comparing their learning curves. 
Typical learning curves are shown in Fig.~\ref{fig:learningCurve}.
The blue plots, which are labeled "local", represent the case where the 
cost is the local one \eqref{eq:localcost}) and the ansatz is ALA; thus 
we denote this case as L-ALA. 
On the other hand, the orange plots, which are labeled "global", represent 
the case where the cost is the global one \eqref{eq:tracedistance}) and 
the ansatz is HEA; thus we denote this case as G-HEA. 
Not that both displayed costs are calculated as the global one, to directly 
compare them; 
that is, the "local" represents the cost calculated at each iteration based 
on the global cost with the parameters that optimize the local cost. 
We observe that L-ALA has a clear advantage over G-HEA in terms of the 
convergence speed. 
This result coincides with that of Sec.~\ref{sec:barrenplateau}, indicating 
the advantage of the local cost. 
In addition to the convergence speed, the final cost of L-ALA is lower than 
that of G-HEA. 
Note that the learning performance heavily depends on the initial random seed, 
yet it was indeed difficult to find a successful setting of G-HEA; actually 
in all cases we tried the trajectory seemed to be trapped in a local minima, 
presumably because the variance of G-HEA is much smaller than that of L-ALA.

\begin{figure}[hbtp]
	\centering
	\subfigure[Learning curve ($n=6$)]{\includegraphics[width=.4\linewidth]{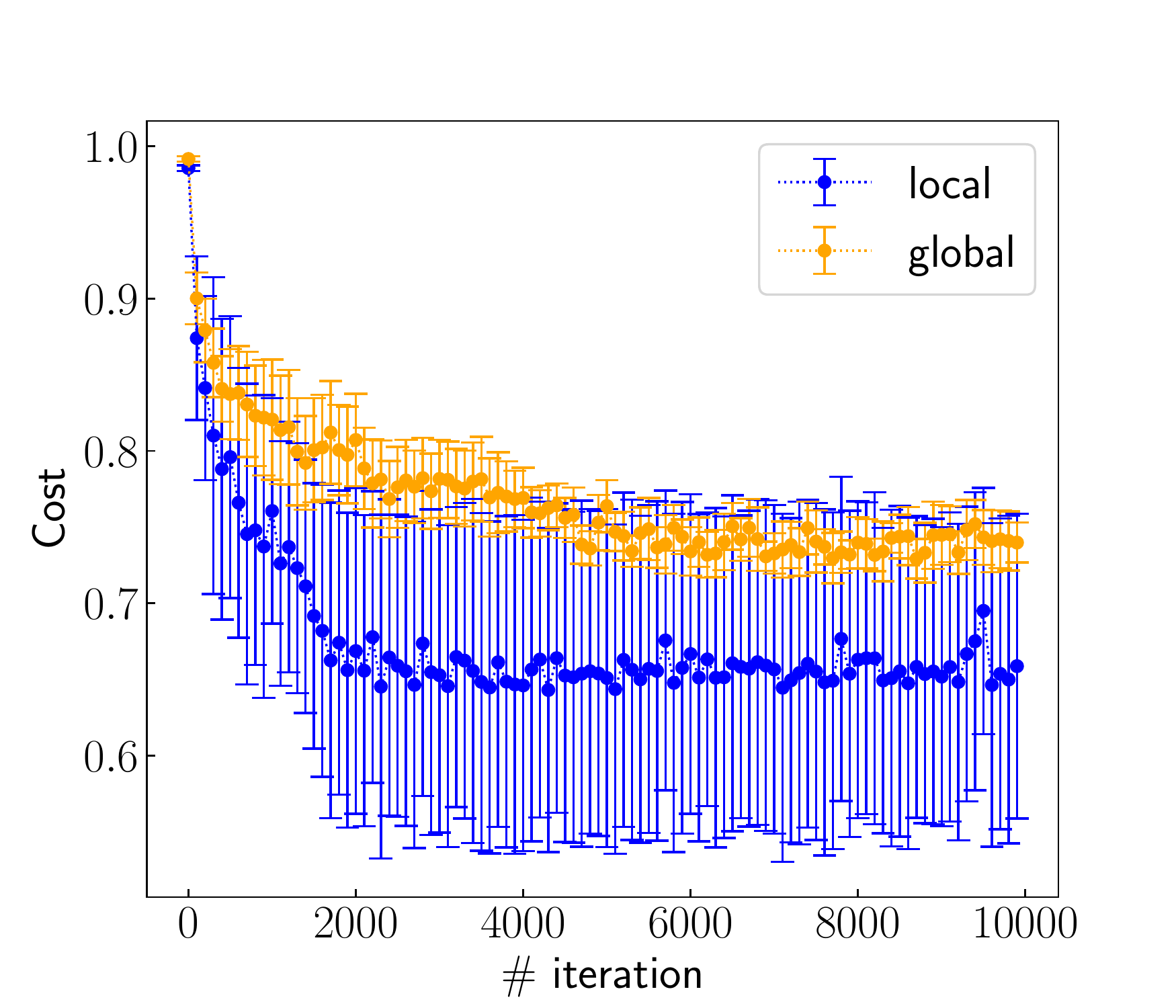}}
    \subfigure[Learning curve ($n=10$)]{\includegraphics[width=.4\linewidth]{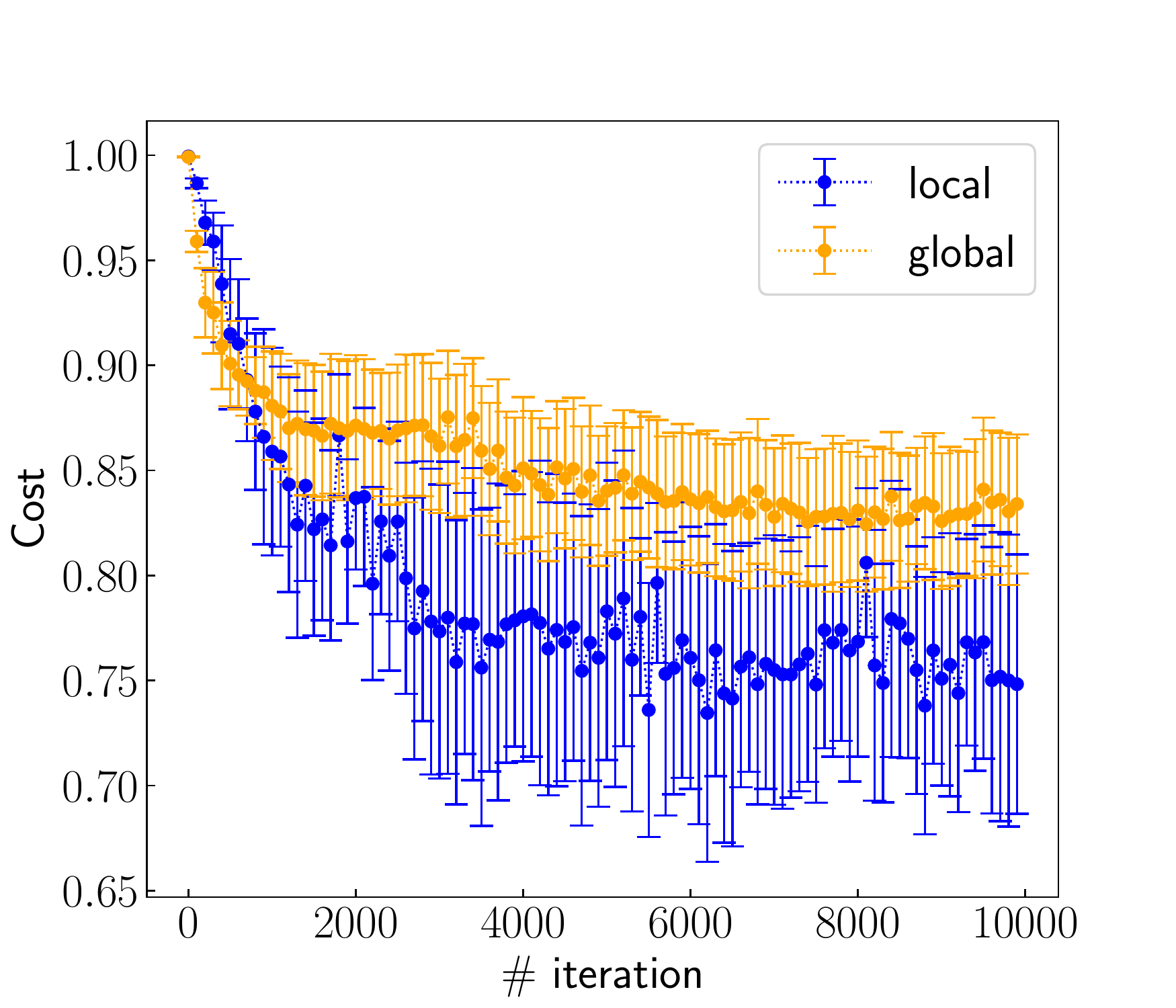}}
	\caption{
		 Learning curves for the case (a) $n=6$ and (b) $n=10$. 
		 The values of OTL is calculated with the parameters at each 
		 iteration trained with local cost \eqref{eq:localcost} or 
		 global cost \eqref{eq:tracedistance}. 
		 Note that the displayed are the cost calculated with the 
		 global one \eqref{eq:tracedistance} for both blue and orange 
		 cases, for fair comparison. 
		 The range of error bar is 1 standard deviation. 
	}\label{fig:learningCurve}
\end{figure}

We apply the constructed generative model to the anomaly detection problem. 
In Fig.~\ref{fig:AnomalyDetection_appendix} (b) and (e), the test quantum 
states are displayed for the case of $n=6$ and $n=10$, respectively. 
The resultant anomaly score for each test data are shown in 
Fig.~\ref{fig:AnomalyDetection_appendix} (c, f). 
In both cases, we can say that the models are trained appropriately. 
In particular, the variance of the distribution of $\{\ket{\psi_j}\}_{j=1}^{M}$, 
i.e., the distribution of the blue points in (a, d), is well captured by the 
width of the dip of red lines in (c, f). 
Finally note that, in this section, we use QASM simulator for the numerical 
simulation; the number of shot is 1000 for each measurement in the learning 
process, and 50 for the anomaly detection task, even for the case of $n=10$. 
Compared to the state tomography, these numbers of shot are clearly too small. 
Nonetheless, the proposed method enabled the model to learn the training 
ensemble with such small number of shots.

\begin{figure}[hbtp]
	\centering
	\subfigure[Training data]{\includegraphics[width=.3\linewidth]{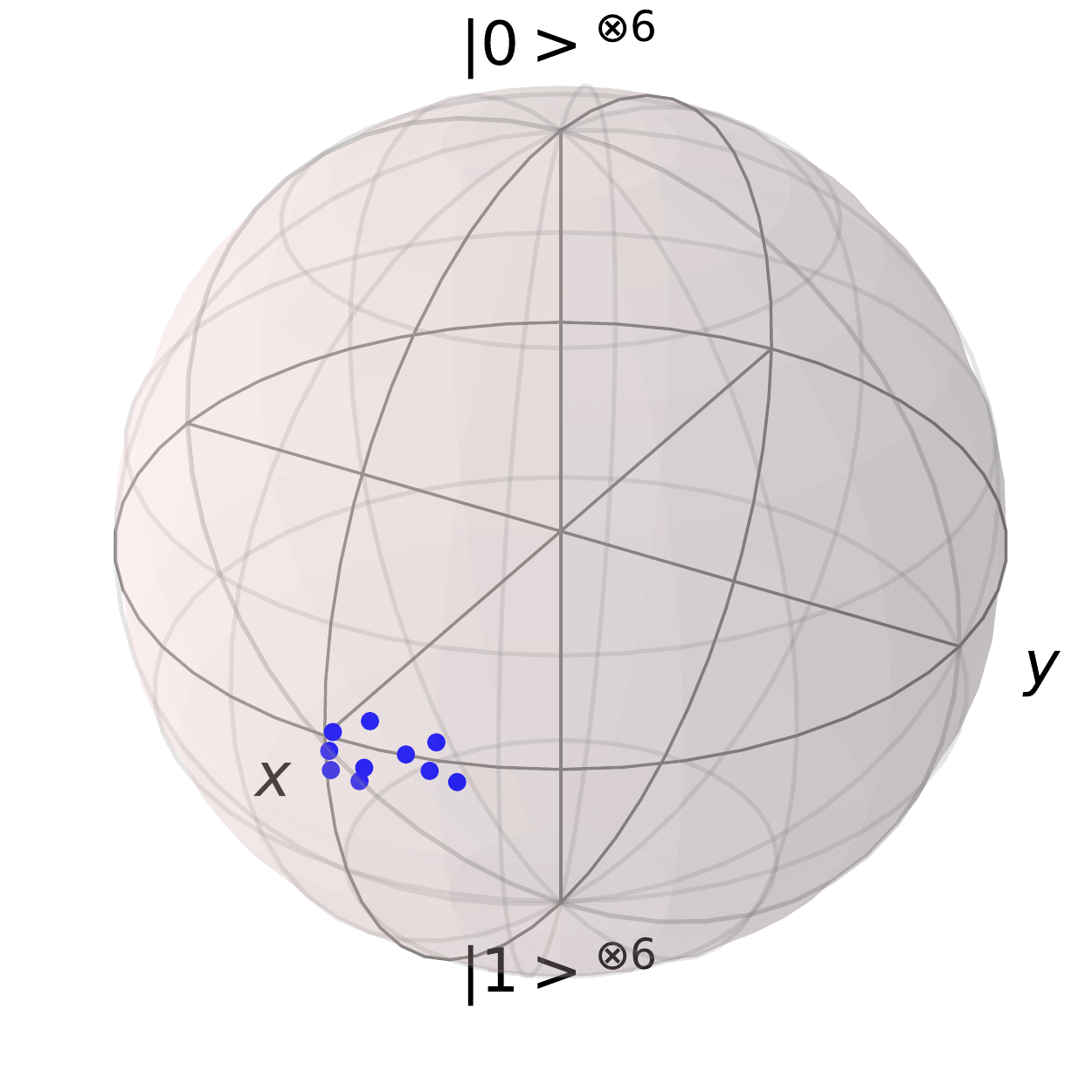}}
	\subfigure[Test data]{\includegraphics[width=.3\linewidth]{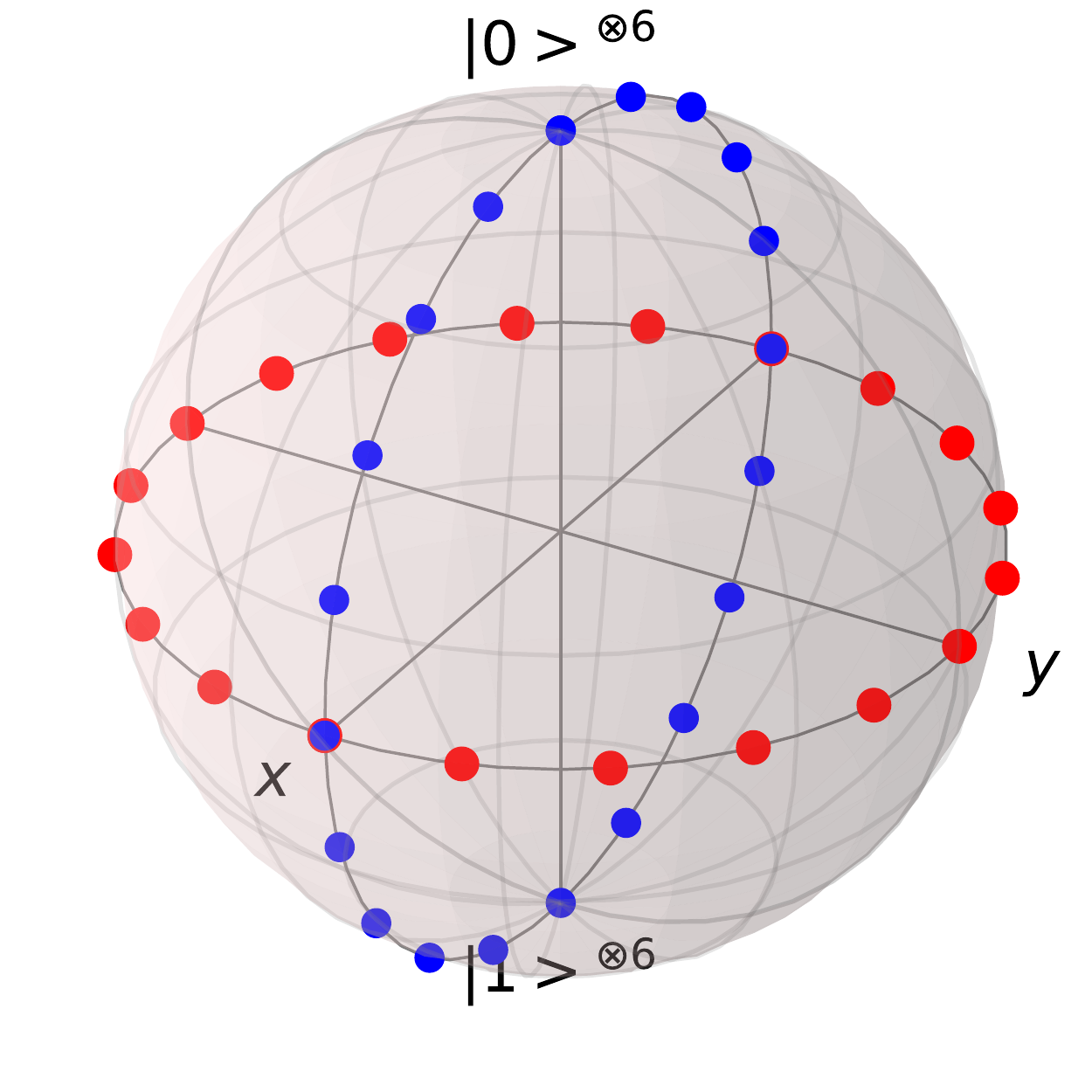}}
	\subfigure[Anomaly score]{\includegraphics[width=.35\linewidth]{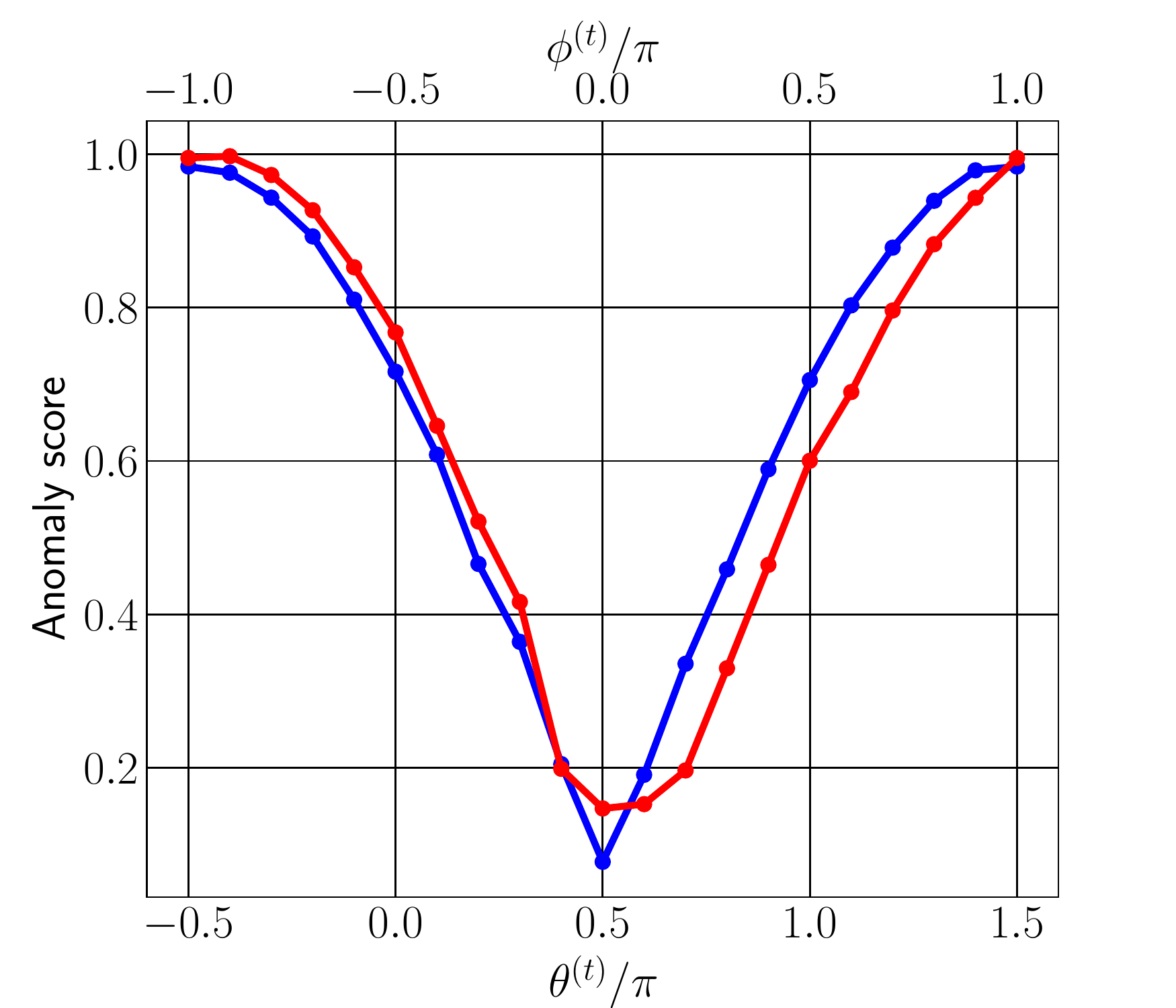}}
	\subfigure[Training data]{\includegraphics[width=.3\linewidth]{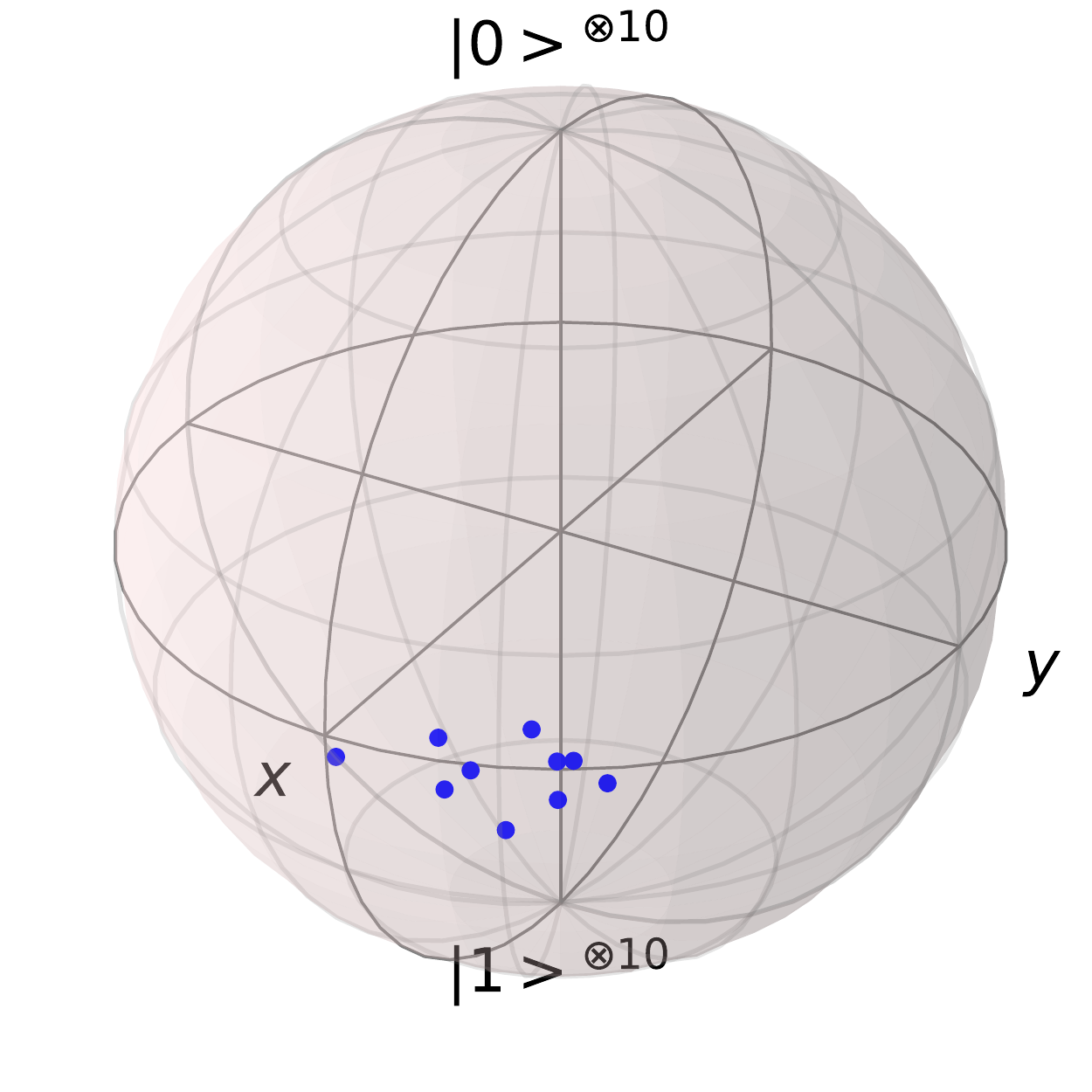}}
	\subfigure[Test data]{\includegraphics[width=.3\linewidth]{fig/ad_n10_test_data.pdf}}
	\subfigure[Anomaly score]{\includegraphics[width=.35\linewidth]{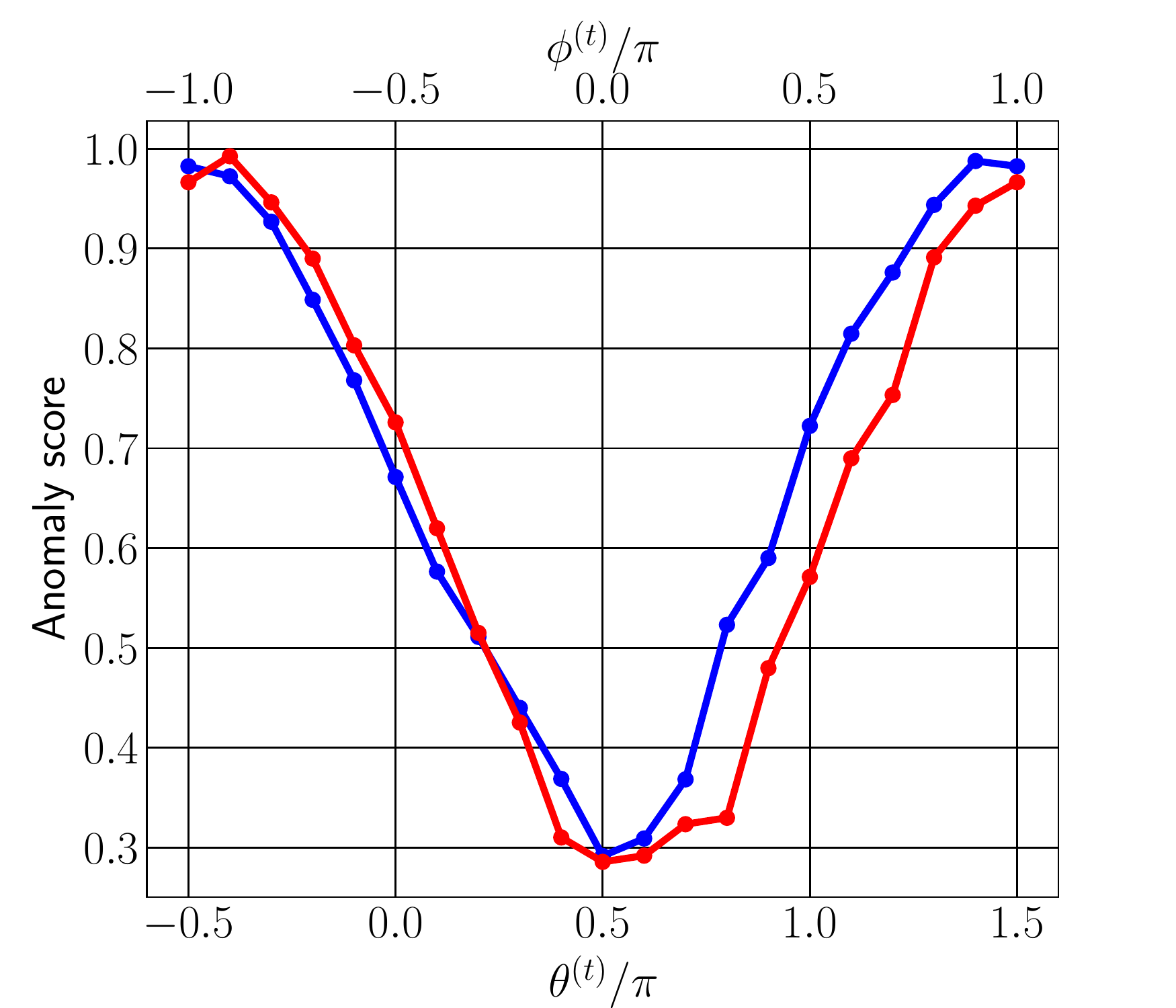}}
	\caption{
		The simulation results for the localized training ensemble, 
		in the case of (a,b,c) $n=6$ and (d,e,f) $n=10$. 
		(a, d) Generalized Bloch vector representation of training states. 
		(b, e) Test data (both red and blue points). 
		(c, f) Anomaly scores for different test data. 
		The blue and red plots correspond to the blue and red points in (b, e), 
		respectively. 
		The values of angle $\theta^{(t)}$ and $\phi^{(t)}$ are given 
		in the bottom and upper horizontal axis, respectively. 
		The dotted line represents the theoretical value of AS.}
		\label{fig:AnomalyDetection_appendix}
\end{figure}

\section{Conclusion}
\label{sec:conclusion}

In classical machine learning, many generative models are vigorously studied, but there are only a few studies on quantum generative models for quantum data.
This paper offers a new approach for building such a quantum generative model, i.e., learning method for quantum ensemble with unsupervised machine learning setting.
For that purpose, we proposed a loss function based on optimal transport loss (OTL), which can be effectively used even when both the target and model probability distributions are hard to characterize. 
We also adopted the previously proposed local cost as the ground cost of OTL, to avoid the vanishing gradient problem and thereby increase learnability, but this makes OTL being no longer a distance.
Hence we have shown that OTL with the local cost satisfies the properties of divergence between probability distributions and confirmed that the proposed OTL is suitable as a cost for generative model. 
We then theoretically and numerically analyzed the properties of the proposed OTL.
Our analysis indicates that the proposed OTL is a good cost function when the quantum data ensemble has a certain structure, i.e., it is confined in a relatively low dimensional manifold.
In addition, We numerically showed that OTL can avoid the vanishing gradient issue thanks to the locality of the cost. 
Finally, we demonstrated the anomaly detection problems of quantum states, that cannot be handled via existing methods, to show a validity of our method.

There is still a lot of works to be done in the direction of this paper.
First, this paper assumed that a quantum processing is performed individually for each quantum state, but it would be interesting to consider another setting such as allowing coherent access~\cite{aharonov2022quantum} or quantum memory~\cite{huang2021quantum}. 
In this scenario, the anomaly detection technique could be used for processing quantum states which come from an external quantum processor, such as quantum sensor.
Also, it would be interesting to extend the cost to the entropy regularized Wasserstein distance~\cite{cuturi2013sinkhorn,feydy2018interpolating,genevay2019sample,amari2016geometry,amari2018information}, which is effective for dealing with higher dimensional data in classical generative models.
In this case, however, the localized cost proposed in this paper does not satisfy the axiom of distance likewise the case demonstrated in this paper; this is yet surely an interesting future work. 
Lastly, note that classical data can also be within the scope of our method, provided it can be effectively embedded into a quantum data; then the anomaly detection of financial data, which usually requires high dimensionality for precise detection, might be a suitable target.

\section*{Acknowledgement}
This work was supported by MEXT Quantum Leap Flagship Program Grant Number JPMXS0118067285 and JPMXS0120319794

\bibliographystyle{unsrt_mod}
\bibliography{main}

\afterpage{\clearpage}

\appendix
\section*{Appendices}
\addcontentsline{toc}{section}{Appendices}
\renewcommand{\thesubsection}{\Alph{subsection}}
\section{Proof of Proposition \ref{pro:optimaltransport_divergence}}\label{sec:proof_optimaltransport_divergence}
Here, we give the proof of Proposition \ref{pro:optimaltransport_divergence}.
\begin{proof}
	First, positivity $\mathcal{L}_c(\alpha,\beta) \ge 0$ is obvious from $c(\bm{x},\bm{y})\ge 0$ and $\pi(\bm{x},\bm{y})\ge 0$.

	Next, we move on the proof of $\alpha=\beta\Rightarrow \mathcal{L}_c(\alpha,\beta)=0$.
	When $\alpha=\beta$, then $\pi(\bm{x},\bm{y}) = \alpha(\bm{x})\delta(\bm{x}-\bm{y})$ is one of the candidates that satisfy the constraint Eq.~\eqref{eq:optimaltransportloss}, which gives the following inequality:
	\begin{equation}
		\mathcal{L}_c(\alpha,\beta)  \le \int c(\bm{x},\bm{y}) \alpha(\bm{x}) \delta(\bm{x}-\bm{y})d\bm{x}d\bm{y} = \int c(\bm{x},\bm{x})\alpha(\bm{x}) d\bm{x} = 0.
	\end{equation}
	Thus we obtain $\mathcal{L}_c(\alpha,\beta)=0$.

	Finally, we turn to $ \mathcal{L}_c(\alpha,\beta)=0\Rightarrow \alpha=\beta$.
	When $\bm{x}\neq \bm{y}$, the ground distance $c(\bm{x},\bm{y})$ is always greater than 0, and if the transport plan $\pi(\bm{x},\bm{y})$ takes a nonzero value , then $\mathcal{L}_c(\alpha,\beta)$ is always greater than 0.
	This indicates that in order for $\mathcal{L}_c(\alpha,\beta)=0$, the transport plan $\pi(\bm{x},\bm{y})$ should be a function that concentrated on $\bm{x}=\bm{y}$, i.e., the transport plans can be written as $\pi(\bm{x},\bm{y})=f(\bm{x})\delta(\bm{x}-\bm{y})$ with Dirac delta function $\delta(\bm{x}-\bm{y})$.
	Then, the constraint $\int \pi(\bm{x},\bm{y}) d\bm{x}=\beta(\bm{x})$ of Eq.~\eqref{eq:optimaltransportloss} leads to  $f(\bm{y})=\beta(\bm{y})$.
	Thus, we have $\pi(\bm{x},\bm{y})=\beta(\bm{x})\delta(\bm{x}-\bm{y})$, and we finally obtain $\alpha(\bm{x})=\beta(\bm{x})$ by taking into account the constraint $\int \pi(\bm{x},\bm{y}) d\bm{y}=\alpha(\bm{x})$.
\end{proof}

\section{Simulation results of Eq.(\ref{eq:fitting})}
\label{sec:fittingresult}
Here, we give the result of numerical simulations for the approximate error of the empirical loss Eq.~\eqref{eq:numericalExperiment_differentdistribution} and the result of fitting them with Eq.~\eqref{eq:fitting}.
The results are shown in Fig.\ref{fig:numericalExperiment_dif}.
The parameters of the simulation are given in Table \ref{tab:ExperimentalParamters_dimensionDependence}.
Fig.\ref{fig:result_fittingparameter} are created based on the result of Fig.\ref{fig:numericalExperiment_dif}.

\begin{figure}[tb]
	\centering
	\subfigure[$n=1$]{\includegraphics[width=.4\linewidth]{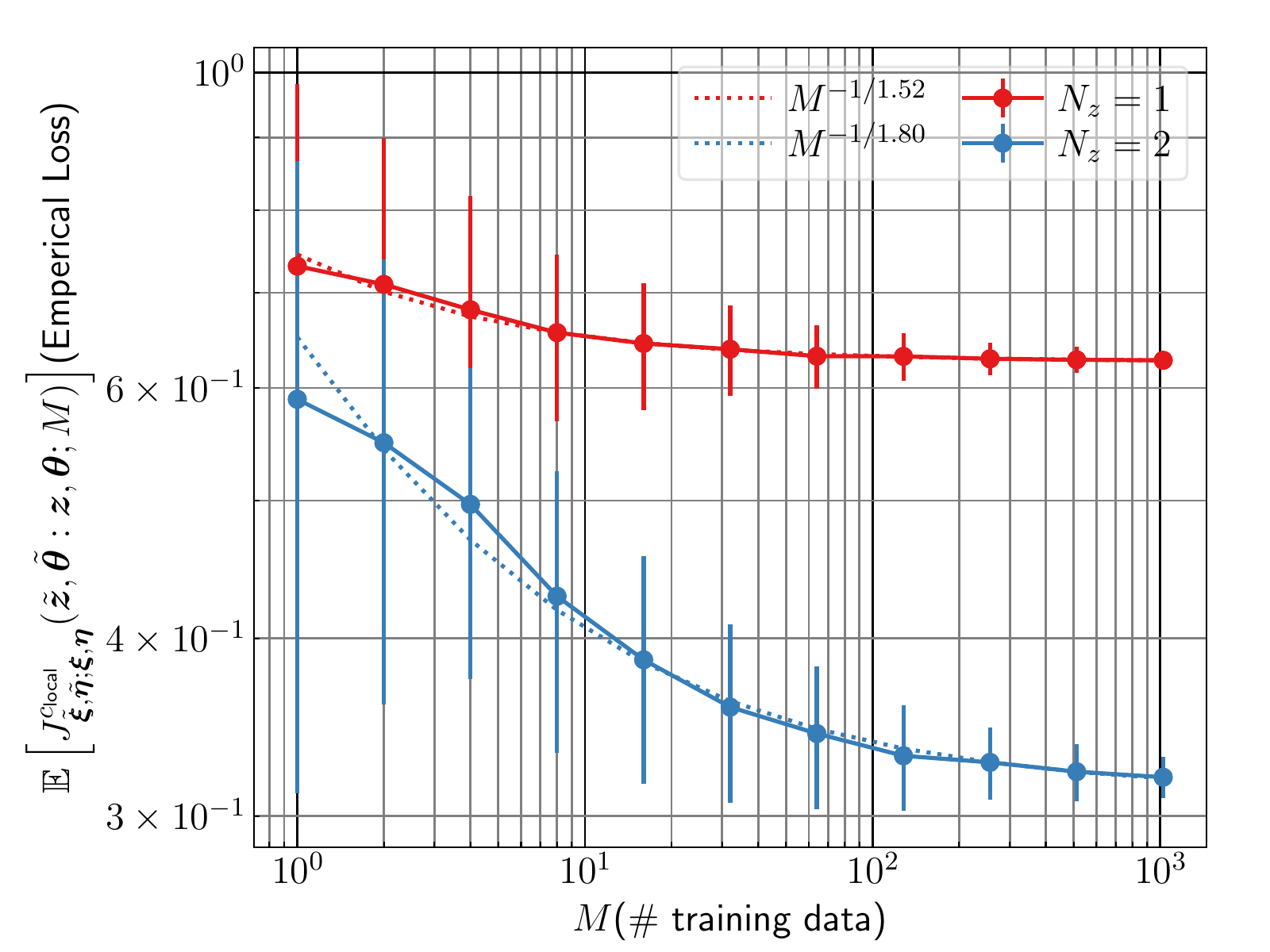}}
	\subfigure[$n=2$]{\includegraphics[width=.4\linewidth]{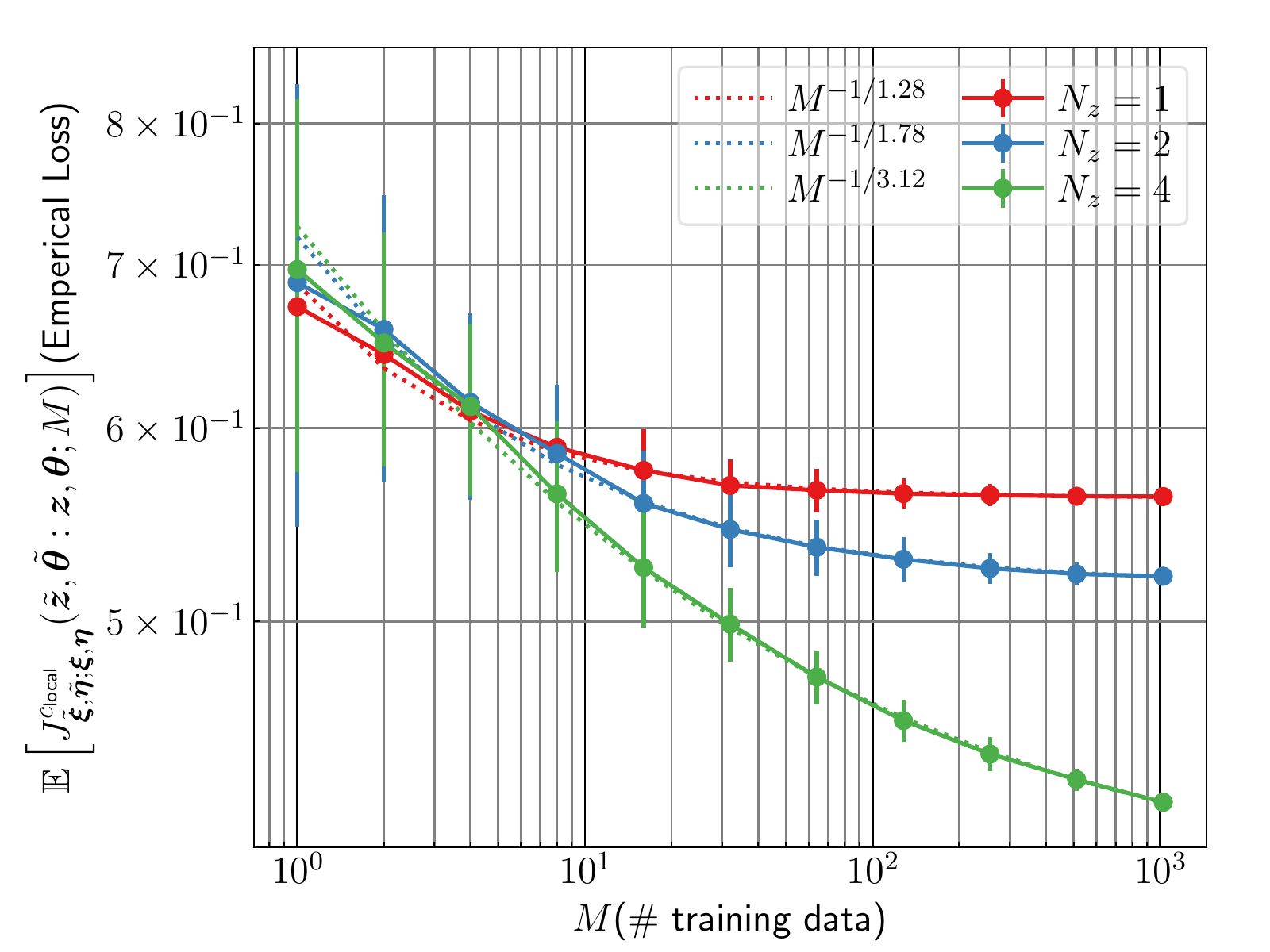}}
	\subfigure[$n=4$]{\includegraphics[width=.4\linewidth]{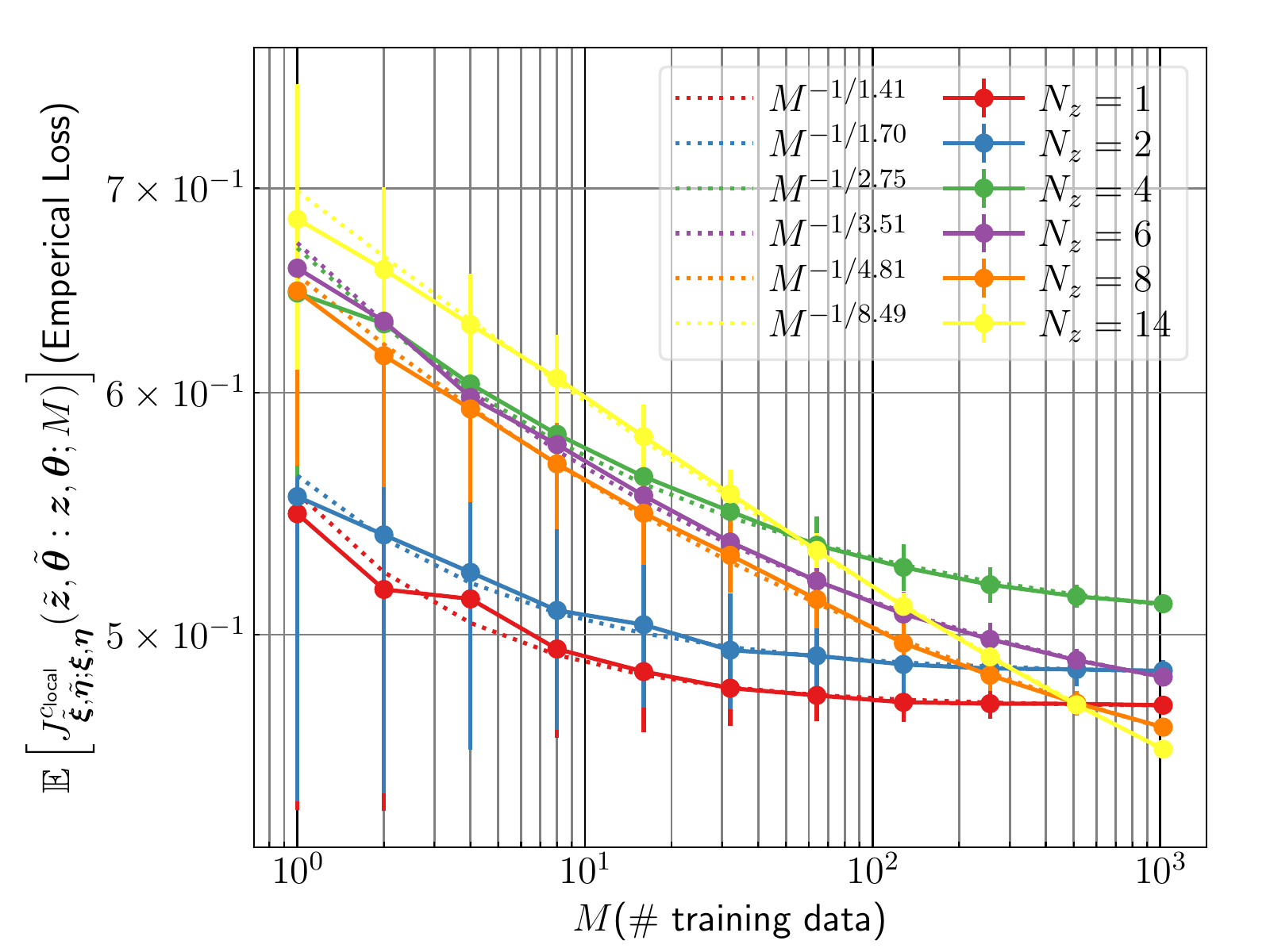}}
	\subfigure[$n=6$]{\includegraphics[width=.4\linewidth]{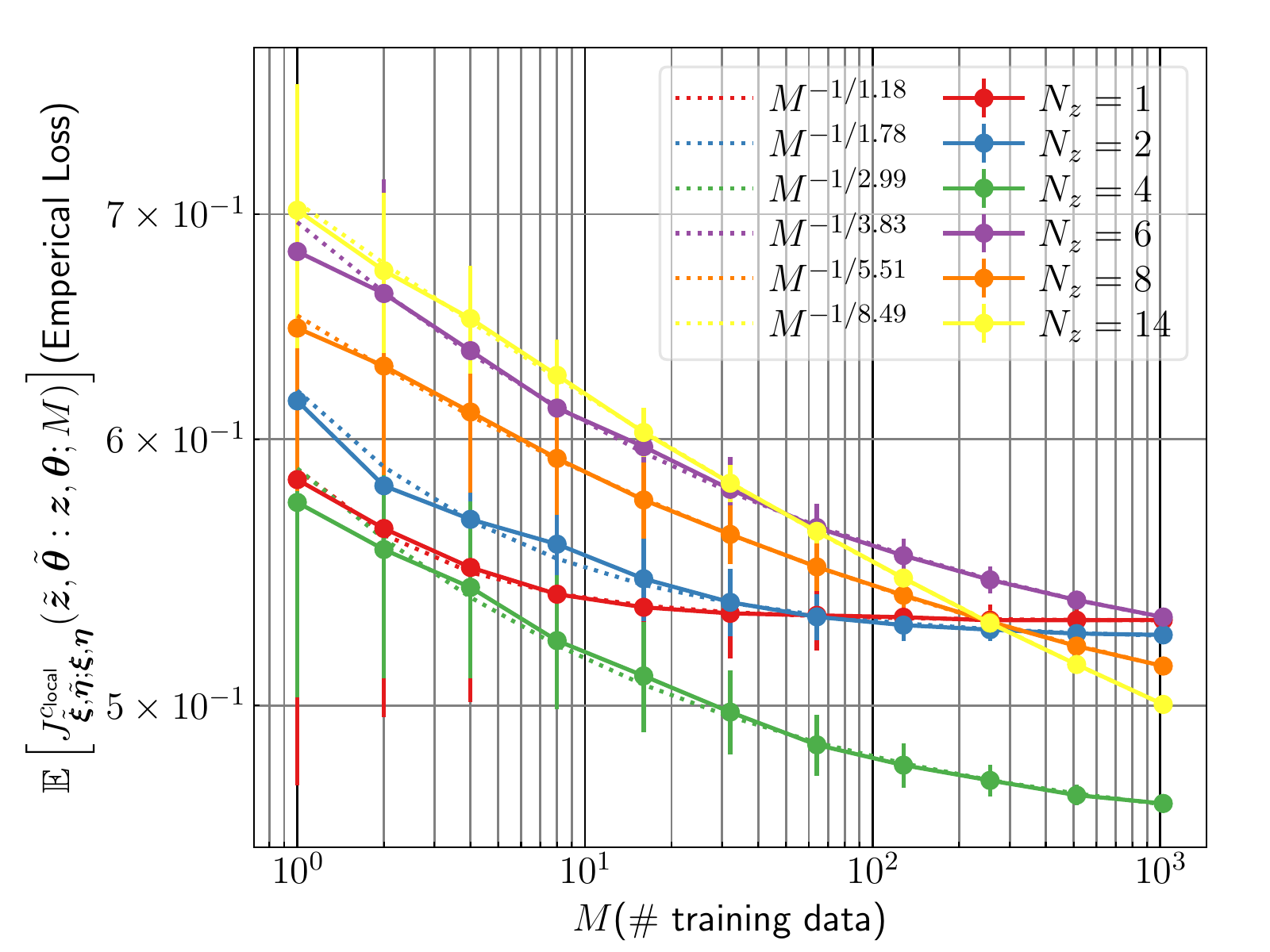}}
	\subfigure[$n=8$]{\includegraphics[width=.4\linewidth]{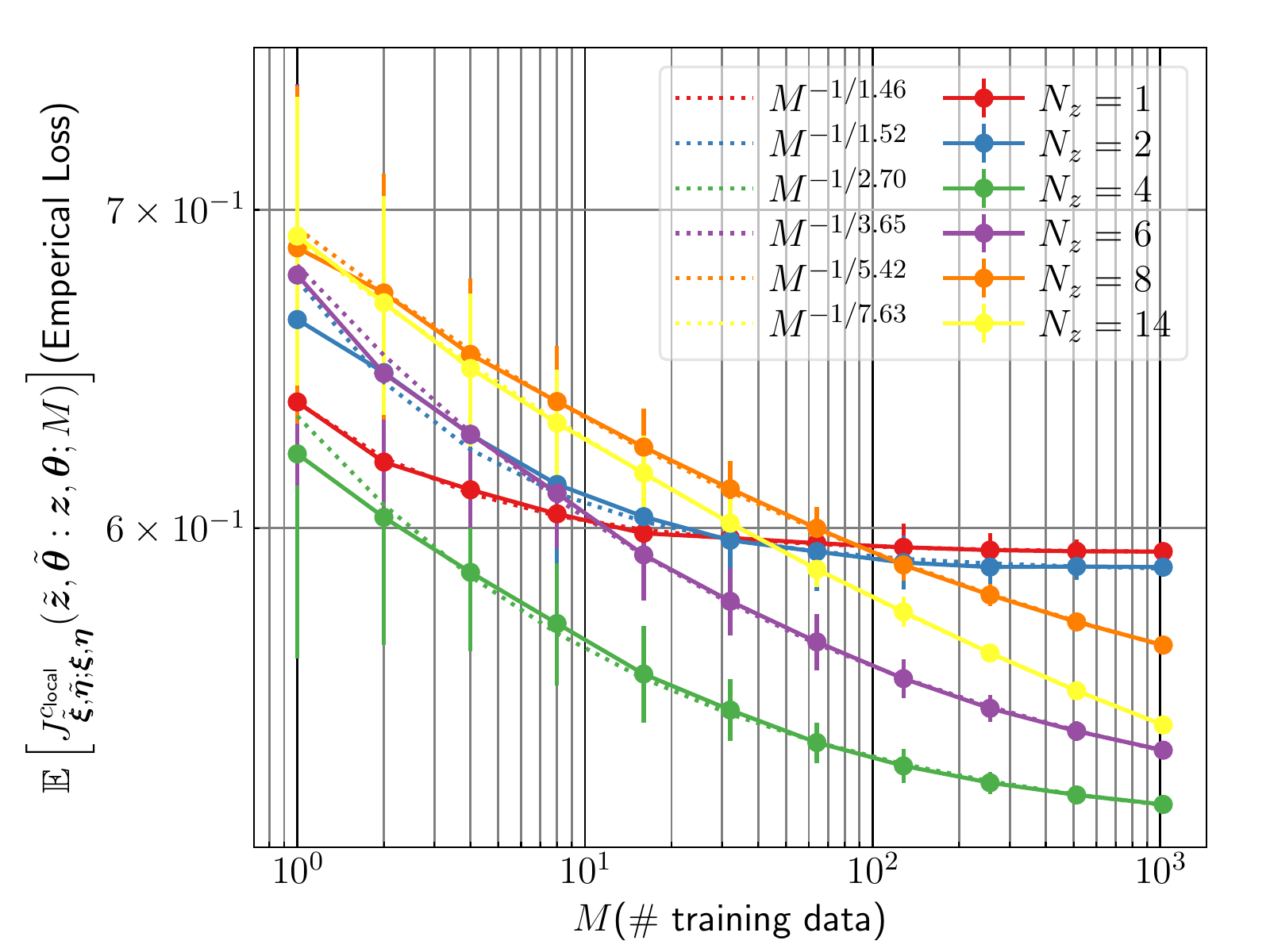}}
	\caption{
		Simulation results of relations ship between the approximation error of the proposed loss Eq.~\eqref{eq:numericalExperiment_differentdistribution} and number of training data $M$.
		Ths points represent the result of the proposed loss, and the dotted lines represent the fitting result with Eq.~\eqref{eq:fitting}.
		The values of parameter $B$ obtained by the fitting is shown in the legend of the dotted lines.
	}\label{fig:numericalExperiment_dif}
\end{figure}
\section{Proof of Proposition~\ref{prop:err_shotdependence}}
\label{sec:proof_shotdependence}
Here, we provide the proof of Proposition~\ref{prop:err_shotdependence} in Sec.~\ref{sec:dependenceShotNumber}.
To prove the proposition, we exploit the following lemma.
\begin{lemma}
	Suppose that $X_i\ (i=1,2,\ldots,N_s)$ are i.i.d. random variables with positive values and expected value $p=\mathbb{E}[X_i]$.
	Then, the following inequalities hold for the expected value of root mean $\mathbb{E}[\sqrt{\bar{X}}]$ and the variance $\mathbb{V}[\sqrt{\bar{X}}]$ of $\bar{X} = \frac{1}{N_s}\sum_{i=1}^{N_s} X_i$:
	\begin{align}
		 & \sqrt{p} - \frac{1-p}{2 N_s \sqrt{p}} \le \mathbb{E}[\sqrt{\bar{X}}]\le \sqrt{p}, \label{eq:inequality_mean} \\
		 & \mathbb{V}[\sqrt{\bar{X}}] \le \frac{1-p}{N_s} + \frac{(1-p)^2}{4N_s^2 p}. \label{eq:inequality_variance}
	\end{align}
	\label{lemm:sqrt_vars}
\end{lemma}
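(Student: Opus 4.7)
The plan is to treat the two inequalities as coupled: the upper bound and lower bound on $\mathbb{E}[\sqrt{\bar X}]$ give the bias of $\sqrt{\bar X}$ as an estimator of $\sqrt{p}$, and the variance bound then follows from the identity $\mathbb{V}[\sqrt{\bar X}] = \mathbb{E}[\bar X] - (\mathbb{E}[\sqrt{\bar X}])^2 = p - (\mathbb{E}[\sqrt{\bar X}])^2$. Throughout I will use the implicit modeling assumption (clear from the context of Proposition~\ref{prop:err_shotdependence}) that each $X_i$ is the indicator of a single-shot measurement outcome, so $X_i\in\{0,1\}$ is Bernoulli with mean $p$; in particular $\mathbb{V}[X_i] = p(1-p)$ and hence $\mathbb{V}[\bar X] = p(1-p)/N_s$.

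The upper bound $\mathbb{E}[\sqrt{\bar X}]\le \sqrt{p}$ is immediate from Jensen's inequality applied to the concave function $\sqrt{\cdot}$, since $\mathbb{E}[\bar X] = p$. For the lower bound I will use the algebraic identity $\sqrt{x}-\sqrt{p} = (x-p)/(\sqrt{x}+\sqrt{p})$, which after rearrangement gives
\begin{equation*}
\sqrt{x} \;=\; \sqrt{p} + \frac{x-p}{2\sqrt{p}} - \frac{(x-p)^{2}}{2\sqrt{p}\,(\sqrt{x}+\sqrt{p})^{2}}.
\end{equation*}
Substituting $x=\bar X$ and taking expectations annihilates the linear term. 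The remaining quadratic term is non-negative, and is bounded above using the trivial inequality $(\sqrt{\bar X}+\sqrt{p})^{2}\ge p$, which yields $\mathbb{E}[\sqrt{\bar X}] \ge \sqrt{p} - \mathbb{V}[\bar X]/(2 p^{3/2})$. Inserting $\mathbb{V}[\bar X] = p(1-p)/N_s$ gives precisely the required lower bound $\sqrt{p} - (1-p)/(2 N_s \sqrt{p})$.

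For the variance inequality, I will start from $\mathbb{V}[\sqrt{\bar X}] = p - (\mathbb{E}[\sqrt{\bar X}])^{2}$ and plug in the lower bound just obtained. Provided $N_s$ is large enough that the right-hand side of Eq.~\eqref{eq:inequality_mean} is non-negative, squaring preserves the inequality and gives $(\mathbb{E}[\sqrt{\bar X}])^{2} \ge p - (1-p)/N_s + (1-p)^{2}/(4 N_s^{2} p)$, so that $\mathbb{V}[\sqrt{\bar X}] \le (1-p)/N_s - (1-p)^{2}/(4 N_s^{2} p)$, which is strictly stronger than the stated bound. The small-$N_s$ regime in which the mean lower bound is trivially negative must be handled separately: there the claimed upper bound dominates $p$ itself, and since $\mathbb{V}[\sqrt{\bar X}] \le \mathbb{E}[\bar X] = p$ always holds, the conclusion is immediate. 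The main obstacle is simply being careful about these sign considerations and about explicitly invoking the Bernoulli structure (which is not stated in the lemma but is clearly what allows the bound on $\mathbb{V}[\bar X]$); the rest is algebraic.
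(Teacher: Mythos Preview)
Your proof is correct and follows essentially the same architecture as the paper's: Jensen for the upper bound, a second-order lower bound yielding $\mathbb{E}[\sqrt{\bar X}]\ge \sqrt{p}-\mathbb{V}[\bar X]/(2p^{3/2})$, and then the identity $\mathbb{V}[\sqrt{\bar X}]=p-(\mathbb{E}[\sqrt{\bar X}])^{2}$ for the variance. The only difference is the algebraic device for the lower bound: the paper substitutes $r=\bar X/p$ into the pointwise inequality $\sqrt{r}\ge(3r-r^{2})/2$, which after taking expectations gives exactly the same intermediate bound you obtain via the exact remainder identity $\sqrt{x}=\sqrt{p}+\tfrac{x-p}{2\sqrt{p}}-\tfrac{(x-p)^{2}}{2\sqrt{p}(\sqrt{x}+\sqrt{p})^{2}}$. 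You are in fact more careful than the paper on two points it leaves implicit---the need for $\mathbb{V}[X_i]\le p(1-p)$ (hence the Bernoulli or $[0,1]$-valued assumption, not stated in the lemma), and the sign case when the mean lower bound is negative---and you correctly observe that the argument actually delivers $\frac{1-p}{N_s}-\frac{(1-p)^{2}}{4N_s^{2}p}$, a sharper bound than the stated one with the $+$ sign.
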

\begin{proof}
	The inequality on the right side of Eq.~\eqref{eq:inequality_mean} is obvious from Jensen's inequality.
	The inequality on the left side of Eq.~\eqref{eq:inequality_mean} can be ontained by substituting $r=\frac{\bar{X}}{\mathbb{E}(\bar{X})}$ into the the inequality $\sqrt{r}\ge \frac{-r^2+3 r}{2}$, which holds for any non-negative number $r$.
	The second inequality for the variance Eq.~\eqref{eq:inequality_variance} be obtained by substituting the first inequality Eq.~\eqref{eq:inequality_mean} into the definition of the variance $\mathbb{V}[\sqrt{\bar{X}}]=\mathbb{E}\left[\left(\sqrt{\bar{X}-\mathbb{E}(\sqrt{\bar{X}})}\right)^2\right]$.
\end{proof}
Here, the random variable of this lemma $X_i$ would be identified with the random variable $X_i=\sum_{k=1}^n X_{i,j,k}^{(s)}$, where $X_{i,j,k}^{(s)}$ is defined just above Eq.~\eqref{eq:localcost_estimator}.

Next, we consider the following lemma, which indicates that if the difference between two ground costs $c_1,c_2$ is sufficiently small, the difference between corresponding two optimal transport losses $\mathcal{L}_{c^1},\mathcal{L}_{c^2}$ is also sufficiently small.
\begin{lemma}
	Denote the optimal transport plans of Eq.~\eqref{eq:optimaltransportloss} between the empirical distributions $\hat{\alpha}_M(x)=\frac{1}{M}\sum_{i=1}^M\delta(\bm{x}-\bm{x}_i)$ and $\hat{\beta}_M(y)=\frac{1}{M}\sum_{j=1}^M\delta(\bm{y}-\bm{y}_j)$ for two different ground costs $c_1$ and $c_2$ as $\pi^1_{i,j}=\pi^1(\bm{x}_i,\bm{y}_j)$ and $\pi^2_{i,j}=\pi^2(\bm{x}_i,\bm{y}_j)$, respectively.
	Also, denote the non-zero components of the optimal transport plans  $\pi^1_{i,j}$ and $\pi^2_{i,j}$ as $A^1=\{(i,j)| \pi^1_{i,j}>0\}$ and $A^2=\{(i,j)| \pi^2_{i,j}>0\}$, respectively.
	Suppose that the difference between the ground costs $c_1$ and $c_2$ for any $(i,j) \in A^1 \cup A^2$ is upper bounded with a constant $t$ as follows:
	\begin{equation}
		|c^1(\bm{x}_i,\bm{y}_j)-c^2(\bm{x}_i,\bm{y}_j)| < t.
	\end{equation}
	Then, the difference between the corresponding optimal transport losses $\mathcal{L}_{c^1}$ and $\mathcal{L}_{c^2}$ follows 
	\begin{equation}
		|\mathcal{L}_{c^1}(\hat{\alpha}_M,\hat{\beta}_M) -	\mathcal{L}_{c^2}(\hat{\alpha}_M,\hat{\beta}_M)|  < t.
	\end{equation}
	\label{lemma:costdiff_vs_Wassersteindiff}
\end{lemma}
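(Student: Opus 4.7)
The plan is to exploit the standard fact that the optimal transport plan for one ground cost is still a \emph{feasible} plan for the other ground cost, since feasibility depends only on the marginal constraints and non-negativity, not on the objective. This gives immediate one-sided comparisons between $\mathcal{L}_{c^1}$ and $\mathcal{L}_{c^2}$, after which the hypothesis $|c^1-c^2|<t$ on $A^1\cup A^2$ does the rest.

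Concretely, I would first write out the chain
\begin{equation*}
\mathcal{L}_{c^1}(\hat{\alpha}_M,\hat{\beta}_M)
= \sum_{i,j} c^1(\bm{x}_i,\bm{y}_j)\pi^1_{i,j}
\le \sum_{i,j} c^1(\bm{x}_i,\bm{y}_j)\pi^2_{i,j},
\end{equation*}
where the inequality uses that $\pi^2$ satisfies the marginal constraints of the $c^1$-transport problem, so it is a competitor in the minimization defining $\mathcal{L}_{c^1}$. Subtracting $\mathcal{L}_{c^2}(\hat{\alpha}_M,\hat{\beta}_M)=\sum_{i,j} c^2(\bm{x}_i,\bm{y}_j)\pi^2_{i,j}$ yields
\begin{equation*}
\mathcal{L}_{c^1} - \mathcal{L}_{c^2}
\le \sum_{(i,j)\in A^2} \bigl(c^1(\bm{x}_i,\bm{y}_j)-c^2(\bm{x}_i,\bm{y}_j)\bigr)\pi^2_{i,j}
< t \sum_{(i,j)\in A^2} \pi^2_{i,j} = t,
\end{equation*}
since the total mass of any feasible plan between two empirical distributions on $M$ points equals $1$ (the marginals $\tfrac{1}{M}$ sum to $1$), and by hypothesis the pointwise gap of the ground costs is strictly below $t$ on every index with $\pi^2_{i,j}>0$.

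Then I would swap the roles of $c^1$ and $c^2$: the plan $\pi^1$ is feasible for the $c^2$-problem, so $\mathcal{L}_{c^2}\le \sum c^2 \pi^1$, and the same subtraction argument on the set $A^1$ produces $\mathcal{L}_{c^2}-\mathcal{L}_{c^1}<t$. Combining the two inequalities gives $|\mathcal{L}_{c^1}-\mathcal{L}_{c^2}|<t$, which is the claim.

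There is no real obstacle here; the only thing to be careful about is that the bound $|c^1-c^2|<t$ is only assumed on $A^1\cup A^2$ and not everywhere, but this is exactly enough because the two estimates above only probe the cost on $A^2$ and $A^1$ respectively, which is precisely why the hypothesis is stated that way. The rest is bookkeeping on the marginal constraints to confirm that $\sum_{i,j}\pi^k_{i,j}=1$ for $k=1,2$.
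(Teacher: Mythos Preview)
Your proof is correct and follows essentially the same argument as the paper: use that the optimal plan $\pi^2$ is feasible for the $c^1$-problem to get $\mathcal{L}_{c^1}\le\sum c^1\pi^2$, subtract $\mathcal{L}_{c^2}=\sum c^2\pi^2$, and bound the difference on $A^2$ using the hypothesis and the unit total mass; then swap roles. The paper writes out only one direction and declares the other analogous, but otherwise the reasoning is identical.
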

\begin{proof}
	Here, we only show the proof of $\mathcal{L}_{c^1}(\hat{\alpha}_M,\hat{\beta}_M) -	\mathcal{L}_{c^2}(\hat{\alpha}_M,\hat{\beta}_M) < t$. The other inequality $\mathcal{L}_{c^2}(\hat{\alpha}_M,\hat{\beta}_M) -	\mathcal{L}_{c^1}(\hat{\alpha}_M,\hat{\beta}_M) < t$ can be proved in the same way.
	\begin{align*}
		\mathcal{L}_{c^1}(\hat{\alpha}_M,\hat{\beta}_M) -\mathcal{L}_{c^2}(\hat{\alpha}_M,\hat{\beta}_M) & = \min_{\pi} \int c^1(\bm{x},\bm{y})d\pi(\bm{x},\bm{y})-\min_{\pi} \int c^2(\bm{x},\bm{y})d\pi(\bm{x},\bm{y})                   \\
		                                                                                                 & = \min_{\pi} \int c^1(\bm{x},\bm{y})d\pi(\bm{x},\bm{y})-\sum_{i,j=1}^M c^2(\bm{x}_i,\bm{y}_j)\pi^2(\bm{x}_i,\bm{y}_j)           \\
		                                                                                                 & \le \sum_{i,j=1}^M c^1(\bm{x}_i,\bm{y}_j)\pi^2(\bm{x}_i,\bm{y}_j)-\sum_{i,j=1}^M c^2(\bm{x}_i,\bm{y}_j)\pi^2(\bm{x}_i,\bm{y}_j) \\
		                                                                                                 & \le \sum_{i,j=1}^M t\pi^2(\bm{x}_i,\bm{y}_j) = t
	\end{align*}
\end{proof}
Lemma~\ref{lemma:costdiff_vs_Wassersteindiff} immediately leads the following lemma.
\begin{lemma}
	\label{lemma:upperbound_otloss}
Given a set of positive constant $t$ and $\delta$, which satisfies 
	\begin{equation}
		P(|c^1(\bm{x}_i,\bm{y}_j) - c^2(\bm{x}_i,\bm{y}_j)| \ge t) \le \frac{\delta }{2M}
		\label{eq:inequality_prob}
	\end{equation}
	for any component $i,j\in \{1,2,\ldots,M\}$, where the notations are same as Lemma~\ref{lemma:costdiff_vs_Wassersteindiff}.
	Then, the following inequality holds for the corresponding optimal transport losses:
	\begin{equation}
		P(|\mathcal{L}_{c^1}(\hat{\alpha}_M,\hat{\beta}_M) - \mathcal{L}_{c^2}(\hat{\alpha}_M,\hat{\beta}_M)| \ge t) \le \delta.
	\end{equation}
\end{lemma}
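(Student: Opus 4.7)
The plan is to deduce this probabilistic statement from the deterministic bound in Lemma~\ref{lemma:costdiff_vs_Wassersteindiff} via a union bound, made sharp by exploiting the sparsity of optimal transportation plans that arises from the uniformly weighted marginals $\hat{\alpha}_M$ and $\hat{\beta}_M$.

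First, I would take the contrapositive of Lemma~\ref{lemma:costdiff_vs_Wassersteindiff}: on the event $\{|\mathcal{L}_{c^1}(\hat{\alpha}_M,\hat{\beta}_M) - \mathcal{L}_{c^2}(\hat{\alpha}_M,\hat{\beta}_M)| \ge t\}$ there must be at least one pair $(i,j) \in A^1 \cup A^2$ with $|c^1(\bm{x}_i,\bm{y}_j) - c^2(\bm{x}_i,\bm{y}_j)| \ge t$, giving
\begin{equation}
P\bigl(|\mathcal{L}_{c^1}(\hat{\alpha}_M,\hat{\beta}_M) - \mathcal{L}_{c^2}(\hat{\alpha}_M,\hat{\beta}_M)| \ge t\bigr) \le P\bigl(\exists (i,j) \in A^1 \cup A^2 : |c^1(\bm{x}_i,\bm{y}_j) - c^2(\bm{x}_i,\bm{y}_j)| \ge t\bigr).
\end{equation}

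Second, I would invoke the extreme-point structure of the transportation polytope. Because both empirical measures place uniform mass $1/M$ on their atoms, the underlying LP is a balanced assignment problem, and by the Birkhoff--von~Neumann theorem its extreme optimal solutions are $1/M$ times permutation matrices. Choosing such extreme optima for both $\pi^1$ and $\pi^2$, the supports satisfy $|A^1|, |A^2| \le M$ and hence $|A^1 \cup A^2| \le 2M$. A union bound over these at most $2M$ pairs, combined with the per-pair hypothesis~\eqref{eq:inequality_prob}, then yields
\begin{equation}
P\bigl(\exists (i,j) \in A^1 \cup A^2 : |c^1(\bm{x}_i,\bm{y}_j) - c^2(\bm{x}_i,\bm{y}_j)| \ge t\bigr) \le 2M \cdot \frac{\delta}{2M} = \delta,
\end{equation}
which is the desired bound.

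The main obstacle is making this final union bound rigorous when $A^2$ is a random set: since $A^2$ is determined by the random cost $c^2$, the events $\{(i,j) \in A^2\}$ and $\{|c^1(\bm{x}_i,\bm{y}_j) - c^2(\bm{x}_i,\bm{y}_j)| \ge t\}$ are correlated, so one cannot simply fix $2M$ pairs in advance and union bound over their probabilities. I would handle this by splitting $|\mathcal{L}_{c^1} - \mathcal{L}_{c^2}|$ into its two one-sided halves: the inequality $\mathcal{L}_{c^2} - \mathcal{L}_{c^1} \le \max_{(i,j) \in A^1}|c^1(\bm{x}_i,\bm{y}_j) - c^2(\bm{x}_i,\bm{y}_j)|$ is controlled by the deterministic support $A^1$ of size at most $M$ and contributes $\delta/2$ via a plain union bound over fixed pairs, while the symmetric direction involving the random $A^2$ needs to be absorbed into the remaining $\delta/2$ budget by exploiting the deterministic size constraint $|A^2| \le M$ together with the independence of the per-pair estimator errors across measurement batches; verifying this step is the technical heart of the argument.
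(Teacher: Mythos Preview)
Your approach is essentially the paper's: exploit the Birkhoff/extreme-point structure so that $|A^1\cup A^2|\le 2M$, apply the per-pair tail hypothesis over those indices, and feed the resulting event into Lemma~\ref{lemma:costdiff_vs_Wassersteindiff}. The only superficial difference is that the paper phrases the probabilistic step on the complementary intersection,
\[
P\Bigl(\bigcap_{(i,j)\in A^1\cup A^2}\bigl\{|c^1(\bm{x}_i,\bm{y}_j)-c^2(\bm{x}_i,\bm{y}_j)|<t\bigr\}\Bigr)\ \ge\ \Bigl(1-\tfrac{\delta}{2M}\Bigr)^{2M}\ \ge\ 1-\delta,
\]
using a product form (implicitly assuming independence of the per-pair errors) in place of your union bound; both routes land on the same~$\delta$.

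The concern you raise about $A^2$ being a random set is legitimate, but the paper's proof does not engage with it: it simply restricts to permutation-type optima, treats $A^1\cup A^2$ as if it were a fixed index set of size at most $2M$, and applies the bound directly. So on that point you are already being more careful than the source, and your one-sided splitting is a reasonable attempt to close a gap the paper itself leaves open.
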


\begin{proof}
	For simplicity, we prove only for the the solution of the optimal transport plan that $M$ components take the value $1/M$ and the other components take the value $0$, but proofs for the other solutions can be performed in the same manner.
	For such a solution, the number of non-zero components is at most $|A^1 \cup A^2| \le 2 M$, and using the assumption Eq.~\eqref{eq:inequality_prob}, the probability that those non-zero components have an error within $t$ is upper bounded as 
	\begin{equation}
		P\left(\cap_{(i,j)\in A^1 \cup A^2}|c^1(\bm{x}_i,\bm{y}_j) - c^2(\bm{x}_i,\bm{y}_j)|< t\right) > \left( 1-\frac{\delta}{2M} \right)^{2M} \ge 1-\delta.
	\end{equation}
	Then, Lemma~\ref{lemma:upperbound_otloss} can be straightforwardly proven by taking into account that Lemma~\ref{lemma:costdiff_vs_Wassersteindiff} leads 
	\begin{equation}
		P\left(\cap_{(i,j)\in A^1 \cup A^2}|c^1(\bm{x}_i,\bm{y}_j) - c^2(\bm{x}_i,\bm{y}_j)|< t\right)   \le P(|\mathcal{L}_{c^1}(\hat{\alpha}_M,\hat{\beta}_M) - \mathcal{L}_{c^2}(\hat{\alpha}_M,\hat{\beta}_M)| \le t).
	\end{equation}
\end{proof}

Now, we are ready to prove Proposition~\ref{prop:err_shotdependence}.
\begin{proof}
	Under the same setting as Lemma~\ref{lemm:sqrt_vars}, Chebyshev inequality for the random variable $\sqrt{\bar{X}}$ leads, with some positive constant $k$,
	\begin{equation}
		\begin{split}
			\frac{1}{k^2}
			&\ge P \left( \left|\sqrt{\bar{X}} -\mathbb{E}\left(\sqrt{\bar{X}}\right) \right| \ge k \sqrt{\mathbb{V}\left[\sqrt{\bar{X}}\right]}\right) \\
			&\ge P \left( \left|\sqrt{\bar{X}} -\sqrt{p}\right| - \left| \sqrt{p}-\mathbb{E}\left(\sqrt{\bar{X}}\right) \right| \ge k \sqrt{\mathbb{V}\left[\sqrt{\bar{X}}\right]}\right) \\
			&\ge P \left( \left|\sqrt{\bar{X}} -\sqrt{p}\right|  \ge k \sqrt{ \frac{1-p}{N_s} + \frac{(1-p)^2}{4N_s^2 p}}+\frac{1-p}{2 N_s \sqrt{p}}\right)
			\label{eq:inequality_chebyshev}
		\end{split}
	\end{equation}
	By setting $\sqrt{\bar{X}} = \mathcal{L}_{\tilde{c}}$ and $\sqrt{p}=\mathcal{L}_c$, and comparing Eqs.\eqref{eq:inequality_prob} and \eqref{eq:inequality_chebyshev}, we obtain Proposition~\ref{prop:err_shotdependence} by setting 
	\begin{equation*}
		\begin{split}
			t &= \sqrt{\frac{2n}{\delta}}\sqrt{ \frac{1-p}{N_s} + \frac{(1-p)^2}{4N_s^2 p}}+\frac{1-p}{2 N_s \sqrt{p}}.
		\end{split}
	\end{equation*}
\end{proof}

\section{Intuitive explanation on the shape of Fig.\ref{fig:numericalExperiment_shotDependence}}
\label{sec:explain_shotDependence}
Here, we give the rough explanation of the dependence of the mean approximation error on the number of training data, presented in Fig.\ref{fig:numericalExperiment_dif} of Sec.\ref{sec:dependenceShotNumber}.
Throughout this section, we assume that the number of shots $N_s$ is sufficiently large to hold the asymptotic theory.

We first focus on the case of small number of training data $M$, where the approximation error behaves like $M^{-1/2}$.
In this case, the training data $\{\bm{x}_i\}_{i=1}^M$ would be well separated from each other and the optimal transport plan $\{\pi_{i,j}\}_{i,j=1}^M$ is not expected to be affected by the number of shots $N_s$.
In most cases, the number of non-zero elements of optimal transport plan $A=\{(i,j)| \pi_{i,j}>0\}$ is $M$ and the value of those are $1/M$.

The estimated value of the ground cost $c_{\textrm{local},i,j}$ of Eq.~\eqref{eq:localcost_estimator} with $N_s$ shots is given by $\tilde{c}_{\text{local},i,j}^{({N_s})} = \sqrt{ \frac{1}{n}\sum_{k=1}^n \frac{1}{N_s}\sum_{s=1}^{N_s} X_{i,j,k}^{(s)} }$, where $X_{i,j,k}^{(s)}$ are random variables following the Bernoulli distribution with probability $1-p^{(k)}_{i,j}$.
Due to the central limit theorem, the inside of the root $Y_{i,j}^{(s)}= \frac{1}{n}\sum_{k=1}^n X_{i,j,k}^{(s)}$ asymptotically converge to a normal distribution $\sqrt{N_s}(\sum_{s=1}^{N_s}Y_{i,j}^{(s)}/N_s-\mu_{i,j})\sim \mathcal{N}( 0,\sigma_{i,j}^2 )$ with mean $\mu_{i,j}=\sum_{k=1}^n(1-p^{(k)}_{i,j})$ and variance $\sigma_{i,j}^2=\sum_{k=1}^n(1-p^{(k)}_{i,j})p^{(k)}_{i,j}$.
Thus, delta method tells us that the approximation error of the ground cost follows $\sqrt{N_s}\left(
	\tilde{c}^{(N_s)}_{\textrm{local},i,j}-c_{\textrm{local},i,j}\right)\sim \mathcal{N}\left(0,\frac{\sigma^2_{i,j}}{4\mu_{i,j}}\right)$.

	Assume that the means and variances are almost the same for all the components, i.e., $\mu_{i,j}\approx \mu$, $\sigma_{i,j}\approx\sigma$ $\forall i,j$, the approximation error of the optimal transport loss due to the number of shots can be written as 
\begin{equation}
	\begin{split}
		\mathcal{L}_{\tilde{c}_{\rm{local}}^{(N_s)}} -\mathcal{L}_{c_{\rm{local}}}
		&= \frac{1}{M}\sum_{(i,j)\in A} \left(\tilde{c}^{(N_s)}_{\textrm{local},i,j}-c_{\textrm{local},i,j}\right) \\
		& \sim \mathcal{N}\left(0,\sum_{(i,j)\in A}\frac{\sigma^2_{i,j}}{ 4N_s\mu_{i,j}M^2}\right)\\
		& \approx \mathcal{N}\left(0,\frac{\sigma^2}{ 4N_s\mu M}\right).
	\end{split}
\end{equation}
Thus the approximation error with small number of training data behaves like $(N_s M)^{-1/2}$ under the condition shown here.

On the other hand, behavior of the approximation error in the range of large number of training data $M$ is explained by the theory of the extreme value distribution\cite{de2006extreme}.
In the case of large number of training data, it would be expected that there are many ground costs $c_{\textrm{local},i,j}$ with almost the same value.
As an extreme case, consider the case where all ground costs have a common constant value, $c_{\textrm{local},i,j}=c,\ \forall i,j$.
Again assume that the number of shots $N_s$ is sufficiently large, then the ground cost $\tilde{c}^{(N_s)}_{\textrm{local},i,j}$ follows a normal distribution, which we denote as $\mathcal{N}\left(c,\frac{\sigma^2}{ N_s}\right)$.
Then, using i.i.d random numbers $\{X_{i,j}\}_{i,j=1}^M$ which follow a normal distribution $\mathcal{N}\left(0,\frac{\sigma^2}{N_s}\right)$, the approximation error can be written as 
\begin{equation}
	\begin{split}
		\mathcal{L}_{\tilde{c}_{\rm{local}}^{(N_s)}} -\mathcal{L}_{c_{\rm{local}}}
		&\approx\min_{\{\pi_{i,j}\}_{i,j=1}^{M} } \sum_{i,j=1}^{{M}}X_{i,j}\pi_{i,j}, \\
		\mathrm{subject\ to} \quad    & \sum_{i=1}^{M}\pi_{i,j} = \frac{1}{{M}},\sum_{j=1}^{M_g}\pi_{i,j} = \frac{1}{{M}},\pi_{i,j} \ge 0.
	\end{split}
\end{equation}
Now we approximate this minimization by greedy algorithm, i.e., consider first obtaining the minimum value $X_{i_1,j_1}$ from the $M^2$ components, and then the second minimum value $X_{i_2,j_2}$ from the rest $(M-1)^2$ components other than $i$-th row and $j$-th column, and so on.
Denoting the cumulative distribution function of a random variable $X_{i,j}$ as $F(x)$, the distribution of the minimum value of the $k$ data can be written as 
\begin{equation}
	\begin{split}
		G(x,k) &= 1-(1-F(x))^k, \\
		p(x,k) &= \frac{dG(x,k)}{dx} = k\frac{d F(x)}{dx}(1-F(x))^{k-1}.
	\end{split}
\end{equation}
Then the probability density at which $x_1,x_2,x_3,\ldots,x_M$ are obtained from the greedy algorithm is given as 
\begin{equation}
	\begin{split}
		&p(x_1,x_2,\ldots,x_M) \\
		&=
		p(x_1,M^2)
		\frac{p(x_2,(M-1)^2)\theta(x_2-x_1)}{1-G(x_1,(M-1)^2)}
		\frac{p(x_3,(M-2)^2)\theta(x_3-x_2)}{1-G(x_2,(M-2)^2)}
		\times\cdots \times
		\frac{p(x_M,1^2)\theta(x_M-x_{M-1})}{1-G(x_{M-1},1^2)}\\
		&=\prod_{k=1}^M \frac{k^2}{2k-1}p(x_k,2k-1)\theta(x_k-x_{k-1}),
	\end{split}
\end{equation}
where $\theta(x)$ denotes a step function, and we set $x_0=-\infty$ in the last expression.
Finally, we approximate this expression by the mode.
Then, from the theory of the extreme value distribution, the mode of $p(x,k)$ can be written as $x_{mode}\approx -\sigma\sqrt{2 \ln M/N_s}$ and we reach
\begin{equation}
	\begin{split}
		\mathcal{L}_{\tilde{c}_{\rm{local}}^{(N_s)}} -\mathcal{L}_{c_{\rm{local}}}
		&\approx
		\frac{\sigma}{\sqrt{N_s}}\left(
		\frac{1}{M}\sum_{k=1}^M
		\sqrt{2 \ln (2k-1)}
		\right) \\
		&\approx
		\frac{\sigma}{\sqrt{N_s}}
		\sqrt{2 \ln (2M-1)}.
	\end{split}
\end{equation}
Thus we can roughly understand that the approximation error with large number of training data behaves like $N_s^{-1/2}\sqrt{ \ln (M)}$.

\section{Simulation results of Sec.\ref{sec:barrenplateau}}
\label{sec:bpsresult}
Here, we give the numerical simulation results on the variance in the gradient of the proposed cost function.
As discussed in Sec.\ref{sec:barrenplateau}, we introduce the loss function with the ground cost calculated by the local cost defined in Eq.~\eqref{eq:localcost}, and it avoids the vanishing of the gradient.
The results based on different data numbers are shown in Fig.\ref{fig:cost_barren_plateau_app}.
The parameters of the simulation are given in Table~\ref{tab:ExperimentalParamters_barrenplateau}.
Fig.\ref{fig:grad_n}(c) is created based on the result of Fig.\ref{fig:cost_barren_plateau_app}.

\begin{figure}[hbtp]
	\centering
	\subfigure[$M=1$]{\includegraphics[width=.4\linewidth]{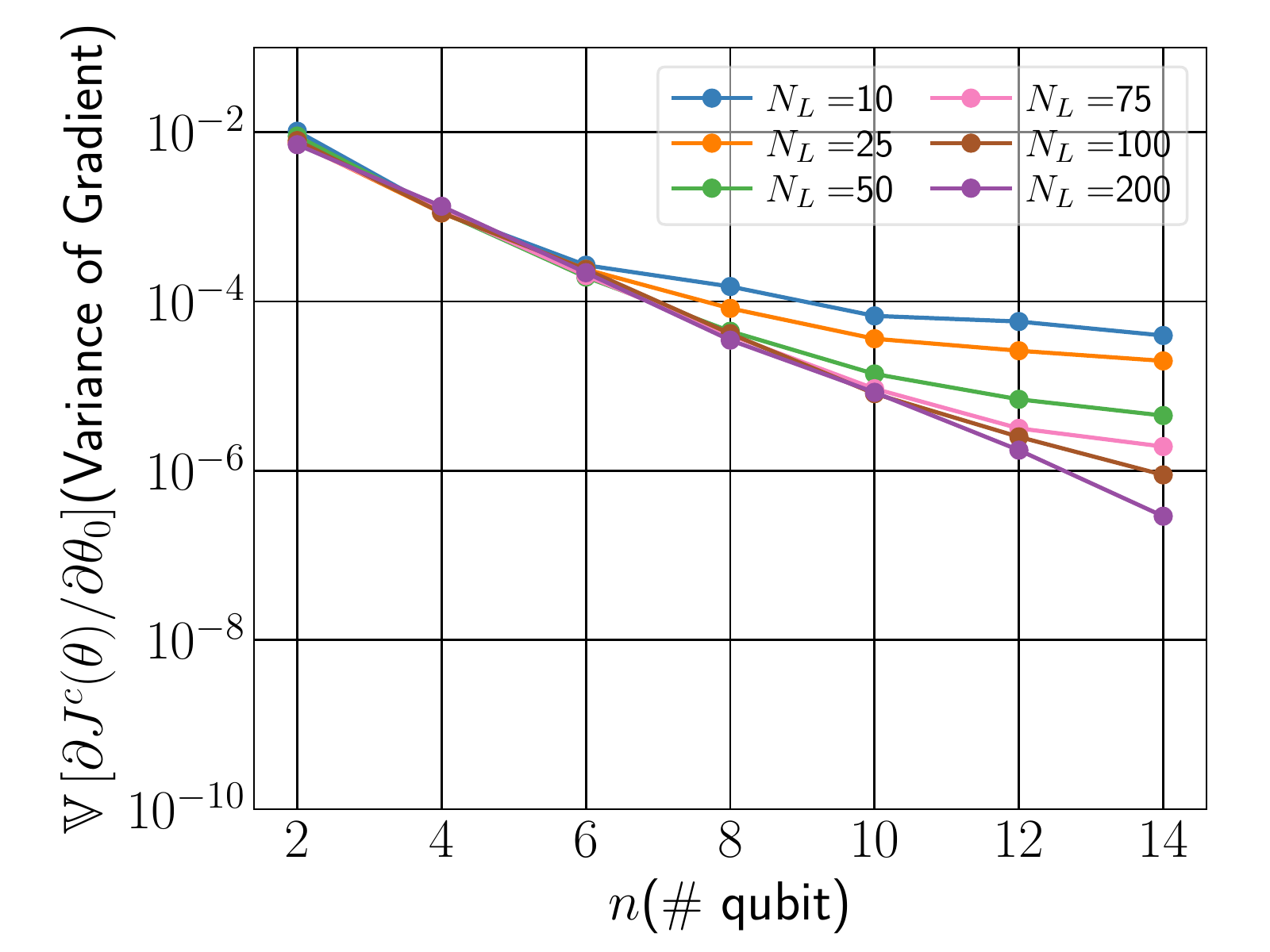}}
	\subfigure[$M=2$]{\includegraphics[width=.4\linewidth]{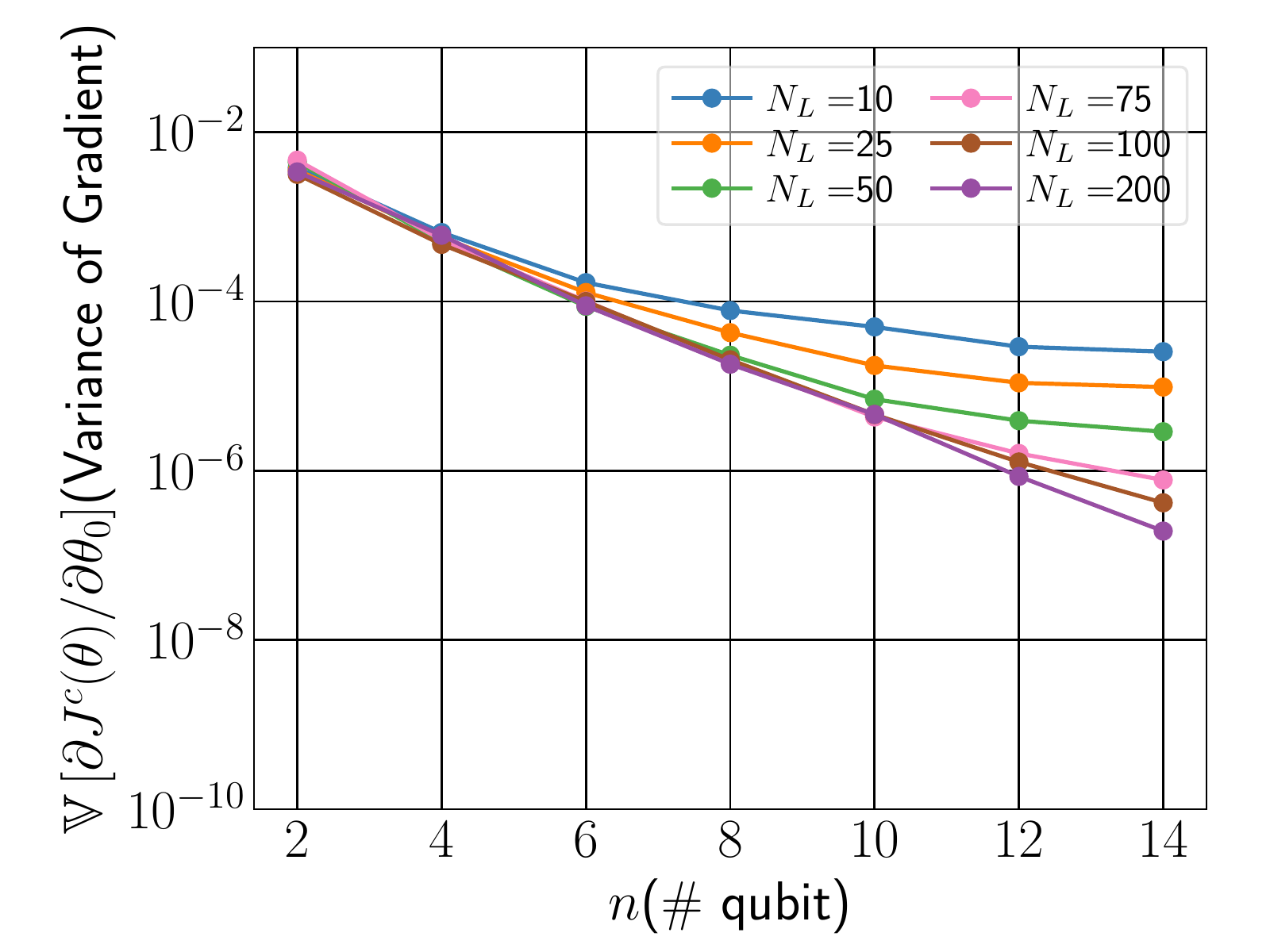}}
	\subfigure[$M=8$]{\includegraphics[width=.4\linewidth]{fig/VarGrad_vs_n_L_HEA_8.pdf}}
	\subfigure[$M=16$]{\includegraphics[width=.4\linewidth]{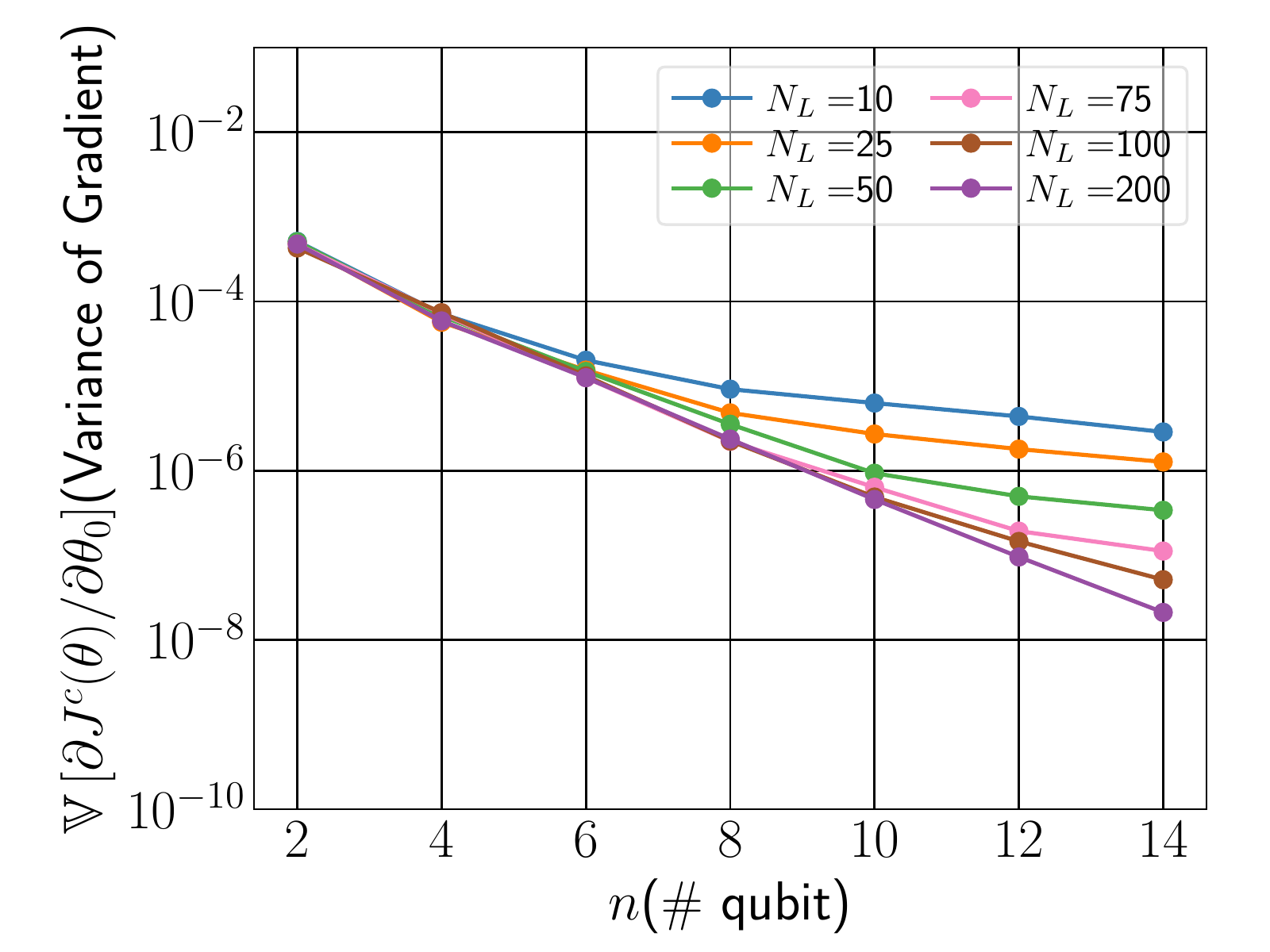}}
	\caption{
		Simulation results of relationship between the number of qubit and the variance of the gradient of the proposed loss, which is based on local cost.
	}\label{fig:cost_barren_plateau_app}
\end{figure}

\end{document}